\documentclass[12pt,a4paper]{JHEP3}
%\pdfoutput=1
%\usepackage{jheppub}
\usepackage{amscd,amsmath,amssymb,amsfonts,xspace,mathrsfs,amsthm}
\usepackage{color}

\usepackage{bbold}
\usepackage{latexsym}
\usepackage{graphicx}
%\usepackage[normalem]{ulem} % for crossing out words with \sout{}
%\usepackage[arrow,curve,matrix,arc]{xy}
%\usepackage{showkeys}
%%\usepackage{hyperref}
%\usepackage{cite}

%%%%%%%%%%%%%%%%%%%% Sizes %%%%%%%%%%%%%%%%%
\hoffset=-1in
\voffset=-1in
\textwidth=16.7cm
\oddsidemargin=2.2cm
\evensidemargin=2.2cm
\topmargin=0.5cm
\textheight=24.2cm

%\usepackage{tikz} %bp
%\usetikzlibrary{cd} %bp
\numberwithin{equation}{section}

%\theoremstyle{plain}
 %[section]
%\numberwithin{theorem}{section}
  %[section]
  \newtheorem{proposition}{Proposition}%[section]
  \newtheorem{lemma}{Lemma}
  \newtheorem{conj}{Conjecture}
%\theoremstyle{definition}
% \newtheorem{definition}[theorem]{Definition}%[section]
%\theoremstyle{remark}
%\newtheorem{remark}[theorem]{Remark}
%\numberwithin{remark}{section}

%\def\lfig#1#2#3#4#5{
% \begin{figure}
% \refstepcounter{figure}
% \label{#4}
% \addtocounter{figure}{-1}
% \epsfxsize=#3
% \centerline{\epsfbox{#2}}
% \vspace{#5}
% {\bf \caption{{\rm #1}}}
% \end{figure}
%}

\def\lfig#1#2#3#4#5{
\begin{figure}[t]
 \centerline{\includegraphics[width=#3]{#2}}
 \vspace{#5}
  \caption{#1 \label{#4}}
 \end{figure}
}

\def\({\left(}
\def\){\right)}
\def\[{\left[}
\def\]{\right]}
\def\<{\left\langle}
\def\>{\right\rangle}
\def\hf{{1\over 2}}

\newcommand{\unit}{{\mathbb{1}}}

\newcommand{\de}{{\rm d}}
\newcommand{\I}{{\rm i}}
\renewcommand{\Im}{{\rm Im}}
\renewcommand{\Re}{{\rm Re}}
\newcommand{\sgn}{\mbox{sgn}}

\def\sign{{\rm sgn}}

\newcommand{\cA}{\mathcal{A}}

\newcommand{\cD}{\mathcal{D}}

\newcommand{\cT}{\mathcal{T}}
\newcommand{\cW}{\mathcal{W}}
\newcommand{\cX}{\mathcal{X}}

\newcommand{\cN}{\mathcal{N}}

\newcommand{\cM}{\mathcal{M}}

\newcommand{\IZ}{\mathbb{Z}}
\newcommand{\IR}{\mathbb{R}}

\newcommand{\IP}{\mathbb{P}}
\newcommand{\nn}{\nonumber}

\newcommand{\eps}{\epsilon}

\newcommand{\ha}{{\wh{A}}}

\def\ba{\bar a}

\def\bZ{\bar Z}

\newcommand{\be}{\begin{equation}}
\newcommand{\ee}{\end{equation}}
\newcommand{\ben}{\begin{eqnarray}\displaystyle}
\newcommand{\een}{\end{eqnarray}}
\newcommand{\half}{{\tfrac{1}{2}}}

\newcommand\bOm{\overline{\Omega}}

\definecolor{varcolor}{rgb}{0.1,0.55,0.25}
\definecolor{functioncolor}{rgb}{0.1,0.35,0.75}
\definecolor{paper_blue}{rgb}{0.3,0.2,0.75}
\definecolor{paper_red}{rgb}{0.65,0.1,0.15}
\definecolor{paper_green}{rgb}{0.05,0.35,0.125}
\definecolor{paper_grey}{gray}{0.375}
\definecolor{perm}{rgb}{0.1,0.45,0.85}
\definecolor{deemph}{rgb}{0.7,0.7,0.7}

\newcommand{\ta}{\tilde\alpha}

\renewcommand{\ha}{\hat\alpha}

\newcommand{\gref}{g_{\rm C}}
\newcommand{\gtr}{g_{\rm tr}}
\newcommand{\Ftr}[1]{F_{{\rm tr},#1}}

\newcommand{\OmS}{\Omega_{\rm S}}

\def\am#1{a^{(#1)}}
\def\bm#1{b^{(#1)}}
\def\cm#1{c^{(#1)}}
\def\dm#1{d^{(#1)}}
\def\tdm#1{\tilde d^{(#1)}}
\def\tam#1{\tilde a^{(#1)}}
\def\tbm#1{\tilde b^{(#1)}}
\def\ta{\tilde a}

\def\Cv{\mathscr{C}}
\def\Av{\mathscr{A}}

\def\Zv{\mathscr{Z}}

\def\Vv{\mathscr{V}}
\def\Uv{\mathscr{U}}

\def\tCv{\tilde\Cv}
\def\Avl{\Av^{(\ell)}}
\def\Cvl{\Cv^{(\ell)}}
\def\Zvl{\Zv^{(\ell)}}

\def\Cvi#1{\Cv^{(#1)}}
\def\Zvi#1{\Zv^{(#1)}}
\def\Vvi#1{\Vv^{(#1)}}
\def\Uvi#1{\Uv^{(#1)}}
\def\gl{g_{(\ell)}}
\def\gi#1{g_{(#1)}}

\def\Cvi#1{\Cv^{(#1)}}

\def\bea{\begin{eqnarray}}
\def\eea{\end{eqnarray}}
\def\be{\begin{equation}}
\def\ee{\end{equation}}
\def\ba{\begin{align}}
\def\ea{\end{align}}
\def\bse{\begin{subequations}}
\def\ese{\end{subequations}}

\newcommand{\disc}{{\rm disc}}

\def\Lv#1{L(#1)}
\def\Rv#1{R(#1)}
\def\FC{F_{C,n}}
\def\FCn#1{F_{C,#1}}
\def\Fs{F^{(\star)}}
\def\Fwl{F^{(0)}}
\def\Fsp{F^{\rm (+)}}
\def\tFwl{\tilde F^{(0)}}

\def\tFs{\tilde F^{(\star)}}

\def\Sym{\,{\rm Sym}\, }

\def\mm{m}
\def\cs{S}

\def\cMad{\cM^{\rm ad}}

\setcounter{tocdepth}{2}

\title{Attractor flow trees, BPS indices  and quivers}

\preprint{L2C:18-057\\arXiv:1804.06928v2}

\author{Sergei Alexandrov$^{1,2}$ and Boris Pioline$^{3}$
\\
$^1$ {\it
Laboratoire Charles Coulomb (L2C), Universit\'e de Montpellier,
CNRS, F-34095, Montpellier, France}\\
$^2$ {\it
Department of High Energy and Elementary Particle Physics,
Saint Petersburg State University,
7/9 Universitetskaya nab., St. Petersburg 199034, Russia}

$^3$ {\it Laboratoire de Physique Th\'eorique et Hautes
Energies (LPTHE), UMR 7589 CNRS-Sorbonne Universit\'e,
Campus Pierre et Marie Curie,
4 place Jussieu, F-75005 Paris, France} \\

\vspace*{2mm} {\tt e-mail:
\email{sergey.alexandrov@umontpellier.fr},
\email{pioline@lpthe.jussieu.fr}
}

\vspace*{-3mm}

}

\abstract{Inspired by the split attractor flow conjecture for multi-centered black hole solutions
in $\cN=2$ supergravity, we propose a formula expressing the
BPS index $\Omega(\gamma,z)$ in terms of `attractor indices'
$\Omega_*(\gamma_i)$. The latter count BPS states in their respective attractor chamber.
This formula expresses the index as a sum over stable flow trees weighted by products of attractor indices.
We show how to compute the contribution of each tree directly in terms of asymptotic data,
without having to integrate the attractor flow explicitly.
Furthermore, we derive new representations for the index which make it manifest
that discontinuities associated to distinct trees cancel in the sum, leaving only the discontinuities
consistent with wall-crossing.
We apply these results in the context of quiver quantum mechanics, providing
a new way of computing the Betti numbers of quiver moduli spaces,
and compare them  with the Coulomb branch formula,
clarifying the relation between attractor and single-centered indices.
}

\begin{document}

\maketitle

\section{Introduction and summary}

In four-dimensional supersymmetric field theories and string vacua with $\cN=2$ supersymmetry,
the spectrum of BPS states depends sensitively on the moduli and marginal deformations.
As the value $z$ of the moduli fields at spatial infinity are varied, some bound states may form or decay,
leading to a jump in the BPS index (or helicity supertrace) $\Omega(\gamma,z)$ counting BPS states
of electromagnetic charge $\gamma$ with signs. Such decays are only possible across walls of marginal stability
--- codimension one hypersurfaces in the moduli space associated  to pairs of
charges with non-vanishing Dirac-Schwinger-Zwanziger (DSZ) pairing
$\langle\gamma_L,\gamma_R\rangle\ne 0$. The corresponding wall is the locus where
the phases of $Z_{\gamma_L}(z)$ and $Z_{\gamma_R}(z)$ align. Here
$Z_\gamma(z)$ is the central charge, a complex-valued linear function of the charges whose phase determines
the supersymmetry preserved by a BPS state of charge $\gamma$. On such a wall
the mass $|Z_\gamma(z)|$ of a BPS state with charge $\gamma$ in the positive cone
spanned by $\gamma_L,\gamma_R$ coincides with the total mass $\sum_i |Z_{\gamma_i}(z)|$
of any set
of BPS states with charges $\gamma_i$ in the same cone
with $\sum_i \gamma_i=\gamma$, allowing the formation of threshold bound
states (see e.g. \cite{Pioline:2011gf} and references therein).

The jump of the BPS index $\Omega(\gamma,z)$ across the wall of marginal stability is governed
by a universal wall-crossing formula, first formulated in the mathematics literature by
Kontsevich--Soibelman \cite{ks} and Joyce--Song \cite{Joyce:2008pc,Joyce:2009xv},
and then established by physical reasoning in a series of papers \cite{Denef:2007vg,Gaiotto:2008cd,Andriyash:2010qv,Manschot:2010qz}.
There are also refined versions of
the index $\Omega(\gamma,z)$ and wall-crossing formula, which keep track of the spin
$\vec J$ and R-charge $\vec I$ of BPS states via fugacity parameters $y,t$
conjugate to the projections $J_3$ and $I_3$, respectively \cite{Diaconescu:2007bf,Dimofte:2009bv,Gaiotto:2010be}.
An important question for various applications, including the study of duality constraints on BPS indices,
is to express the moduli-dependent BPS index
$\Omega(\gamma,z)$ in terms of some physically motivated indices which depend only on the charges,
and possibly on the chemical potentials $y,t$, but are independent of the moduli $z$.
In this work, we investigate two different ways of answering to this question, which are
both motivated by the physics of BPS black holes in $\cN=2$ supergravity.

As shown in \cite{Denef:2000nb,Bates:2003vx}, $\cN=2$ supergravity admits a class of
stationary supersymmetric solutions obtained by superimposing $n$ BPS black holes with charges $\gamma_1,\dots, \gamma_n$,
subject to moduli and charge-dependent conditions on the
distances between the centers. In the vicinity of each center, the solution reduces to the usual
spherically symmetric BPS black hole with charge $\gamma_i$, in particular the moduli are
attracted to a fixed value $z_{\gamma_i}$ independently of their value at
spatial infinity \cite{Ferrara:1995ih}.
Such solutions typically exist only
within a certain chamber of moduli space, whose boundary precisely
consists of walls of marginal stability. Near one of the walls,
the distance between some of subsets of the constituents grows and becomes infinite exactly on the wall,
providing a clear physical picture of the wall-crossing phenomenon \cite{Denef:2007vg,Manschot:2010qz}
as well as the substructure of BPS bound states.

In general however, some connected components of the space of multi-centered solutions
are ruled out by the constraint that the metric should admit no closed time-like curves. While
this property is cumbersome to check explicitly, a simple criterium
has been put forward, called the split attractor flow conjecture \cite{Denef:2001xn,Denef:2007vg},
which in principle determines the allowed multi-centered solutions for a given value of the moduli at spatial infinity.
The idea is to model each solution by a nested sequence of two-centered bound states,
represented by a binary rooted tree. The vertices of the tree are decorated by the corresponding electromagnetic charge,
from the total charge $\gamma$ at the root to the constituent charges $\gamma_i$ at the leaves of the tree (see Fig. \ref{fig-AFtree}).
Along each edge of the graph, associated to  the vertex $\gamma\to \gamma_L+\gamma_R$, the moduli $z$
vary according to the usual attractor flow for a spherically symmetric black hole,
until they cross
the locus where  $\Im[Z_{\gamma_L}\bar Z_{\gamma_R}]$ vanishes (see Fig. \ref{fig-flow}).
If the central charges $Z_{\gamma_L}, Z_{\gamma_R}$ at this point are aligned (as opposed
to being anti-aligned), and if the
stability condition
$\langle \gamma_L,\gamma_R\rangle\, \Im[Z_{\gamma_L}\bar Z_{\gamma_R}]>0$
is obeyed before reaching the locus where $\Im[Z_{\gamma_L}\bar Z_{\gamma_R}]=0$,
the flow is repeated recursively for each of the two constituents, otherwise,
the tree is discarded. When no further splittings are allowed, the flow on each branch
terminates at the attractor point $z_{\gamma_i}$ for the corresponding constituent.
The split attractor flow conjecture stipulates that the space of
admissible multi-centered solutions is partitioned by  stable attractor flow trees
\cite{Denef:2001xn,Denef:2007vg,Andriyash:2010yf}.

Based on this picture, and building on earlier proposals in the literature
\cite{Denef:2001xn,Denef:2007vg,Manschot:2010xp,Manschot:2010qz,Manschot:2011xc},
we propose that the total BPS index $\Omega(\gamma,z)$ -- or rather
its variant $\bOm(\gamma,z)$ defined in \eqref{defbOm} so as to properly take
into account Bose-Fermi statistics \cite{Manschot:2010qz} -- can be expressed
as a sum over all stable attractor flow trees rooted at $(\gamma,z)$. The contribution of
each tree is proportional to the product of attractor indices
$\bOm_*(\gamma_i)\equiv \Omega(\gamma_i,z_{\gamma_i})$ associated to each center $\gamma_i$,
and to the DSZ pairings $\pm \langle \gamma_L,\gamma_R\rangle$
for all the vertices, corresponding to the BPS indices of the nested two-particle bound states.
The resulting `flow tree formula' in principle allows to reconstruct the BPS index
$\Omega(\gamma,z)$ in any chamber of moduli space, in terms of the
moduli-independent attractor indices $\Omega_*(\gamma_i)$.
However, this procedure raises various problems, both at the practical and conceptual levels,
which we outline below and address in this paper.

The first, practical problem is that this procedure seems to require integrating
the full attractor flow equations along the edges of the tree to check the stability of
each two-centered bound state in the hierarchy. Due to this difficulty,
explicit study of split attractor flows has been mostly confined to one-modulus models in the
literature \cite{Denef:2001xn,Collinucci:2008ht}.  However, it was shown in \cite{Manschot:2010xp}
that for three centers, the stability conditions could in fact be expressed in terms of
the moduli at infinity without explicitly solving the attractor flow along the edges, and it
was suggested that the same could be done for an arbitrary number of centers.
In this work, we  demonstrate that indeed,
for the purposes of checking stability, the continuous attractor flow along the edges can be reduced
to a `discrete attractor flow' from one vertex to its descendants. It is important to stress however
that this reduction assumes that the flow tree exists, in particular that the phases of the central
charges are aligned at each vertex and that the flow does not reach a singularity along the edges.
We expect that this assumption is obeyed in regions of moduli space where the
central charges of the constituents are nearly aligned, for example for D4-D2-D0 black holes
in type II strings on a Calabi-Yau threefold in the large volume limit.
Granting this assumption, the enumeration of  stable flow trees becomes considerably easier
and can be efficiently implemented
on a computer.\footnote{A new version of the mathematica package
{\tt CoulombHiggs.m} originally released along with \cite{Manschot:2013sya} and including
an implementation of the `flow tree formula' for quivers is available from the home page of
the second author.}

The second, more conceptual problem is that
the contribution of each flow tree has additional discontinuities
beyond those predicted by the wall-crossing formula, originating from violations of  the stability
conditions on the intermediate bound states.
For instance, for a three-centered solution the tree $((12)3)$ corresponding to the decay sequence
$\gamma\to (\gamma_1+\gamma_2)+\gamma_3 \to \gamma_1+\gamma_2+\gamma_3$
jumps across the codimension-one locus where the inner bound state $(\gamma_1+\gamma_2)$
becomes only
marginally stable. This however is {\it not} a wall of marginal stability
for the bound state with total charge $\gamma_1+\gamma_2+\gamma_3$
(unless $\gamma_3$ is collinear with $\gamma_1$ or $\gamma_2$). Fortunately,
the trees $((12)3)$, $((23)1)$ and $((31)2)$ all have discontinuities at the same locus
(see Fig. \ref{fig-threeplot} for two representative
3-center configurations), and these discontinuities cancel in
the sum \cite{Manschot:2010xp,Andriyash:2010yf,Chowdhury:2012jq}.
Thus, the total index $\Omega(\gamma,z)$ is constant
across the wall
although the internal structure of the bound state does change. We shall
refer to these loci as `fake walls', although the name `recombination walls' is perhaps more appropriate.
In this work, we prove that (under the same assumption as above) the discontinuities of individual flow trees across fake walls
always cancel for any number of centers, leaving only those required by wall-crossing.

In fact, we shall provide two alternative proofs of this result. The first, relegated to Appendix \ref{ap-vectors},
makes use of certain `flow vectors', which encode
stability conditions along flow trees. After establishing various symmetry properties for
these vectors and their mutual orthogonal projections, one can give an elementary proof
that discontinuities across fake walls indeed cancel. The second proof follows from a
new representation of the sum over all flow trees, which is manifestly
continuous across the fake walls. This new representation is obtained by decomposing
the refined tree index $\gtr(\{\gamma_i\},z,y)$ (corresponding to the sum over all flow trees
rooted at charge $\gamma$ and ending on the constituents of charges $\gamma_i$) in terms of
a sum \eqref{gtF} of `partial tree indices' $\Ftr{n}(\{\gamma_i\},z)$, which no longer depend on
the parameter $y$ conjugate to $J_3$.
The partial tree index is obtained by summing over planar trees only, and is shown to satisfy
the recursive equation \eqref{F-ansatz}. The latter makes it manifest that the only singularities
of the refined tree index $\gtr$ (and therefore also of its limit as $y\to 1$)
are those predicted by the wall-crossing formula.  In \eqref{gtreen} we further conjecture
a formula which makes it clear that $\gtr$ is a symmetric Laurent polynomial
in $y$, a fact which is obscured by the decomposition \eqref{gtF}, and renders
the specialization to $y=1$  trivial.
The equations \eqref{F-ansatz} and \eqref{gtreen} can be seen as the main technical
results of this paper.

Having cleared these issues, we then apply the `flow tree formula' in the context
of quiver quantum mechanics, which describes the interactions of mutually non-local
dyons in four-dimensional field theories with 8 supercharges \cite{Denef:2002ru},
at least in regions where the central charges of the constituents
are nearly aligned \cite{Denef:2007vg}.
This framework has many advantages compared to the general case.
Firstly, the charge vector $\gamma$
is a $K$-tuple of non-negative integers $(N_1,\dots, N_K)$, corresponding to the ranks in the
product gauge group, so that the enumeration of
all possible splittings $\gamma=\sum_i \gamma_i$ is straightforward. Secondly,
the stability conditions depend only on the supergravity moduli $z$ through the
Fayet-Iliopoulos parameters $\zeta=(\zeta_1,\dots, \zeta_K)$, on which the discrete attractor flow
naturally acts. Moreover, the central charges of the constituents can never anti-align (within the validity
of the quiver quantum mechanics) so the use of the discrete attractor flow
is  justified. Third, the BPS index is in principle computable by localization for any  charge vector
$\gamma$ and Fayet-Iliopoulos parameters $\zeta$ \cite{Hori:2014tda}
(see also \cite{Cordova:2014oxa,Hwang:2014uwa}).
Finally, the BPS index in this context has a mathematically rigorous definition as the
Poincar\'e polynomial of the moduli space of stable quiver
representations \cite{Denef:2002ru}, which is a central object  in algebraic geometry
and representation theory. After introducing a suitable notion of `attractor point' \eqref{attFI}
(already introduced in \cite{MPSunpublished}), we formulate
the flow tree formula in this context in a mathematically precise and self-contained way,
and outline a proof in the case of quivers without oriented loops.

Finally, we investigate the relation between the flow tree formula and the Coulomb branch
formula developed in \cite{Manschot:2011xc,Manschot:2012rx,Manschot:2013sya}
(see \cite{Manschot:2014fua} for a concise review). While both of them express
the BPS index $\Omega(\gamma,z)$ in terms of moduli-independent indices, the former
relies on attractor indices $\Omega_*(\gamma_i)$, whereas the latter relies
on the concept of `single-centered  indices' $\Omega_S(\gamma_i)$. The difference
between these two indices  is due to the  existence of so-called scaling solutions, i.e. multi-centered
solutions with $n\geq 3$ constituents which can become arbitrarily close to each other
and remain allowed in the attractor chamber \cite{Denef:2007vg,Bena:2012hf}.
Single-centered indices $\Omega_S(\gamma_i)$
are designed to isolate the contributions of of single-centered black hole micro-states,
for which the holographic correspondence is supposed to apply \cite{Sen:2008yk}.
Moreover, the fact that single-centered BPS black holes can only carry zero angular momentum 
\cite{Sen:2009vz,Dabholkar:2010rm}, strongly constrains their dependence
on the refinement parameter $y$ \cite{Lee:2012sc,Lee:2012naa,Manschot:2012rx}. 
The main drawback of single-centered indices however
is that they are defined recursively in a rather complicated way and their mathematical
significance is unclear (unlike attractor indices which are special instances of the usual BPS index 
in the attractor chamber).

In Section \ref{sec-gC}, we show that for charge configurations $\{\gamma_i\}$ described
by quivers without loops, the tree index $\gtr(\{\gamma_i\},z,y)$ entering the
flow tree formula and the Coulomb index $\gref(\{\gamma_i\},z,y)$ entering the Coulomb branch formula coincide, so that
the attractor and  single-centered indices coincide as well. In contrast,
for charge configurations  described by quivers with oriented loops,
the two in general differ due to the contributions of scaling configurations. We compute
the difference between $\gtr$ and $\gref$ for $n=3$
and $n=4$ centers, and deduce the relation between the attractor and single-centered indices
in the case where $\gamma$ decomposes into a sum of at most 4 distinct constituents.

The results obtained in this work will be useful in making further progress
in understanding modularity constraints on the counting of D4-D2-D0 brane bound states
in type II strings compactified on a Calabi-Yau threefold $\cX$. As first suggested  in
\cite{Maldacena:1997de} and confirmed in a series of more recent works (see e.g.
\cite{Alexandrov:2016tnf} and references therein), when the D4-brane wraps an irreducible
divisor $\cD$ inside $\cX$, the generating function of the BPS indices of D4-D2-D0 black
holes with fixed D4-brane charge is  a weakly
holomorphic Jacobi form, given by the elliptic genus of the superconformal
field theory describing an M5-brane wrapped on the same divisor. In the case where the divisor $\cD$ is reducible however , the BPS
indices are subject to wall-crossing and, since the duality group
acts both on the charge $\gamma$ and the moduli, the constraints from modularity
take a more subtle form. In order to uncover these
constraints,  it is useful to express
the total index $\Omega(\gamma,z)$ in terms of suitably chosen moduli-independent indices.
In the present context, it was argued in  \cite{Manschot:2009ia,Alexandrov:2012au} 
that a natural choice is the BPS index in the `large volume 
attractor chamber', or  `MSW invariant' in the terminology of \cite{Alexandrov:2012au}, 
which is expected to count BPS states in the MSW superconformal field theory \cite{Andriyash:2008it,deBoer:2008fk}.
Using the formalism developed in the present work, the BPS index for any modulus $z$
(still near the large volume limit) can be expressed  as a sum over flow trees  weighted by
MSW invariants. Since the latter are invariant under spectral flow,
the resulting partition function can still be formally decomposed as a sum of
indefinite theta series, with a  kernel given by a sum over flow trees \cite{Manschot:2009ia,Manschot:2010xp}.
While the convergence and modular properties of  this theta series are by now well understood
when the reducible divisor $\cD$ is the sum of two components  \cite{Manschot:2009ia,Alexandrov:2016tnf}, the representations of the tree index found in this paper
will be instrumental in extending these results to a general reducible divisor  \cite{ap-to-appear}.

The organization  of this paper is as follows.
In section \ref{sec-atrflow} we review the split attractor flow conjecture for multi-centered BPS
solutions in $\cN=2$ supergravity, and motivate the `flow tree formula', which expresses
the index $\Omega(\gamma,z)$ in terms of the attractor indices $\Omega_*(\gamma_i)$.
We then express the sum over stable flow trees $\gtr(\{\gamma_i\},z,y)$ purely in terms of asymptotic data,
and find new representations of the `tree index' which make it manifest that it is smooth
away from the walls of marginal stability,
despite the fact that individual flow trees also have discontinuities on the `fake walls'.
In section \ref{sec-quiver} we state a mathematically precise form of
the flow tree formula in the context of quiver quantum mechanics, where
it computes the Poincar\'e polynomial of quiver moduli spaces.
In section \ref{sec-Coulomb} we  compare the flow tree formula to the Coulomb branch formula,
and clarify the relation between attractor indices $\Omega_*(\gamma_i)$ and single-centered indices
$\Omega_S(\gamma_i)$. Some technical details are relegated to appendices, including
several identities between sign functions widely used in the main text (appendix \ref{ap-signs}),
an alternative proof of the cancellation of fake discontinuities using flow vectors
and their symmetry properties (appendix \ref{ap-vectors}), and explicit expressions
for the partial tree and Coulomb indices for $n\leq 4$ centers (appendix \ref{ap-indFF}).

\section{Attractor flow trees and BPS index}
\label{sec-atrflow}

In this section we recall basic facts about multi-centered BPS solutions in $\cN=2$ supergravity
and classification of admissible solutions via the  split attractor flow conjecture. After making
general comments about their quantization, we propose the `flow tree formula'
expressing the BPS index $\Omega(\gamma,z)$ as a sum of monomials in the attractor indices
$\Omega_*(\gamma_i)$  associated to the constituents in all possible decompositions $\gamma\to\sum \gamma_i$.
The coefficient of each monomial is a sum over stable flow trees
which we call the `tree index'. After illustrating this formula in the case of three centers, we
explain how to express the contribution of each tree in terms of the moduli $z$ at infinity,
without having to solve the attractor flow explicitly. We then establish a new representation
for the tree index, which makes it manifest that it is continuous away from walls of marginal stability,
despite the fact that contributions from individual trees typically jump across other loci.

\subsection{Multi-centered black holes in $\cN=2$ supergravity}

We start by recalling well-known facts about multi-centered  BPS solutions in ungauged $\cN=2$ supergravity
in four dimensions. A  general class of stationary BPS solutions
with total electromagnetic charge $\gamma=(p^\Lambda,q_\Lambda)$ ($\Lambda=1,\dots,r$)
can be written
as \cite{Denef:2000nb,Denef:2001xn,Bates:2003vx} (in units where the Newton constant
$G_4=1$)
\bea
\de s^2 &=& - e^{2U}(\de t+\omega)^2 +e^{-2U}\, \de\vec x^2 ,
\\
A &=& 2\, e^U\, \Re(e^{-\I\alpha}Z)\,  (\de t+\omega) + A_d,
\eea
where $A=(A^\Lambda, A_\Lambda)$ denotes the symplectic vector of the electric and magnetic gauge fields.
The scale factor $U$ and  one-forms $\omega$ and $A_d$ are determined by the equations
\be
2\, e^{-U} \Im(e^{-\I\alpha}Z )= - H \, ,
\qquad
\star\,\de\omega=\langle \de H,H\rangle\, ,
\qquad
\de A_d = \star\, \de H,
\label{addeq-BPSsol}
\ee
where $H=(H^\Lambda,H_\Lambda)$
is vector-valued harmonic function on $\IR^3$. Here $Z(z)=(X^\Lambda(z),F_\Lambda(z))$
is the usual holomorphic symplectic vector determined by the holomorphic prepotential $F(X)$,
such that
\be
Z_\gamma=\langle \gamma,Z\rangle=q_\Lambda X^\Lambda-p^\Lambda F_\Lambda
\ee
is the central charge of the unbroken supersymmetry algebra. Moreover
$\alpha$ is the (position-dependent)
phase of $Z_\gamma$.  The first equation in \eqref{addeq-BPSsol}  also determines
the spatial profile of the scalar fields $z$, valued in the vector multiplet moduli
space $\cM_{\rm SK}$ (a special K\"ahler manifold determined by the prepotential $F(X)$).

A multi-centered BPS solution is obtained by choosing a specific harmonic vector on $\IR^3$, namely
\be
\label{Hmc}
H = \sum_{i=1}^{n} \frac{\gamma_i}{|\vec x-\vec x_i|} -2 \, \Im(e^{-\I\alpha} Z)_{\infty}\, ,
\ee
where the second term is a constant  symplectic vector determined by the total charge $\gamma$ and
the values of the moduli at spatial infinity.
This choice ensures that the solution is asymptotically flat, i.e. $U\to 0$ and $\omega\to 0$.
The integrability condition $\de(\de\omega)=0$ constrains the positions of the centers to satisfy, for all $i$,
\be
\label{DenefEq}
\sum_{j\neq i} \frac{\gamma_{ij}}{|\vec x_i-\vec x_j|}
= 2\, \Im(e^{-\I\alpha} Z_{\gamma_i})_{\infty}\, ,
\ee
where  $\gamma_{ij}$ is a shorthand notation for $\langle \gamma_i,\gamma_j\rangle$.
While the charge vectors $\gamma_i$ can in principle
be chosen at will in the charge lattice $\Lambda$,  we shall restrict attention to the case where
they all lie in the same positive cone $\Lambda_+$, defined as the set of all vectors $\gamma$ whose
central charge $Z_\gamma(z_\infty)$ lies in a fixed half-space defining the splitting between
BPS particles ($\gamma\in\Lambda_+$) and anti-BPS particles ($\gamma\in - \Lambda_+$).

For a given choice of charges $\gamma_1,\dots, \gamma_n$ and moduli $z_\infty$, the equations
\eqref{DenefEq} impose $n-1$ independent constraints
(indeed, the sum of these $n$ equations trivially vanishes)
on  $3n$ variables $\{\vec x_i\}_{i=1\dots n}$.
Ignoring the three translational degrees of freedom, this in general leaves a $2n-2$-dimensional
space of solutions $\cM_n(\{\gamma_1,\dots,\gamma_n\},z_\infty)$. The latter carries a canonical
symplectic form $\varpi$ and an Hamiltonian action of the rotation group $SO(3)$
generated by the total angular momentum \cite{deBoer:2008zn}
\be
\label{defJ}
\vec J = \frac12 \sum_{i<j} \gamma_{ij}\,  \frac{\vec x_i-\vec x_j}{|\vec x_i-\vec x_j|}\, .
\ee
For these solutions to be physical however, one must check in addition that they are smooth everywhere
(except possibly near the location of the centers), and have no closed timelike curves.
This typically rules out some  of the connected components in the phase space
$\cM_n(\{\gamma_i\},z_\infty)$, leaving only a (possibly empty) subset that we shall
denote by $\cMad_n(\{\gamma_i\},z_\infty)$.

For a single center, the solution is static, spherically symmetric and manifestly free of closed
timelike curves. The radial profile of the scalar fields $z(r)$ follows from the attractor flow equation
\be
\label{attflow}
e^{-U}\,  \frac{\Im[Z_{\gamma'} \bZ_\gamma(z(r))]}{|Z_\gamma(z(r))|} =
\frac{\langle\gamma,\gamma'\rangle}{2r} +
\left(  \frac{\Im(Z_{\gamma'} \bZ_\gamma)}{|Z_\gamma|} \right) _{r=\infty},
\ee
obtained by pairing the first equation in \eqref{addeq-BPSsol} with an arbitrary vector $\gamma'$.
We denote by $\cA(\gamma_,z_{\infty})$ the solution of
\eqref{attflow} with $z(r)=z_{\infty}$ at $r=\infty$,
and by $z_\gamma$ the value of the moduli at the attractor point $r\to 0$.
If $z_\gamma$ lies in the interior of $\cM_{\rm SK}$,
the solution interpolates between $\IR^{3,1}$ at spatial infinity and $AdS_2\times S_2$
near the horizon \cite{Ferrara:1995ih}. If instead
$z_\gamma$ lies on the boundary of $\cM_{\rm SK}$, it may still
be trusted outside some shell of radius $r$, inside which massless states can condense.
For future reference, we note that at the attractor point $z_\gamma$
it follows from \eqref{attflow} that
\be
\sign\,\Im[ Z_{\gamma'}\bZ_\gamma(z_\gamma)]=
\sign \langle\gamma,\gamma'\rangle
\label{attrZZ}
\ee
for all $\gamma'$.
From now on, we omit the dependence of $Z_\gamma(z)$ on $z$ whenever it is
evaluated at spatial infinity, i.e. $Z_\gamma\equiv Z_\gamma(z_{\infty})$.

\lfig{The profile of scalar fields around a two-center bound state $\gamma=\gamma_1+\gamma_2$
can be represented by a trajectory in $\cM_{\rm SK}$, which starts from $z_{\infty}$ and follows
the usual attractor flow for a single black hole of charge $\gamma$ until it reaches
the locus where the phases of $Z_{\gamma_1}$ and $\bZ_{\gamma_2}$ become aligned.
At this point $z_1$, the trajectory forks into two parts which follow the usual attractor flow
for charge $\gamma_1$ and $\gamma_2$, respectively, and converge to  the respective
attractor points $z_{\gamma_1}$ and $z_{\gamma_2}$. In reality,
the scalar fields $z(x)$ map $\IR^3$ to a 3-dimensional, amoeba-like domain in $\cM_{\rm SK}$ which
concentrates around this forked trajectory in the regions near $z_\infty$ and $z_{\gamma_i}$, while
the value $z_1$ is reached in the crossover region where $|\vec x-\vec x_i|$ is of the
order of the distance between the two centers.}{flow}{7.1cm}{fig-flow}{-1.2cm}

For two centers, it is easily seen that the solution to \eqref{DenefEq} exists only if \cite{Denef:2000nb}
\be
\gamma_{12}\, \Im (Z_{\gamma_1} \bZ_{\gamma_2}) > 0 ,
\label{cond2center}
\ee
in which case the (inverse) distance between the two centers is  given by
\be
\frac{1}{2|\vec x_1-\vec x_2|} =
\frac{ \Im( Z_{\gamma_1} \bZ_{\gamma_2})}{\gamma_{12} \, |Z_{\gamma_1}+Z_{\gamma_2}|}\, .
\label{splitpoint}
\ee
If the condition \eqref{cond2center} is satisfied, the phase space $\cM_2$
is the sphere $S^2$ parametrizing the orientation of the dipole, otherwise it is empty.
In the former case, $\cM_2$ carries a  symplectic
form $\varpi=\frac12\gamma_{12}\varpi_{S^2}$ where $\varpi_{S^2}$ is the volume form.

While the condition \eqref{cond2center} is necessary, it is not sufficient.
Indeed, note that the distance $|\vec x_1-\vec x_2|$ in \eqref{splitpoint}
coincides with the radius $r=r_1$
where the attractor flow \eqref{attflow}
crosses the wall of marginal stability $\Im[ Z_{\gamma_1} \bar Z_{\gamma_2}(z_1)]=0$.
In order for the two-centered solution to be admissible,  the central
charges $Z_{\gamma_1}$ and $Z_{\gamma_2}$ should also satisfy
\be
\label{cond2centeralign}
\Re[Z_{\gamma_1} \bZ_{\gamma_2}(z_1)]>0 ,
\ee
in other words they should
have the same phase at this point (modulo $2\pi$),
as opposed to having opposite phases \cite{Denef:2000nb,Andriyash:2010yf}. If
the condition \eqref{cond2centeralign} is obeyed, then the admissible
phase space $\cMad_2$ coincides with $\cM_2$,
otherwise it is empty.

Outside the radius $r_1$,
the solution is well approximated by the spherically symmetric solution
with charge $\gamma=\gamma_1+\gamma_2$ and moduli $z_\infty$ at spatial infinity, while in the vicinity of each center,
it is  approximately given by a spherically symmetric solution with charge $\gamma_1$ or
$\gamma_2$, and moduli $z_1=z(r_1)$ far away (but not infinitely far) from the center.
As explained in Fig. \ref{fig-flow}, the behavior of the scalar fields around the two-centered
solutions can be approximated by a `split attractor flow' in the moduli space $\cM_{\rm SK}$, which
forks at the point $z_1$ where the phases of  $Z_{\gamma_1}$ and $\bZ_{\gamma_2}$
become aligned (in particular, $\Im[Z_{\gamma_1}\bar Z_{\gamma_2}(z_1)]$ vanishes).
Although we cannot compute $z_1$ explicitly, we can constrain the
central charges evaluated at this point.
Indeed, substituting \eqref{splitpoint} into \eqref{attflow}, one finds
that for $\gamma=\gamma_1+\gamma_2$ and any vector $\gamma'$,
\be
\label{attflowst}
e^{-U(r_1)}\,  \frac{\Im[Z_{\gamma'} \bZ_\gamma(z_1)]}{|Z_\gamma(z_1)|} =
\frac{\langle\gamma,\gamma'\rangle}{\langle \gamma_1,\gamma_2\rangle}\,
\frac{ \Im( Z_{\gamma_1} \bZ_{\gamma})}{|Z_{\gamma}|}
+ \frac{\Im(Z_{\gamma'} \bZ_{\gamma})}{|Z_{\gamma}|} \, .
\ee
This property will be key for expressing the stability of more general multi-centered solutions
in terms of asymptotic data,
as explained in the next subsections (specifically in \S\ref{subsec-flow}).

\subsection{Split attractor flow conjecture}
\label{subsec-afconj}

For more than two centers, determining the subset $\cMad_n\subset \cM_n$
of admissible solutions (i.e. corresponding to metrics without
closed time-like curves) is in general a difficult problem. In \cite{Denef:2001xn} (see also
\cite{Denef:2007vg,Andriyash:2010yf} for subsequent developments),
it was conjectured that $\cMad_n$  is partitioned into components\footnote{As
pointed out in \cite{Andriyash:2010yf}, the components are not necessarily disconnected,
but the main point is that the complement of $\cMad_n$ in $\cM_n$  is
not covered by any flow tree.}  labeled
by `split attractor flows', also known as `attractor flow trees' or simply `flow trees'.
The latter are obtained by iterating the basic splitting depicted in Fig. \ref{fig-flow},
and correspond to nested sequences
of two-centered bound states.

More precisely, these trees are (unordered, full)
rooted binary trees $T$ with endpoints (or leaves) labelled by $\gamma_1,\dots, \gamma_n$
and satisfying stability conditions at each vertex.\footnote{Ignoring stability conditions,
the number of unordered rooted binary trees with $n$ leaves is 
$b_n=(2n-3)!!=(2n-3)!/[2^{n-2}(n-2)!]=\{1,3,15,105,945,...\}$.}
To spell out these conditions, we first introduce some useful notations.
Let $V_T$ denotes the set of vertices of $T$ excluding the leaves.
Each vertex $v\in V_T$ has two descendants $\Lv{v}$, $\Rv{v}$ and parent $p(v)$ (see  Fig. \ref{fig-AFtree}).
Furthermore, to each vertex $v\in V_T$ we assign a charge $\gamma_v$ and a point $z_v\in
\cM_{\rm SK}$, both of them defined recursively. For the former, we start from
the leaves with charges $\gamma_i$ and assign charge $\gamma_v= \gamma_{\Lv{v}}+\gamma_{\Rv{v}}$
to the parent of two vertices $\Lv{v}$ and $\Rv{v}$. The root of the tree then
carries  charge $\gamma_{v_0}= \gamma\equiv\sum_{i=1}^n \gamma_i$.
For the latter, we instead start from the root of the tree, and assign $z_{p(v_0)}=z_\infty$,
the value of the moduli at spatial infinity, to its (fictitious) parent. We then follow each edge of
the graph downward from the root and, to a vertex $v$ with parent $p(v)$,
assign the value $z_v$ of the moduli where the attractor flow $\cA(\gamma_v,z_{p(v)})$ crosses
the wall of marginal stability for the bound state $\gamma_v \to \gamma_{\Lv{v}}+\gamma_{\Rv{v}}$
(i.e. $Z_{\gamma_{\Lv{v}}} \bar Z_{\gamma_{\Rv{v}}}(z_{v}) \in \IR^+$).
With these definitions, the admissible attractor flow trees are those which satisfy, for all $v\in V_T$,\footnote{The original
formulation of split attractor flows \cite{Denef:2007vg} also required that the attractor points $z_{\gamma_i}$
are regular in $\cM_{\rm SK}$. We shall
implement this condition by setting $\Omega_*(\gamma_i)=0$ if this is not the case.}
\be
\label{condtree}
%\langle\gamma_{\Lv{v}},\gamma_{\Rv{v}}\rangle
\gamma_{\Lv{v}\Rv{v}}
\, \Im\bigl[ Z_{\gamma_{\Lv{v}}}\bZ_{\gamma_{\Rv{v}}}(z_{p(v)})\bigr]>0
\qquad
\mbox{and}\qquad
\Re\bigl[ Z_{\gamma_{\Lv{v}}}\bZ_{\gamma_{\Rv{v}}}(z_{v})\bigr]>0,
\ee
where $\gamma_{LR}=\langle \gamma_L,\gamma_R \rangle$.
We denote the set of all flow trees with $n$ leaves by $\cT_n(\{\gamma_i\},z)$, and the set of admissible trees
by $\cT^{\rm ad}_n(\{\gamma_i\},z)$. When all the charges $\gamma_i$ are distinct,
it is convenient to label trees by bracketings of the unordered set $\{1,\dots, n\}$,
e.g. the tree displayed in Fig. \ref{fig-AFtree} corresponds to $((13)(2(45)))$.

\lfig{An example of attractor flow tree corresponding to the bracketing $((13)(2(45)))$.}{AFtree}{9.5cm}{fig-AFtree}{-1.5cm}

As we shall see shortly, the split attractor flow conjecture is  not only useful  for classifying
admissible classical solutions, but also naturally suggests a formula for computing the
BPS index $\Omega(\gamma,z)$ in terms of the attractor indices $\Omega_*(\gamma_i)$
associated to the leaves of the tree.

\subsection{Quantizing multi-centered solutions}
\label{subsec-BPS}

The BPS index $\Omega(\gamma,z)$ counts all states with total charge $\gamma$
which exist in the discrete spectrum for a given value $z$ of the moduli at spatial infinity,
weighted by a sign $(-1)^{2J_3}$ corresponding to the parity of the total angular momentum in $\IR^3$
(after factoring out the center of motion degrees of freedom). In the presence of a R-symmetry,
it is also natural to consider the refined index $\Omega(\gamma,z,y,t)$ including fugacities
$y^{2J_3}$ and $t^{2I_3}$ for the angular momentum and the R-charge.\footnote{In string vacua,
the refined index is typically not protected away from $y=1$, but in $\cN=2$
field theories the value at $y=t$ is protected, and known as the protected spin
character \cite{Gaiotto:2010be}.} In the following we shall retain only the dependence on $y$,
but the parameter $t$ can be easily restored at any point, keeping in mind that the interactions
between the centers are insensitive to its value \cite{Manschot:2012rx}.

In the regime where the supergravity description is supposed to be valid\footnote{Namely,
when the  coupling governing the genus expansion in type II strings becomes strong. In this regime,
the horizon area of BPS black holes becomes much larger than the Planck length and
supergravity is in fact weakly coupled.},
the index $\Omega(\gamma,z)$ is expected to receive contributions from all consistent multi-centered solutions
with an arbitrary number of constituents $n$ of charges $\gamma_i\in\Lambda_+$ such that
$\gamma=\sum_{i=1}^n \gamma_i$, for a fixed value of the moduli at spatial infinity.
Assuming that the internal dynamics of the black holes decouples from their relative
motion,  it is natural to expect that for a given splitting, the contribution will be proportional
to the product of the BPS indices $\Omega_*(\gamma_i)=\Omega(\gamma_i,z_{\gamma_i})$
counting BPS states associated to each center, evaluated at the respective attractor
point since the moduli $z$ are attracted to their attractor value $z_{\gamma_i}$ in the
vicinity of each center. Moreover, its contribution
 should also include the BPS index $g(\{\gamma_i\},z,y)$
of the supersymmetric quantum mechanics describing the relative motion.
Since the BPS sector of this quantum mechanics is described
classically by the phase space $\(\cMad_n,\varpi\)$, it is also reasonable
to identify $g$ with the equivariant index\footnote{In the limit where the symplectic
form $\varpi$ is scaled to infinity, the latter reduces to the equivariant symplectic volume
$\int_{\cMad_n} e^{\varpi \log y}$, or to the ordinary symplectic volume when $y=1$.}
of the Dirac operator on $\(\cMad_n,\varpi\)$ \cite{Manschot:2011xc,Kim:2011sc}.

This simple picture however assumes that all of the charges $\gamma_i$ are distinct, so that
the centers are distinguishable. When some of the $\gamma_i$'s are equal, it is necessary to
take into account Bose-Fermi statistics, which requires to project on symmetric or antisymmetric
wave functions depending on the sign of $\Omega_*(\gamma_i)$.
In \cite{Manschot:2010qz}, it was shown that the simpler rules of Boltzmann statistics can be applied
provided one considers the rational invariant
\be
\label{defbOm}
\bOm(\gamma,z,y) = \sum_{m|\gamma} \frac{y-1/y}{m(y^m-1/y^m)} \Omega(\gamma/m, z,y^m) ,
\ee
where $m$ runs over all positive integers such that $\gamma/m$ is in the charge lattice.
If $\gamma$ is a primitive vector, $\bOm(\gamma,z)$ of course coincides with $\Omega(\gamma,z)$.
Thus, based on this physical reasoning we expect that the total index $\Omega(\gamma,z)$
can be written as
\be
\label{Omsumgen}
\bOm(\gamma,z,y) =
%\sum_{n=1}^{\infty}
\sum_{\gamma=\sum_{i=1}^n \gamma_i}
\frac{g(\{\gamma_i\},z,y)}{|{\rm Aut}\{\gamma_i\}|}\,
\prod_{i=1}^n \bOm_*(\gamma_i,y),
\ee
where the sum over $\{\gamma_i\}$ runs over  unordered decompositions of $\gamma$
into sums of vectors $\gamma_i\in\Lambda_+$ (i.e. two decompositions differing only by the order of the $\gamma_i$'s
are considered identical).
The symmetry factor $|{\rm Aut}\{\gamma_i\}|$ is the cardinality of the stabilizer of the ordered $n$-tuple
$(\gamma_i)$ inside the permutation group $S_n$ for any fixed choice of ordering of the $\gamma_i$'s.
In order to  make this formula useful, it remains to find a practical way of computing
the BPS index $g(\{\gamma_i\},z,y)$ for the quantum mechanics of the configurational degrees of freedom.

In previous work \cite{Manschot:2010qz,Manschot:2011xc,Manschot:2013sya}, Manschot, Sen and the second author
applied localization techniques to evaluate the equivariant index of the Dirac operator on
$(\cM_n,\varpi)$. Namely, they used the fact that this phase space
admits an Hamiltonian action of the rotation group $SO(3)$ generated by the angular momentum
$\vec J$ in \eqref{defJ}, to reduce the problem to the enumeration of collinear black
hole configurations, i.e. one-dimensional solutions to \eqref{DenefEq}. The main difficulty with
this approach however is that for $n\geq 3$, the space $\cM_n$ is generically
non-compact (despite having finite volume),  due to the existence of scaling solutions,
where some subset of the centers can become arbitrarily close
\cite{Denef:2007vg,Bena:2012hf}.\footnote{For three centers, this can happen whenever the DSZ products
$\gamma_{12}$, $\gamma_{23}$, $\gamma_{31}$
are all of the same sign and satisfy the triangular inequalities:
$\gamma_{12}<\gamma_{23}+\gamma_{31}$ and its cyclic permutations \cite{Denef:2007vg,Bena:2012hf}.
More generally, scaling solutions can occur when the total angular
momentum \eqref{defJ} of a subset of charges vanishes \cite{Manschot:2011xc}.} As a consequence,
the index of the Dirac operator is not well-defined and
the naive result from localization is not a symmetric Laurent polynomial in $y$.
In \cite{Manschot:2011xc, Manschot:2012rx,Manschot:2013sya}, a prescription was proposed to repair this problem,
leading to a more intricate version of the \eqref{Omsumgen} known as the `Coulomb branch formula'
which we shall discuss in section \ref{sec-Coulomb}. Note however that this prescription does not
take into account the condition that collinear solutions should have no closed time-like curves.

In the next subsection, we shall propose a different way of computing the BPS index
$g(\{\gamma_i\},z,y)$ which instead relies on the split attractor flow conjecture, and which
is in principle free of these issues.

\subsection{The flow tree formula}
\label{subsec-afc}

Since, according to the split attractor conjecture, $\cMad_n$ is partitioned into
components labelled by stable attractor flow trees, it is natural to
propose\footnote{Various
precursors of the formula that we are about to state have appeared in the literature, including
\cite{Denef:2001xn,Denef:2007vg,Manschot:2010xp}. Our proposal is novel inasmuch
as quantum statistics is properly taken into account and the sum over attractor flow trees
has been reduced to a combinatorial problem.}
 that the index
$g(\{\gamma_i\},z,y)$ can be obtained as the sum of the indices
of the corresponding nested sequences of two-centered bound states.

In the simplest case of two centers, as noted earlier, the
phase space $\cMad_2(\gamma_1,\gamma_2;z)$
is either empty when the conditions \eqref{cond2center} or  \eqref{cond2centeralign}
are violated, or a two-sphere equipped
with the symplectic form $\varpi=\frac12\gamma_{12} \varpi_{S^2}$.
The index of the Dirac operator coupled to $\varpi$ is
well-known to be equal (up to sign) to $|\langle \gamma_1,\gamma_2\rangle|$, corresponding
to the number of states in an angular momentum multiplet of spin $J=\frac12 |\langle \gamma_1,\gamma_2\rangle|-1$.
Accordingly, the equivariant Dirac index is equal to the character of this representation,
\be
\label{gt2}
g(\gamma_1,\gamma_2;z,y) = -\sign(\gamma_{12})\, \kappa(\gamma_{12}),
\ee
where we denoted\footnote{For brevity we shall always omit the dependence of $\kappa$ on $y$.}
\be
\kappa(x)=(-1)^x \, \frac{ y^x-y^{-x}}{y-y^{-1}}\, ,
\label{kappadef}
\ee
as in \cite{Manschot:2010qz}. This answer can also be obtained by directly solving
the quantum mechanics describing two  mutually non-local
dyons  \cite{Denef:2002ru}, or by localization with respect to rotations
along a fixed axis. In that case the two opposite powers of $y$ in the numerator
arise from the north and south pole on the sphere.
The two-centered configuration corresponds
to the single flow tree $\gamma\to\gamma_1+\gamma_2$, corresponding
to the bracketing $(12)$.  Assuming that the condition \eqref{cond2centeralign} is
automatically satisfied for the range of moduli $z$ of interest, the contribution of this tree  to the
rational index $\bOm(\gamma,z)$ is then
\be
\label{flow2}
\bOm_{(12)}(z)=
-\frac12\Bigl[\sign\, \Im (Z_{\gamma_1} \bar Z_{\gamma_2})+ \sign(\gamma_{12})
\Bigr] \kappa(\gamma_{12})\, \bOm_*(\gamma_1)\, \bOm_*(\gamma_2) .
\ee
The `sign factor' in square brackets ensures that
this contribution  is absent unless  the stability condition \eqref{cond2center} is
obeyed \cite{Manschot:2009ia}.
In particular, due to \eqref{attrZZ}, it is always absent in the vicinity of the attractor
point $z_{\gamma}$.

For more than two centers,  the contribution of each tree should then be
given by the product of indices of  two-centered bound
states \eqref{gt2} appearing at each level of the tree, namely
(up to a sign $\epsilon_T \equiv \prod_{v\in V_T}\sign(\gamma_{\Lv{v}\Rv{v}})$ which we will treat separately)
\be
\label{kappaT0}
\kappa(T) \equiv (-1)^{n-1} \prod_{v\in V_T} \kappa( \gamma_{\Lv{v}\Rv{v}}).
\ee
Assuming that identical constituents can be treated as distinguishable particles at the expense
of replacing the index $\Omega(\gamma)$ by its rational counterpart \eqref{defbOm}, as discussed
in the previous subsection,  we are
therefore lead to conjecture that the total index is given by
\be
\label{Omsumtree}
\bOm(\gamma,z,y) =
\sum_{\gamma=\sum_{i=1}^n \gamma_i}
\frac{\gtr(\{\gamma_i\}, z,y)}{|{\rm Aut}\{\gamma_i\}|}\,
\prod_{i=1}^n \bOm_*(\gamma_i,y),
\ee
where the sum over $\{\gamma_i\}$ runs over  unordered decompositions of $\gamma$
into sums of positive vectors $\gamma_i\in\Lambda_+$,
and the `tree index' $\gtr$ is a sum over all stable
flow trees,
\be
\label{defgtree}
\gtr(\{\gamma_i\}, z,y)
=  \sum_{T\in \cT^{\rm ad}_n(\{\gamma_i\},z)}\epsilon_T\, \kappa(T)
=\sum_{T\in \cT_n(\{\gamma_i\},z)}\Delta(T)\, \kappa(T) .
\ee
In the second equality, following \cite{Manschot:2010xp} we extended the sum to all trees $T\in\cT_n$, 
at the cost of inserting a factor vanishing unless the stability condition \eqref{condtree} is obeyed
at each vertex, in which case it is equal\footnote{The equivalence between the stability condition  \eqref{condtree} and the condition
$\Delta(T)\ne 0$ only holds provided none of the arguments of the sign functions in \eqref{kappaT}
vanish. If one of the DSZ pairings $\gamma_{\Lv{v}\Rv{v}}$ vanishes,
the prefactor $\kappa(T)$ vanishes as well, so the second equality in \eqref{defgtree}
is still valid. More problematic
is the case where $\Im\bigl[ Z_{\gamma_{\Lv{v}}}\bZ_{\gamma_{\Rv{v}}}(z_{p(v)})\bigr]$
vanishes for some vertex $v$. If we assume that $z$ does not sit on any wall of marginal stability,
then this cannot happen for the root vertex $v_0$, but it may still happen for one of its descendants.
A simple example is that of a 3-node tree $(1(23))$ with $\gamma_3=\gamma_1$: on the
locus where $\Im[Z_{\gamma_1} \bar Z_{\gamma_1+\gamma_2}(z_1)]=0$,
which defines the moduli $z_1$  at the first splitting,
the quantity $\Im[Z_{\gamma_2} \bar Z_{\gamma_3}(z_1)]$ relevant for the bound state
$\gamma_2+\gamma_3$ also vanishes,
so that $\Delta(T)$ becomes ill-defined. (We thank J. Manschot for pointing out this issue.)
This issue can be traced to  the matrix $\gamma_{ij}=\langle \gamma_i, \gamma_j\rangle$ 
being non-generic, in the sense that some linear
combinations  $\sum_{i<j} p_{ij} \gamma_{ij}$ with integer coefficients $p_{ij}$ vanish.
To avoid this problem, we follow the prescription of \cite{Manschot:2013sya} and
define $\gtr$ by perturbing $\gamma_{ij}$ infinitesimally so that it becomes generic.
Since $\gtr$ is manifestly a continuous function of the $\gamma_{ij}$'s
for generic values of $z$, due to the factor $\kappa(T)$ multiplying $\Delta(T)$,
the result does not depend on the choice of the perturbation.\label{foogen}}
 to $\epsilon_T$,
\be
\label{kappaT}
\Delta(T)=\frac{1}{2^{n-1}} \prod_{v\in V_T}
\Bigl[\sign \,\Im\bigl[ Z_{\gamma_{\Lv{v}}}\bZ_{\gamma_{\Rv{v}}}(z_{p(v)})\bigr]+ \sign (\gamma_{\Lv{v}\Rv{v}}) \Bigr] .
\ee
Again, in writing \eqref{kappaT} we assumed that the second condition
in \eqref{condtree} is automatically satisfied for the range of moduli $z$ of interest.

\medskip

Several comments about the proposal \eqref{Omsumtree}, \eqref{defgtree} are in order:

\begin{itemize}
\item[i)]
By construction the tree index \eqref{defgtree} is a Laurent polynomial in $y$ with integer coefficients, 
symmetric under $y\to 1/y$.

\item[ii)] Due to the observation  in \eqref{attrZZ}, the factor in $\Delta(T)$ associated
to the root vertex automatically  vanishes at the attractor point $z=z_\gamma$. Therefore
the tree index $\gtr$ vanishes at this point, except in the case  $n=1$.
Thus, \eqref{Omsumtree} automatically holds at the attractor point $z=z_\gamma$.
In order to prove that it is true for any $z$, it  suffices to prove that it is consistent with
the wall-crossing formula.

\item[iii)]
To evaluate the sign factor \eqref{kappaT}, it appears that one needs to
compute the attractor flow along each edge and find the value $z_v$ at which it
crosses the wall  of marginal stability for the bound state $\gamma_v \to \gamma_{\Lv{v}}+\gamma_{\Rv{v}}$.
While this is a non-trivial problem in general, the
precise value of $z_v$  is however irrelevant, since we only need to evaluate the sign of
$\Im\bigl[ Z_{\gamma_{\Lv{v}}}\bZ_{\gamma_{\Rv{v}}}(z_{p(v)})\bigr]$ for each vertex,
 in terms of moduli at infinity $z_\infty$.
In section \ref{subsec-flow} below, we shall show that  these
signs can  actually be determined in terms of $z_\infty$ for an arbitrary flow tree,
without solving the attractor flow along the edges explicitly (but assuming that
such a flow does exist).

\item[iv)]
Across the wall of marginal stability
$\Im[Z_{\gamma_L} \bZ_{\gamma_R} (z)]=0$
defined by a pair of primitive\footnote{Here, by primitive we mean that
all charges with non-zero index in the two-dimensional lattice
spanned by $\gamma_L$ and $\gamma_R$ are linear combinations
$N_L \gamma_L + N_R\gamma_R$ with coefficients $N_L, N_R$
of the same sign.} vectors $(\gamma_L,\gamma_R)$,
the discontinuity of Eq. \eqref{Omsumtree} with  $\gamma=\gamma_L+\gamma_R$
arises from the contribution of all flow trees which
start with the same splitting $\gamma\to \gamma_L+\gamma_R$ at the root of the tree.
The discontinuity is then
\be
\label{primwc}
\Delta\bOm(\gamma_L+\gamma_R) =  -\sign(\gamma_{LR})\, \kappa(\gamma_{LR})\, \bOm(\gamma_L,z)\, \bOm(\gamma_R,z) ,
\ee
where $\Delta\bOm$ is defined as the difference between the value
of $\bOm(\gamma_L+\gamma_R,z)$ in the region where $\gamma_{LR}\, \Im(Z_{\gamma_L} \bar Z_{\gamma_R})>0$
(the bound state exists), minus the value in the region where
$\gamma_{LR}\, \Im(Z_{\gamma_L} \bar Z_{\gamma_R})<0$ (the bound state does not exist).
The jump \eqref{primwc} is indeed consistent with (refined) primitive wall-crossing formula
\cite{Denef:2007vg,Diaconescu:2007bf}. When $\gamma=N_L \gamma_L + N_R\gamma_R$
with $N_L$ and/or $N_R$ bigger than one, all trees whose first splitting
$(\gamma_{L(v_0)},\gamma_{R(v_0)})$ lies in the two-dimensional lattice spanned
by $(\gamma_L,\gamma_R)$ can contribute. We have not proved that the resulting
discontinuity is consistent with the general wall-crossing  formula of Kontsevitch and
Soibelman \cite{ks}, but the similarity of \eqref{Omsumtree} with the
Coulomb branch formula discussed in Section \ref{sec-gC} strongly suggests that
this is the case  (see Section \ref{sec-quiver} for a more detailed discussion).

\item[v)]
In addition, the contribution of each tree  is also discontinuous whenever
$z_{p(v)}$ crosses the wall of marginal stability associated to $(\gamma_{\Lv{v}},\gamma_{\Rv{v}})$
for any vertex $v$ in the tree. Unless $v$ is the root vertex,
this locus does not coincide with any wall of marginal stability.
We shall demonstrate that these apparent discontinuities in fact cancel after summing over all trees.
Thus, the sum over trees in \eqref{defgtree} is discontinuous only
on the walls of marginal stability for the bound states of charge $\gamma$,
as predicted by the primitive wall-crossing formula \eqref{primwc}. The appearance
or disappearance of individual trees across `fake walls' reflects a change in the
internal structure of the bound state inside a given chamber, as noted previously in
\cite{Manschot:2010xp,Andriyash:2010yf,Chowdhury:2012jq}.

\item[vi)]
As a side remark, if  none of the charges $\gamma_i$ coincide (and assuming that the matrix
$\gamma_{ij}$ is generic for all possible splittings, see footnote \ref{foogen}),
the symmetry factor in \eqref{Omsumtree}
is equal to one and the sum over splittings and flow trees
can be generated by iterating the quadratic equation\footnote{As a baby version of this phenomenon,
note that the generating function $B(z)$ of the numbers $b_n$
of unordered full binary trees with $n$ leaves satisfies
$$
z= B(z) - \frac12\, B(z)^2,
\qquad
B(z)=z+\sum_{n\geq 2} b_n\, \frac{z^n}{n!} = 1-\sqrt{1-2z} .
$$
\label{foot-Ntree}
}
\bea
\label{itereq}
\bOm(\gamma,z) &= &\bOm_*(\gamma)
 \\
&- & \sum_{\substack{\gamma=\gamma_L+\gamma_R\\
\langle \gamma_L,\gamma_R \rangle \neq 0}}
\frac12\,\Bigl[ \sign\,\Im\bigl[Z_{\gamma_L} \bar Z_{\gamma_R}(z)\bigr] +
\sign (\gamma_{LR}) \Bigr] \kappa(\gamma_{LR})\,
\bOm(\gamma_L,z_{LR})\, \bOm(\gamma_R,z_{LR}),
\nn
\eea
where $z_{LR}$ is the point where the attractor flow $\cA(\gamma_L+\gamma_R,z)$
crosses the wall of marginal stability $\Im(Z_{\gamma_L}\bar Z_{\gamma_R}(z_{LR}))=0$,
and we omit the dependence on $y$ on both sides.

\end{itemize}

In the remainder of this section, we shall explain how to evaluate the tree index $\gtr$
purely in terms of asymptotic data, introducing the notion of `discrete attractor flow'
along the tree, and provide alternative representations for $\gtr$ which make
the cancellation of discontinuites across `fake walls' manifest.

\subsection{Example: three centers}
\label{subsec-ex3}

As a warm-up, we first consider the case $n=3$,
which has been analyzed in detail in \cite{Manschot:2010xp} and nicely
illustrates the mechanisms at play.
In the following it will be useful to define\footnote{Since the stability
condition is unaffected by an overall rescaling, the parameters
$c_i(z)$ are really valued in the real projective space $\IR \IP^n$. In Section \ref{sec-quiver}
we shall see that they coincide with the Fayet-Iliopoulos parameters
in quiver quantum mechanics, again up to an overall scale.}
\be
c_i(z) =  \Im\bigl[ Z_{\gamma_i}\bZ_\gamma(z)\bigr],
\qquad
c_{ij}(z) =  \Im\bigl[ Z_{\gamma_i}\bZ_{\gamma_j}(z)\bigr].
\label{notaion-c}
\ee
When the argument is omitted, these parameters will be implicitly evaluated at infinity, i.e. $c_i\equiv c_i(z_\infty)$.
Note that due to $\gamma=\sum_{i=1}^n \gamma_i$, they satisfy
\be
\sum_{j=1}^n c_{ij}(z)=c_i(z),
\qquad
\sum_{i=1}^n c_i(z)=0.
\ee
We shall abuse notation and write $\gamma_{i+j}=\gamma_i+\gamma_j$,
$\gamma_{i+j,k}=\gamma_{ik}+\gamma_{jk}$, $c_{i+j}=c_i+c_j$, etc.

\lfig{Attractor flow tree with three centers.}{AFtree3}{5.5cm}{fig-AFtree3}{-1.2cm}

For three centers, the three possible flow trees are related  to the tree $((12)3)$ depicted
in Fig. \ref{fig-AFtree3} by cyclic permutations of the charges.
The moduli $z_1$ and $z_2$  at the two vertices  are defined by the conditions
$\Im[Z_{\gamma_1}\bar Z_{\gamma_2}(z_1)] = \Im[Z_{\gamma_1+\gamma_2}\bar Z_{\gamma_3}(z_2)]=0$.
Using the notations \eqref{notaion-c}, the formula \eqref{kappaT} implies that the contribution
of this tree to the BPS index $\bOm(\gamma,z)$ is given by
\be
\label{Om-af3}
\begin{split}
\bOm_{((12)3)}(z) =  &\,
\frac14\, \Bigl[\sign (\gamma_{1+2,3}) +\sign(c_{1+2}) \Bigr]
\, \Bigl[  \sign (\gamma_{12}) +\sign (c_{12}(z_2))\Bigr]
\\
&\,  \times
\kappa(\gamma_{12})\,\kappa(\gamma_{1+2,3} )\,
\bOm_*(\gamma_1)\,  \bOm_*(\gamma_2)\,  \bOm_*(\gamma_3) .
\end{split}
\ee

Our first goal is to express $\sign (c_{12}(z_2))$ in terms of $z_\infty$.
To this end, let us substitute $(\gamma_1,\gamma_2,\gamma') \to (\gamma_{1+2},\gamma_3,\gamma_1)$  in \eqref{attflowst},
obtaining
\be
e^{-U_2}\,\frac{c_1(z_2)}{|Z_\gamma(z_2)|} =
\frac{\langle\gamma,\gamma_1\rangle}{\langle \gamma,\gamma_3\rangle}\,
\frac{c_{1+2}}{|Z_{\gamma}|} +\frac{c_1}{|Z_\gamma|} \, .
\ee
Therefore, one finds
\be
\begin{split}
\sign( c_1(z_2))=&\,
\sign \bigl[\gamma_{1+2,3}
( \langle\gamma,\gamma_1\rangle\,c_{1+2}+ \langle \gamma,\gamma_3\rangle\,c_1)
\bigr]
\\
= &\, \sign(\gamma_{1+2,3})\,
\sign(\gamma_{23}c_1+\gamma_{31}c_2+\gamma_{12}c_3).
\end{split}
\label{t2t3}
\ee
Since  by assumption $Z_{\gamma_{1+2}}(z_2)$, $Z_{\gamma_3}(z_2)$ and their sum $Z_{\gamma}(z_2)$
all have the same phase,
it follows that
\be
\label{samephase}
\sign(c_{12}(z_2)) =
\sign\, \Im( Z_{\gamma_1} \bar Z_{\gamma_{1+2}}(z_2)) =
\sign(c_1(z_2)),
\ee
which was computed in \eqref{t2t3}.  Thus, the contribution of the tree $((12)3)$ to the BPS index
becomes
\be
\label{flow3}
\begin{split}
\bOm_{((12)3)}(z) =  &\,
 \frac14\, \Bigl[\sign (\gamma_{1+2,3}) -\sign(c_{3}) \Bigr]
\, \Bigl[  \sign (\gamma_{12}) + \sign (\gamma_{1+2,3})\,\sign(A_{123})\Bigr]
\\
&\,  \times  \kappa(\gamma_{12})\, \kappa(\gamma_{1+2,3} )\,
\bOm_*(\gamma_1)\,  \bOm_*(\gamma_2)\,  \bOm_*(\gamma_3),
\end{split}
\ee
where we defined
\be
\label{defA123}
A_{123} =\gamma_{23}c_1+\gamma_{31} c_2+\gamma_{12} c_3.
\ee
Assuming that the only possible splittings of $\gamma$ involve the charge vectors $\gamma_1, \gamma_2, \gamma_3$,
the total index is then given by
\be
\label{flow3tot}
\begin{split}
\bOm(\gamma,z)  = &\,
\bOm_*(\gamma) +  \bOm_{(1,2+3)}(z)
+  \bOm_{(2,3+1)}(z) + \bOm_{(3,1+2)}(z)
\\
&\, +
\bOm_{((12)3)}(z) + \bOm_{((23)1)}(z) + \bOm_{((31)2)}(z),
\end{split}
\ee
where $\bOm_{(1,2+3)}$ denotes \eqref{flow2} with $\gamma_2$ replaced by $\gamma_{2+3}$, and $\bOm_{((23)1)}$, $\bOm_{((31)2)}$
are obtained by cyclic permutations of $(\gamma_1,\gamma_2,\gamma_3)$
in \eqref{flow3}. In particular, the tree index for three centers is given by
\be
\label{gtr3}
\gtr =  \frac{3}{4}\,\Sym\left\{\kappa(\gamma_{12} )\,\kappa(\gamma_{1+2,3} )
\Bigl(\sign(\gamma_{1+2,3})-\sign(c_3)\Bigr)
\Bigl(\sign(\gamma_{12})+\sign(\gamma_{1+2,3}A_{123})\Bigr)\right\},
\ee
where $\Sym$ denotes the complete symmetrization over all charges (with weight $1/n!$).

\begin{figure}[t]
\centerline{\includegraphics[width=7.7cm]{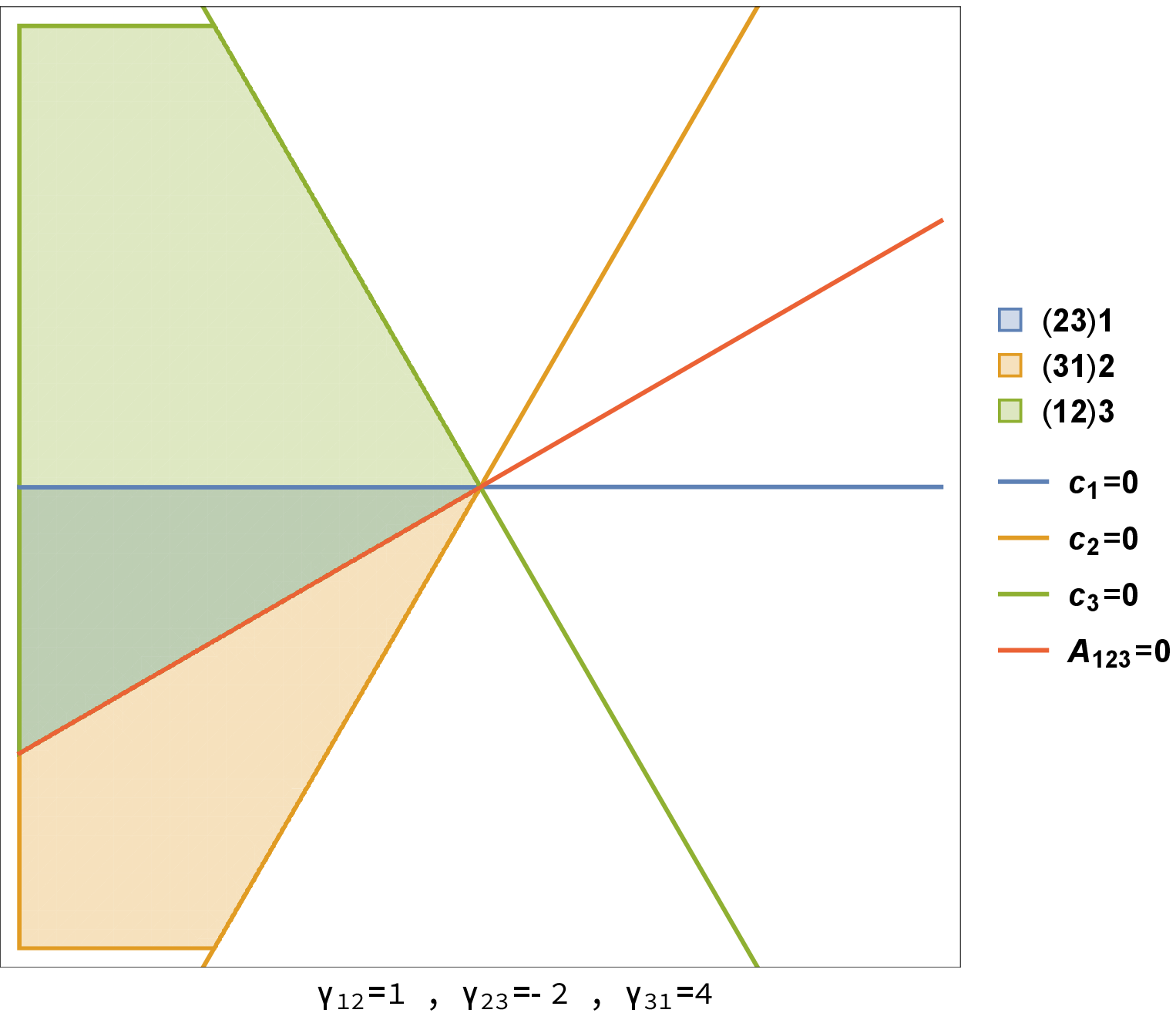}\hfill
\includegraphics[width=7.7cm]{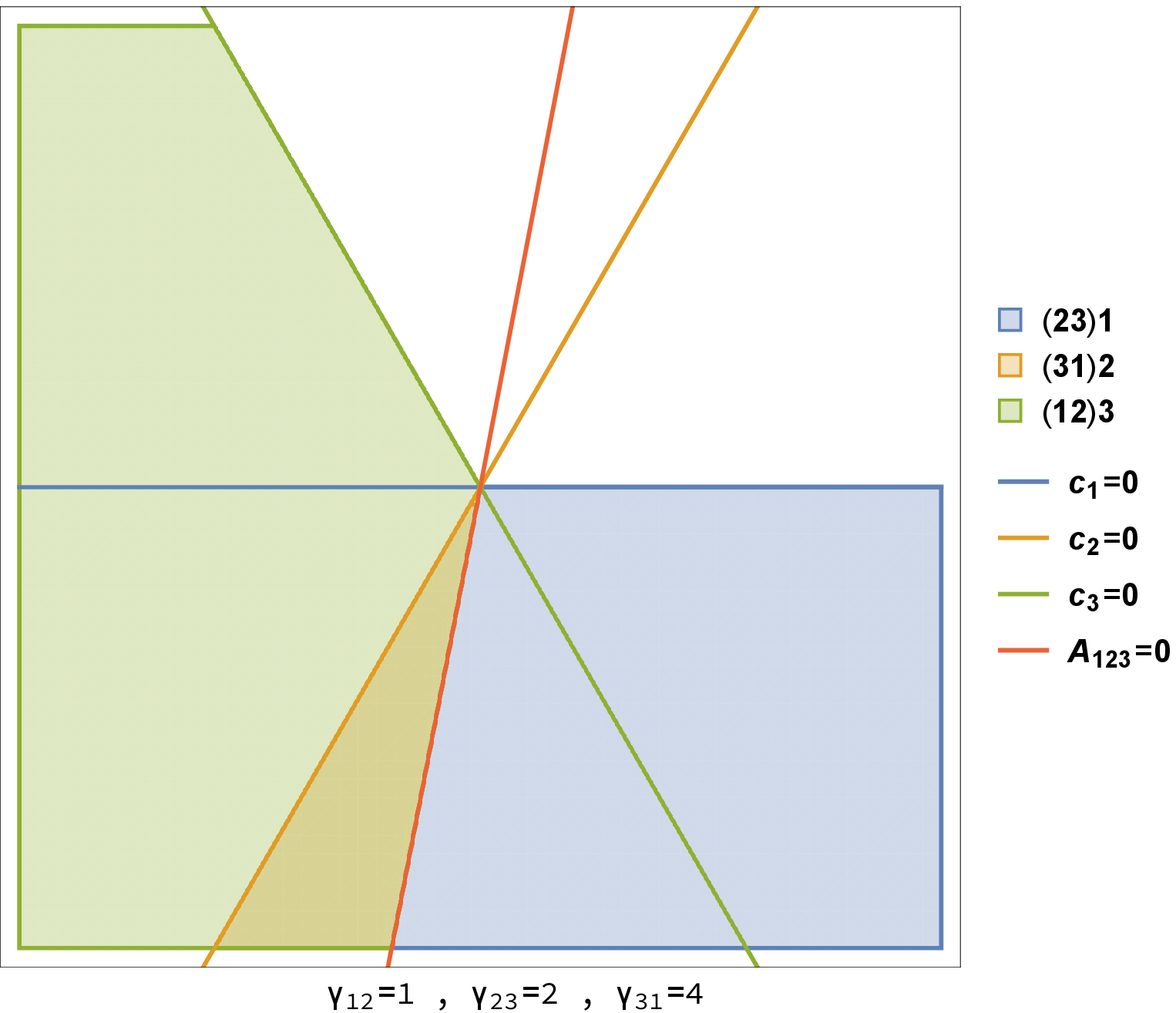}}
%\vspace{0.1cm}
\caption{In these two figures we depict the regions of stability of the flow trees for $n=3$ centers
as a function of the $c_i$'s (subject to the constraint $c_1+c_2+c_3=0$)
for two representative choices of the products $\gamma_{12}, \gamma_{23}, \gamma_{31}$.
The horizontal and vertical axis are
$x=\frac{\sqrt{3}}{2}(c_2-c_3)$ and $y=-\frac32(c_2+c_3)$, so that a cyclic permutation of $1,2,3$
amounts to a rotation by $2\pi/3$ around the origin. The lines $c_i=0$ are walls of marginal
stability for the decay $\gamma\to \gamma_i + \gamma_{j+k}$, while the red line $A_{123}$ corresponds to a
``fake wall'', or recombination wall. On the left, the trees $((23)1)$ and $((12)3)$ co-exist in the region
between $c_1=0$ and $A_{123}=0$, and BPS bound states only exist in the colored region between $c_2=0$ and $c_3=0$.
On the right, the trees $((31)2)$ and $((12)3)$ coexist
in the region between $c_2=0$ and $A_{123}=0$, and BPS bound states
exist in the colored region between $c_1=0$ and $c_3=0$, but there are scaling solutions for
any value of the $c_i'$'s. In both cases, the index is constant across the fake wall $A_{123}=0$,
even though  the allowed trees differ on both sides.\label{fig-threeplot}}
\end{figure}

It is  straightforward to check the consistency of the representation \eqref{flow3tot} with the
primitive wall-crossing formula \eqref{primwc}.
For example, on the wall of marginal stability where
$c_3=\Im (Z_{\gamma_{3}} \bZ_{\gamma_{1+2}})$ vanishes we have $z_\infty=z_2$
and, as noted above \eqref{samephase}, $Z_{\gamma_{1+2}}$, $Z_{\gamma_3}$
and $Z_{\gamma}$ all have the same phase, so \eqref{defA123} reduces to
\be
A_{123}|_{c_3=0}= \gamma_{23} c_1+\gamma_{31} c_2
= \gamma_{1+2,3} c_{12}.
\ee
Using this in \eqref{flow3}, the discontinuity of  $\bOm(\gamma,z)$ is then proportional
to
\be
\begin{split}
&\, \kappa(\gamma_{1+2,3})
\( \bOm_*(\gamma_{1+2})
-\frac12 \Bigl[  \sign(\gamma_{12}) +\sign(c_{12})
\Bigr]
\kappa(\gamma_{12}) \,\bOm_*(\gamma_1)\,\bOm_*(\gamma_2) \)
\bOm_*(\gamma_3)
\\
=&\, \kappa(\gamma_{1+2,3})\,\bOm(\gamma_{1+2},z)\,\bOm(\gamma_3,z),
\end{split}
\ee
consistently with the primitive wall-crossing formula \eqref{primwc}.

On the other hand, it appears that \eqref{flow3tot} is discontinuous on the codimension one
locus where the sign of $A_{123}$ changes, which does not coincide with any wall of marginal stability.
However, due to the cyclic symmetry of $A_{123}$, the discontinuity arises simultaneously
for the flow tree $((12)3)$ and its images under cyclic permutations (see Fig. \ref{fig-threeplot}
for a plot of the stability regions of the various trees for two representative examples).
As a result, while each of these contributions is discontinuous across $A_{123}=0$,
the sum over flow trees turns out to be smooth. Indeed, the coefficient of $\sign(A_{123})$
in \eqref{flow3tot} is the product of the $\bOm_*(\gamma_i)$'s times
\be
\label{coefA123}
\Bigl[ 1 - \sign(\gamma_{1+2,3}c_3) \Bigr]
\kappa(\gamma_{12} )\,\kappa(\gamma_{1+2,3})
+ {\rm cycl}\, .
\ee
The key property ensuring the vanishing of this expression is the identity (valid for any $y$)
\be
\label{kappa123}
\kappa(\gamma_{12} )\,\kappa(\gamma_{1+2,3} ) + {\rm cycl} = 0.
\ee
Using it, one finds that \eqref{coefA123} reduces to
\be
\Bigl[\sign(\gamma_{1+2,3}c_3)-\sign(\gamma_{2+3,1} c_1) \Bigr]
\kappa(\gamma_{23} )\,\kappa(\gamma_{2+3,1}) + (1\leftrightarrow 2).
\ee
Rewriting \eqref{defA123}  as $A_{123}=\gamma_{1+2,3} c_1-\gamma_{2+3,1} c_3$, it
is now clear that this expression vanishes on the locus $A_{123}=0$.
Thus, after summing all trees, the total index is smooth across this locus and
does not have any discontinuities beyond those predicted by wall crossing.

Given the fact that the discontinuity at $A_{123}=0$ cancels, one may wonder whether
it is possible to rewrite the index in a form which does not involve $\sign(A_{123})$ at all,
but only $\sign(c_i)$ corresponding to the decay $\gamma\to \gamma_{i}+\gamma_{j+k}$.
To this end, let us rewrite the tree index \eqref{gtr3} as
\be
\gtr=
\frac{3}{4}\,\Sym\left\{\kappa(\gamma_{12} )\,\kappa(\gamma_{1+2,3} )\Bigl[-\sign(\gamma_{1+2,3}^{-1}c_3)\,\sign (A_{123})
+\sign(\gamma_{12})\Bigl(\sign(\gamma_{1+2,3})-\sign(c_3)\Bigr)\Bigr]\right\},
\ee
where we used the identity \eqref{kappa123} to drop the term proportional
to $[\sign(\gamma_{1+2,3})]^2$. Next, we use the same identity to replace
\be
\kappa(\gamma_{12} )\,\kappa(\gamma_{1+2,3}) = \frac23\, \kappa(\gamma_{12} )
\,\kappa(\gamma_{1+2,3} )-\frac13\, \Bigl( \kappa(\gamma_{23} )\,\kappa(\gamma_{2+3,1} )
+ \kappa(\gamma_{31} )\,\kappa(\gamma_{3+1,2}) \Bigr)
\ee
and relabel the charges after this replacement, arriving at
\be
\begin{split}
\gtr=&\, \frac12\,\Sym\left\{\kappa(\gamma_{12} )\,\kappa(\gamma_{1+2,3} )\Bigl[
\Bigl(\sign(\gamma_{2+3,1}^{-1}c_1)-\sign(\gamma_{1+2,3}^{-1}c_3)\Bigr)\sign (A_{123})
\right.
\\
&\,\left.
+\sign(\gamma_{12})\Bigl(\sign(\gamma_{1+2,3})-\sign(c_3)\Bigr)-\sign(\gamma_{23})\Bigl(\sign(\gamma_{2+3,1})-\sign(c_1)\Bigr)
\Bigr]\right\}.
\end{split}
\label{Z3charge0}
\ee
Finally, taking into account that
$A_{123}= \gamma_{1+2,3}c_1-\gamma_{2+3,1}c_3$
and using the sign identity \eqref{signprop-ap},
one obtains that the tree index takes the form
\be
\begin{split}
\gtr=&\, \frac12\,\Sym\left\{\kappa(\gamma_{12} )\,\kappa(\gamma_{1+2,3} )\Bigl[
\sign(\gamma_{2+3,1})\sign(\gamma_{1+2,3})-\sign(c_1)\sign(c_3)
\right.
\\
&\, \left.
+\sign(\gamma_{12})\Bigl(\sign(\gamma_{1+2,3})-\sign(c_3)\Bigr)-\sign(\gamma_{23})\Bigl(\sign(\gamma_{2+3,1})-\sign(c_1)\Bigr)
\Bigr]\right\}.
\end{split}
\label{Z3charge}
\ee
This representation involves sign functions whose arguments are all expressed through the moduli
at infinity and vanish only on the walls of marginal stability corresponding to the bound states of total charge $\gamma$.

\subsection{Discrete attractor flow}
\label{subsec-flow}

As mentioned in \S\ref{subsec-afc}, it appears that in order to compute
the weight factor \eqref{kappaT}
ensuring the stability condition, one needs to
solve the attractor flow along each edge in order to determine the explicit values of the
moduli $z_v$ attached to each vertex. However, the weight factor \eqref{kappaT} only
depends on the moduli through
the sign of $\Im\bigl[ Z_{\gamma_{\Lv{v}}}\bZ_{\gamma_{\Rv{v}}}\bigr]$
evaluated at the parent vertex $z_{p(v)}$. We now explain how these signs can be determined
in terms of asymptotic data
without knowing $z_{p(v)}$ itself, generalizing the procedure used above for $n=3$.

The key property is again Eq. \eqref{attflowst}. Recalling the notation \eqref{notaion-c}
and specializing to $(\gamma_1,\gamma_2,\gamma')=(\gamma_L,\gamma_R,\gamma_i)$,
it may be rewritten as
\be
 \frac{e^{-U(r_1)}|Z_{\gamma}(z)|}{|Z_\gamma(z_1)|} \, c_i(z_1) =
\frac{\langle\gamma,\gamma_i\rangle}{\langle \gamma_L,\gamma_R\rangle}\,
\sum_{j=1}^n  \mm^j_L c_j(z)+ c_i(z) ,
\ee
where $\mm^i_v$ are the coefficients (equal to 0 or 1) of the charge $\gamma_v$
at the vertex $v\in V_T$ on the basis spanned by the vectors $\gamma_1,\dots,\gamma_n$ assigned to the leaves,
i.e. $\gamma_v=\sum_{i=1}^n \mm_v^i \gamma_i$. This equation relates the coefficients
$c_i(z_1)$, determining the stability of the BPS bound state $\gamma_L+\gamma_R$, in terms
of the coefficients $c_i(z)$, up to an irrelevant overall positive scale factor. While the relation \eqref{attflowst}
was derived
in the context of a two-centered black hole with $z$ labelling the moduli at infinity, it holds just
as well for nested sequences of two-centered bound states, where $z$ now labels the moduli
at the parent vertex $p(v)$ and the total charge should be replaced by $\gamma_v$.
Thus, it allows to determine the coefficients
$c_{v,i}=\Im\bigl[ Z_{\gamma_i}\bZ_{\gamma_v}(z_{p(v)})\bigr]$ at all vertices of the
tree, starting from the root where $c_{v_0,i}=c_i(z_\infty)$
and propagating them down the tree using the `discrete attractor flow'
\be
\label{cistar}
c_{v,i} =
 c_{p(v),i} - \frac{\langle \gamma_v,\gamma_i \rangle}{\langle \gamma_v,\gamma_{\Lv{v}} \rangle}
\,\sum_{j=1}^n  \mm^j_{\Lv{v}} c_{p(v),j}
\ee
at each vertex.
Note that this relation is invariant under exchanging $\gamma_L$ and $\gamma_R$,
and ensures that $\sum \mm^i_{\Lv{v}} c_{v,i}= \sum \mm^i_{\Rv{v}}c_{v,i}=0$. It is important
however to keep in mind that the relation \eqref{cistar} assumes that the flow starting
from $z_{p(v)}$ can be continued all the way until it crosses the locus where
$\Im\bigl[ Z_{\gamma_{\Lv{v}}}\bZ_{\gamma_{\Rv{v}}}(z_v)\bigr]=0$ (as opposed to terminating
on a point where $Z_{\gamma_v}$ vanishes), and moreover that the central charges
$Z_{\gamma_{\Lv{v}}}$ and $Z_{\gamma_{\Rv{v}}}$ are actually aligned at this point
(as opposed to being anti-aligned).

For some purposes, it can be useful to have a formula expressing $\Delta(T)$
directly in terms of the moduli at spatial infinity, or more precisely in
terms of the corresponding parameters $c_{i}=c_i(z_\infty)$.
To this aim, let us consider a branch inside a flow tree, starting from the root $v_0$
and consisting of $\ell-1$ edges (see Fig. \ref{fig-AFtreen}). We denote by $\alpha_1,\dots, \alpha_\ell$
the charges attached to the descendants of the vertices along the branch, such that it corresponds to
the nested sequence of bound states
\be
\begin{array}{cccccccccc}
\gamma &\to& \ha_{\ell-1} & \to & \cdots & \to & \ha_{2}  & \to & \alpha_1 &
\\
&& + &&&& + && +
\\
&& \alpha_{\ell} &&&& \alpha_{3} && \alpha_2
\end{array}
\label{deftree}
\ee
and $\ha_k=\alpha_1+\cdots+\alpha_k$.
The charges $\alpha_i$ ($i=1,\dots, \ell$) are in general linear combinations of the  charges $\gamma_i$,
($i=1,\dots, n$) of the constituents attached to the leaves of the full tree, from which the branch has been extracted.
We denote the moduli at the vertices along the tree by $z_i$ in the same order as charges, $z_\ell=z_\infty$
corresponding to the moduli at spatial infinity.

\lfig{A branch of a flow tree and its labeling relevant for the expression of the stability
conditions in terms of $z_\infty$.}
{AFtree-n}{6.5cm}{fig-AFtreen}{-1.cm}

Our goal is to express the sign factor $\sign\,\Im\bigl[ Z_{\alpha_1}\bZ_{\alpha_2}(z_2)\bigr]$ governing the stability of the
innermost bound state in terms of $z_\infty$.
For this purpose, we define the following family of sign functions
\be
S_{k}(\{a_i\})
= \sign\, \Im\[ \(\sum_{i=1}^{k} a_{i} Z_{\alpha_i} (z_{k})\) \bZ_{\ha_{k}}
(z_{k})\],
\label{defSk}
\ee
where $k=2, \dots, \ell$. The real parameters $a_{i}$ will be fixed
momentarily in such a way that $S_{k-1}(\{a_i\})$ is related to its counterpart $S_{k}(\{a_i\})$,
in which the central charges are evaluated one step up along the attractor flow.
To this end, we note that the discrete attractor flow equation \eqref{cistar} specialized for the vertex $v_{k-1}$
implies
\be
\Im[Z_{\alpha_i}\bZ_{\ha_k}(z_{k-1})] \propto
\Im[Z_{\alpha_i} \bZ_{\ha_k}(z_{k})] + \frac{\beta_{ki}}{\beta_{kk}}\,
\Im[ Z_{\ha_{k-1}} \bZ_{\ha_k}(z_{k})],
\label{dafeq-n}
\ee
where the proportionality coefficient is independent of $i$ and positive,
and we defined
\be
\beta_{ki}= \langle \alpha_1+\dots+\alpha_k,\alpha_i\rangle.
\ee
Since by definition $z_{k-1}$ is the point where the attractor flow $\cA(\ha_k,z_{k})$
crosses the wall of marginal stability $\Im[Z_{\ha_{k-1}}\bZ_{\alpha_{k}}(z_{k-1})]=0$,
the central charges $Z_{\ha_{k-1}}(z_{k-1})$ and $Z_{\ha_k}(z_{k-1})$
have the same phase.
Hence, one can replace the latter by the former in the left-hand side of \eqref{dafeq-n},
which then reproduces one term in the sum in \eqref{defSk}.
Thus, we find
\be
\label{twoSk}
S_{k-1}(\{a_i\})= \sign \,\Im\left[
\sum_{i=1}^{k-1}  a_{i}  \left( Z_{\alpha_i}
- \frac{\beta_{ki}}{\beta_{kk}}\, Z_{\alpha_{k}}\right)
\bZ_{\ha_{k}}(z_{k}) \right] = S_{k}(\{a_i\}) ,
\ee
where in the definition of $S_{k}(\{a_i\})$ on the right-hand side, we choose the
first $k-1$ coefficients $a_1,\dots ,a_{k-1}$ to be  identical to the original ones, while
the $k$-th coefficient is related to the preceding ones by
\be
a_{k} =
- \sum_{i=1}^{k-1} \frac{\beta_{ki}}{\beta_{kk}}\, a_i  .
\label{rec-a}
\ee

The recursive relation \eqref{rec-a} fixes all coefficients $a_k$ in terms of $a_1$ and $a_2$.
Moreover, due to $\sum_{i=1}^k \beta_{ki}=0$, a shift of the initial conditions
$a_{1},a_{2}$ by $\lambda$ results into an overall shift $a_k \to a_k+\lambda$ for all $k$,
which does not affect the $S_k$'s. Therefore, we can choose $a_1=0$.
Then \eqref{rec-a} leads to
\be
\begin{split}
a_{1}=0, &
\qquad a_{2}=-1,
\qquad
a_{3} = \frac{\beta_{32}}{\beta_{33}},
\qquad
a_{4}=\frac{\beta_{42}}{\beta_{44}}- \frac{\beta_{43} \beta_{32}}{\beta_{44}\beta_{33}},
%\\ &
%a_{5}=\frac{\beta_{52}}{\beta_{55}}- \frac{\beta_{53} \beta_{32}}{\beta_{55}\beta_{33}}
%- \frac{\beta_{54} \beta_{42}}{\beta_{55}\beta_{44}}+
%\frac{\beta_{54}\beta_{43} \beta_{32}}{\beta_{55}\beta_{44}\beta_{33}},
\qquad \dots
\end{split}
\ee
More generally,  we find
\be
a_{i} = \sum_{r=2}^{i-1}\, \sum_{2=j_1<j_2<\dots<j_r=i} (-1)^r\,
\prod_{\ell=1}^{r-1} \frac{\beta_{j_{\ell+1}, j_{\ell}}}{\beta_{j_{\ell+1},j_{\ell+1}}}
\label{expr-a}
\ee
for all $i\geq 3$. Using these relations, we can finally evaluate the relevant sign,
which determines the stability of the innermost bound state in terms of $z_\ell=z$,\footnote{Note that,
upon multiplying $a_i$ by the largest denominator $\prod_{j=3}^m \beta_{jj}$, the argument
of the sign becomes a homogenous polynomial of degree $n-2$ in the $\beta_{ij}$'s.}
\be
\sign\,\Im\bigl[ Z_{\alpha_1}\bZ_{\alpha_2}(z_2)\bigr]=-S_{2}(a_1,a_2) = - S_{\ell}(\{a_i\})=
-\sign\( \sum_{i=1}^\ell a_{i} \, \Im\bigl[ Z_{\alpha_i}\bZ_\gamma(z)\bigr] \).
\label{sign-zinf}
\ee

Since the charges
$\alpha_1,\dots, \alpha_\ell$ are  linear combinations of the charges $\gamma_i$ attached
to the leaves of the full tree from which the branch is extracted, the argument of the sign
can be written as a linear combination
$\sum_{i=1}^n a_{vi}  c_i$, where $c_i$ are defined in \eqref{notaion-c}, and $a_{vi}$
form a $n$-dimensional real vector associated to the vertex $v$ and constructed out of the coefficients $a_i$ found above.
In terms of these vectors, the weight factor \eqref{kappaT} is rewritten as
\be
\label{kappaTa}
\Delta(T)=\frac{1}{2^{n-1}} \prod_{v\in V_T}
\[-\sign \( \sum_{i=1}^n a_{vi} c_i\)+ \sign (\gamma_{\Lv{v}\Rv{v}}) \] ,
\ee
which only depends on the asymptotic moduli $z$ through the variables $c_i=c_i(z)$
defined in \eqref{notaion-c}. This result gives a straightforward, algorithmic way of
evaluating the tree index \eqref{defgtree} in terms of asymptotic data without having
to integrate the flow along each edge, provided such a flow exists. Furthermore,
it shows that the tree index
only depends on the charges through the DSZ matrix $\gamma_{ij}$
and on the moduli $z$ through the vector $c_i$. The fact that the factor
$\Delta(T)$ comes multiplied by $\kappa(T)$ in the definition \eqref{defgtree}
also shows that the tree index  is a continuous function of $\gamma_{ij}$ for generic values of the $c_i$'s,
justifying the prescription given in footnote \ref{foogen} for dealing with the case
of non-generic DSZ matrix.

Using the result \eqref{kappaTa}, in appendix \ref{ap-vectors}
we show that the tree index is also continuous across
the `fake walls' associated to the stability of intermediate bound states, despite the fact
that contributions of  distinct flow trees may jump on these loci.
This is done with help of certain `flow vectors' constructed from the coefficients $a_{vi}$.
The recursive relation \eqref{rec-a} allows to prove various symmetry properties of these vectors,
which in turn reduce the proof of cancellation of the fake discontinuities to the cyclic property \eqref{kappa123}
of $\kappa$-factors. We defer this proof to the appendix because in the next subsection
we shall provide a new representation for the BPS index which is manifestly smooth across all fake walls.

\subsection{New formulae for the tree index}
\label{subsec-newrepr}

We have seen at the end of subsection \ref{subsec-ex3} that for $n=3$, it was possible
to rewrite the tree index in such way that it was manifestly constant away from the physical
walls of marginal stability corresponding
to splittings of the total charge $\gamma$ into $\gamma_L+\gamma_R$.
It is natural to expect that this should also be possible for any number of centers.
Unfortunately, it is not straightforward to extend the tricks used for $n=3$, due to the weights
$\kappa(T)$ appearing explicitly in \eqref{defgtree}. While this can be done with some effort
for $n=4$, this procedure becomes extremely cumbersome.

To overcome this problem, we shall first rewrite the tree index in a form which does not contain the
$\kappa$-factors anymore
and is expressed in terms of another, $y$-independent `partial index'. This form
was inspired by the representation \eqref{gCFy} of the Coulomb index,
which will be the subject of section \ref{sec-Coulomb}.

\begin{proposition}
The tree index defined in \eqref{defgtree} can be decomposed as
\be
\gtr(\{\gamma_i\}, z,y)
= \frac{(-1)^{n-1+\sum_{i<j} \gamma_{ij} }}{(y-y^{-1})^{n-1}} \,\sum_{\sigma\in S_n}\,
\Ftr{n}(\{\gamma_{\sigma(i)}\},z)\,
y^{\sum_{i<j} \gamma_{\sigma(i)\sigma(j)}},
\label{gtF}
\ee
where $\sigma$ runs over all permutations of $\{1,\dots, n\}$ and the `partial tree index' $\Ftr{n}(\{\gamma_i\},z)$
is defined by\footnote{The subscript $n$ on $\Ftr{n}$ is redundant
since it equals the cardinality of the set $\{\gamma_i\}$. Nevertheless, we find it useful to display it
and instead sometimes omit the arguments, the set of vectors $\gamma_i$ and moduli $z$.}
\be
\label{defFpl}
\Ftr{n} (\{\gamma_{i}\},z) \equiv
\sum_{T\in \cT_n^{\rm pl}(\{\gamma_i\},z)} \Delta(T),
\ee
where the sum runs over the set $\cT_n^{\rm pl}(\{\gamma_i\},z)$ of planar flow trees with $n$
leaves carrying ordered charges $\gamma_1,\dots, \gamma_n$.
\end{proposition}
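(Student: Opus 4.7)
The plan is to verify \eqref{gtF} by starting from its right-hand side and repackaging it into the defining sum \eqref{defgtree} of $\gtr$. First I would substitute \eqref{defFpl} and observe that the double sum over $\sigma\in S_n$ and trees $T\in\cT_n^{\rm pl}(\{\gamma_{\sigma(i)}\},z)$ reorganizes, without duplication, into a single sum over all planar flow trees whose leaves carry the charges $\gamma_1,\dots,\gamma_n$ in an arbitrary planar order (which determines $\sigma$). Each unordered tree $T$ contributing to \eqref{defgtree} admits exactly $2^{n-1}$ planar refinements, obtained by swapping $\Lv{v}\leftrightarrow\Rv{v}$ at each of its $n-1$ internal vertices; I parametrize these by signs $\epsilon_v\in\{\pm1\}$.

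The crucial combinatorial input is that, for any planar tree $T^p$ with planar leaf order $\pi$,
\be
\sum_{v\in V_{T^p}}\gamma_{\Lv{v}\Rv{v}} \,=\, \sum_{p<q}\gamma_{\pi(p)\pi(q)},
\ee
since each pair of leaves contributes the pairing $\gamma_{\pi(p)\pi(q)}$ to exactly one vertex, namely their least common ancestor, with the left-descended leaf appearing first in the planar order. This identifies the $y$-weight $y^{\sum_{i<j}\gamma_{\sigma(i)\sigma(j)}}$ in \eqref{gtF} with $y^{\sum_v\gamma_{\Lv{v}\Rv{v}}}$. Under the local orientation flip at vertex $v$, both $\gamma_{\Lv{v}\Rv{v}}$ and $\Im[Z_{\gamma_{\Lv{v}}}\bar Z_{\gamma_{\Rv{v}}}]$ flip sign, so the bracket in \eqref{kappaT} flips sign and $\Delta(T^p(\epsilon))=\bigl(\prod_v\epsilon_v\bigr)\Delta(T)$ for any fixed reference orientation of $T$. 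Summing over $\epsilon\in\{\pm1\}^{n-1}$ therefore collapses the sum over planar refinements of $T$ to
\be
\sum_{\epsilon}\Delta(T^p(\epsilon))\,y^{\sum_v\gamma_{\Lv{v}\Rv{v}}^{T^p(\epsilon)}}
\,=\,\Delta(T)\prod_{v\in V_T}\Bigl(y^{\gamma_{\Lv{v}\Rv{v}}}-y^{-\gamma_{\Lv{v}\Rv{v}}}\Bigr),
\ee
which is manifestly independent of the chosen reference orientation.

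To conclude, I would use the identity $y^x-y^{-x}=(-1)^x(y-y^{-1})\kappa(x)$ to trade each factor $y^{\gamma_{\Lv{v}\Rv{v}}}-y^{-\gamma_{\Lv{v}\Rv{v}}}$ for $(y-y^{-1})\kappa(\gamma_{\Lv{v}\Rv{v}})$, producing an overall $(y-y^{-1})^{n-1}$ that cancels the denominator in \eqref{gtF} and a global sign $(-1)^{\sum_v\gamma_{\Lv{v}\Rv{v}}}$. Because $(-1)^{-x}=(-1)^x$, this parity is independent of the planar orientation and coincides with that of $\sum_{i<j}\gamma_{ij}$ regardless of $\pi$, so it cancels the analogous sign in the prefactor of \eqref{gtF}. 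What remains is $(-1)^{n-1}\sum_T\Delta(T)\prod_v\kappa(\gamma_{\Lv{v}\Rv{v}})=\sum_T\Delta(T)\kappa(T)=\gtr$, which is exactly the definition \eqref{defgtree}.

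The main obstacle is the sign bookkeeping in the passage from planar to unordered trees: one must check that the factor $\prod_v\epsilon_v$ produced by the antisymmetry of $\Delta(T^p)$ under local swaps matches exactly the antisymmetry of each $y^a-y^{-a}$ factor, so that the sum over orientations is reference-independent, and that the two parity signs $(-1)^{\sum_v\gamma_{\Lv{v}\Rv{v}}}$ and $(-1)^{\sum_{i<j}\gamma_{ij}}$ genuinely agree modulo $2$. Everything else is a direct rearrangement of sums.
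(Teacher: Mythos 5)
Your proof is correct and is in essence the same argument as the paper's: both rest on the bijection between planar trees and pairs (unordered tree, orientation $\epsilon\in\{\pm1\}^{n-1}$), the combinatorial fact that $\sum_{v}\gamma_{\Lv{v}\Rv{v}}=\sum_{p<q}\gamma_{\pi(p)\pi(q)}$ for the planar leaf order $\pi$, and the expansion/contraction of each $\kappa$-factor into $y^{\pm\gamma_{\Lv{v}\Rv{v}}}$. You simply run the manipulation in the reverse direction (from the planar sum back to $\gtr$) and spell out more explicitly the sign bookkeeping — in particular the antisymmetry of the bracket in $\Delta(T)$ under local $\Lv{v}\leftrightarrow\Rv{v}$ swaps — that the paper's terser proof asserts.
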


Note that it is crucial
to consider the {\it refined} index at $y\neq 1$ for getting the representation \eqref{gtF}, since
each term in the sum over permutations is singular at $y=1$. Nevertheless, the  sum must be
smooth in this limit, since the tree
index $\gtr$ (unlike the Coulomb index $\gref$)
is a symmetric Laurent polynomial.

\begin{proof}
To see the origin of the representation \eqref{gtF}, let us expand
all factors $\kappa(\gamma_{ij})$ in \eqref{kappaT0} using the definition of $\kappa(x)$ in \eqref{kappadef}.
Then each of the $b_n$ flow trees produces, up to a factor of $\frac{(-1)^{\sum_{i<j} \gamma_{ij}}}{(y-1/y)^{n-1}}$,
a sum of $2^{n-1}$ monomials
of the form $y^{\sum_{i<j} \epsilon_{ij} \gamma_{ij}}$ where $\epsilon_{ij}=\pm 1$. For each
assignment of $\epsilon_{ij}$, there exists a unique permutation $\sigma\in S_n$ such that
$\sum_{i<j} \epsilon_{ij} \gamma_{ij}=\sum_{i<j} \gamma_{\sigma(i)\sigma(j)}$. For a given
permutation $\sigma$,  all trees $T$ contributing a term proportional
to $y^{\sum_{i<j} \gamma_{\sigma(i)\sigma(j)}}$ are {\it planar} flow trees 
ordered with respect to $\sigma$, i.e. trees ending on ordered points
$z_{\sigma(1)},\dots, z_{\sigma(n)}$  which can be drawn on the upper half plane without crossings.\footnote{
For any $\sigma$, the number of such planar trees is $\tilde b_n=C_{n-1}$,
where $C_n=\frac{(2n)!}{(n+1)(n!)^2}$ is the $n$-th Catalan number ($C_n=1,1,2,5,14,\dots$ for $n=0,1,2,3,4,\dots$).
The Catalan numbers satisfy the recursion relation $C_{n+1} = \sum_{i=0}^{n} C_i\, C_{n-i}$,
so that $\tilde b_{n}=\sum_{\ell=1}^{n-1} \tilde b_\ell\, \tilde b_{n-\ell}$
corresponding to the obvious ways of constructing a planar tree with $n$ leaves
by merging two planar trees with $\ell$ and $n-\ell$ leaves, respectively. \label{foot-planar}}
As for the usual flow trees, they are labelled by charges, with the charges $\gamma_{\sigma(i)}$ assigned to the end-points,
and contribute the weight $\Delta(T)$ given in \eqref{kappaT} or \eqref{kappaTa}.
\end{proof}

Since planar trees with $n$ end-points may be generated by merging two planar trees with
$\ell$ and $n-\ell$ leaves, with $\ell$ running from 1 to $n-1$  (see footnote \ref{foot-planar}),
it is easy to see that
the  partial index satisfies the following iterative equation\footnote{Throughout
this section we assume that the second condition
in \eqref{condtree} is automatically satisfied for the range of moduli $z$ of interest.}
\be
\Ftr{n}(\{\gamma_i\},z)=\hf\sum_{\ell=1}^{n-1} \bigl( \sign(\cs_\ell)-\sign (\Gamma_{n\ell})\bigr)\,
\Ftr{\ell}(\{\gamma_i\}_{i=1}^\ell,z_\ell)\,
\Ftr{n-\ell}(\{\gamma_i\}_{i=\ell+1}^n,z_\ell),
\label{inductFn}
\ee
where we defined
\be
\cs_k=\sum_{i=1}^k c_i(z),
\qquad
\Gamma_{k\ell}=\sum_{i=1}^k\sum_{j=1}^\ell \gamma_{ij},
\qquad
\beta_{k\ell}=\sum_{i=1}^k \gamma_{i\ell},
\ee
while $z_\ell$ is the value of the moduli where the attractor flow crosses
the wall for the decay $\gamma\to(\gamma_{1+\cdots +\ell},\gamma_{(\ell+1)+\cdots +n})$.
According to \eqref{cistar}, this value corresponds to the parameters
\be
c_i(z_\ell)= c_i(z) -\frac{\beta_{ni}}{\Gamma_{n\ell}}\, \cs_{\ell}.
\label{flowc}
\ee
While Eq. \eqref{inductFn} allows for a very efficient evaluation of $\Ftr{n}$,
it is also the starting point for obtaining new representations for the partial index
which satisfy the properties stated at the beginning of this subsection.

To formulate our results, let us define the following quantities
\be
\Fwl_n(\{\gamma_i,c_i\})=\frac{1}{2^{n-1}}\prod_{i=1}^{n-1}\sgn(\cs_i),
\quad
\tFwl_n(\{\gamma_i,c_i\})=\frac{1}{2^{n-1}}\prod_{i=1}^{n-1}\bigl(\sign(\cs_i) +\sign(\gamma_{i,i+1})\bigr),
\label{Fwlstar-a}
\ee
and their specialization at $c_i=\beta_{ni}$, which we denote by
\be
\label{Fwlstar}
\Fs_{n}(\{\gamma_i\})=\frac{1}{2^{n-1}}\prod_{i=1}^{n-1}\sgn(\Gamma_{ni}),
\quad
\tFs_{n}(\{\gamma_i\})=\frac{1}{2^{n-1}}
\prod_{i=1}^{n-1}\bigl(\sign(\Gamma_{ni}) +\sign(\gamma_{i,i+1})\bigr).
\ee
For $n=1$ all these objects are understood to be equal to one. For $n=2$ we have the vanishing property $\tFs_{2}=0$.
In terms of these notations one can give two iterative representations for the partial index:

\begin{proposition}\label{theoremF}
The partial tree index satisfies the following two recursion relations
\be
\begin{split}
\Ftr{n}(\{\gamma_i\},z)
=&\,\Fwl_n(\{\gamma_i,c_i\})- \sum_{n_1+\cdots +n_m= n\atop n_k\ge 1, \ m<n}
\Ftr{m}(\{\gamma'_k\},z)
\prod_{k=1}^m \Fs_{n_k}(\gamma_{j_{k-1}+1},\dots,\gamma_{j_{k}})
\\
=&\,\tFwl_n(\{\gamma_i,c_i\})- \sum_{n_1+\cdots +n_m= n\atop n_k\ge 1, \ m<n-1}
\Ftr{m}(\{\gamma'_k\},z)
\prod_{k=1}^m \tFs_{n_k}(\gamma_{j_{k-1}+1},\dots,\gamma_{j_{k}}),
\end{split}
\label{F-ansatz}
\ee
where the sum runs over ordered partitions of $n$, with largest size $n-1$ in the first relation,
or $n-2$ in the second.\footnote{The restriction $m<n-1$ in the sum in the second line can be relaxed to $m<n$.
The stronger inequality is then automatically fulfilled due to $\tFs_{2}=0$.}
For $k=1,\dots,m$, where $m$ is the number of parts, we defined
\be
j_0=0,
\qquad
j_k=n_1+\cdots + n_k,
\qquad
\gamma'_k=\gamma_{j_{k-1}+1}+\cdots +\gamma_{j_{k}}.
\label{groupindex}
\ee
\end{proposition}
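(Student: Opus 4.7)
The plan is to establish both recursion relations by strong induction on $n$. The base case $n=1$ is immediate, since all the quantities involved reduce to $1$. For the inductive step, I find it convenient to rearrange the first claim as the equivalent identity
\begin{equation*}
\Fwl_n(\{\gamma_i, c_i\}) = \sum_{n_1+\cdots+n_m = n} \Ftr{m}(\{\gamma'_k\}, z)\,\prod_{k=1}^m \Fs_{n_k}(\gamma_{j_{k-1}+1},\dots,\gamma_{j_k}),
\end{equation*}
where the sum now ranges over all ordered partitions of $n$; the finest partition $m = n$ contributes $\Ftr{n}(\{\gamma_i\}, z)$ itself, which is what we wish to solve for, while the coarsest $m = 1$ contributes $\Fs_n(\gamma_1, \ldots, \gamma_n)$. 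The second claim is rearranged analogously in terms of $\tFwl_n$ and $\tFs_{n_k}$.

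My strategy is to show that the right-hand side of the rearranged identity, viewed as a function of $n$ with the $\Ftr{m}$ for $m < n$ provided by the inductive hypothesis, satisfies the very same iterative equation \eqref{inductFn} as $\Ftr{n}$ together with the initial value $1$ at $n=1$. Since these two conditions determine the sequence uniquely, the identity follows. Concretely, I would apply \eqref{inductFn} to each $\Ftr{m}$ factor appearing in the partition sum, producing a further splitting at the root of the coarser tree; then I reorganize the resulting double sum by the position $\ell$ of the first split of the full charge vector and match it, term by term, with what \eqref{inductFn} would produce when applied directly to the RHS.

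The key computational ingredient is the attractor flow relation \eqref{flowc}, which tells us that the partial sums transform as $\cs_j(z_\ell) = \cs_j - (\Gamma_{nj}/\Gamma_{n\ell})\,\cs_\ell$. At $j = \ell$ this vanishes by the defining wall-crossing condition, so the corresponding sign factor in $\Fwl_\ell(\{\gamma_i, c_i(z_\ell)\}_{i=1}^\ell)$ drops out; for $j < \ell$ the shift aligns precisely with the specialization $c_i \mapsto \beta_{ni}$ that converts $\Fwl$ into $\Fs$. This is exactly what allows the partition sum on the RHS to close under the recursion, with $\Fs_{n_k}$ factors for the split-off groups of charges and $\Fwl_\ell$ factors for the surviving ones assembling into the inductive identity at level $\ell$ and $n-\ell$.

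The main obstacle will be the combinatorial bookkeeping: a partition of $n$ into $m$ parts, refined by the choice of first split position $\ell$, can be obtained in several ways by coarsening finer partitions of $\{1,\ldots,\ell\}$ and $\{\ell+1,\ldots,n\}$, and one must verify that all these contributions combine with the correct signs. I expect this reorganization to rely on the elementary sign identities collected in appendix \ref{ap-signs}, used to convert products of $\bigl[\sgn(\cs_\ell) - \sgn(\Gamma_{n\ell})\bigr]$ factors arising from \eqref{inductFn} into sums of products of $\sgn(\cs_j)$ and $\sgn(\Gamma_{n_k,i})$ that match the factorized structure of $\Fwl_n$ and of $\prod_k \Fs_{n_k}$. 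Once the first recursion is established, the second one follows by the analogous argument, the only modification being the replacement of $\Fwl$, $\Fs$ by $\tFwl$, $\tFs$ throughout; the vanishing $\tFs_2 = 0$ then automatically eliminates partitions with a part of size $2$ adjacent to the final slot, giving the stronger restriction $m < n-1$ in the sum.
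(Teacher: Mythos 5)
Your strategy---induct on $n$, feed the candidate formula through the recursion \eqref{inductFn}, and use the flow relation \eqref{flowc} together with sign identities---is the same route the paper takes, presented in the dual form of checking that the candidate satisfies \eqref{inductFn} rather than deriving it from that recursion. The gap is in the one concrete mechanism you describe. You claim that ``for $j < \ell$ the shift aligns precisely with the specialization $c_i \mapsto \beta_{ni}$ that converts $\Fwl$ into $\Fs$,'' but this is false: by \eqref{flowc} one has $\cs_j(z_\ell) = \Gamma_{nj}\,(x_j - x_\ell)$ with $x_k = \cs_k/\Gamma_{nk}$, so $\sgn\cs_j(z_\ell) = \sgn(\Gamma_{nj})\,\sgn(x_j - x_\ell)$, which is \emph{not} $\sgn(\Gamma_{nj})$ in general. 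The shifted product $\Fwl_\ell\,\Fwl_{n-\ell}$ at $z_\ell$ is therefore neither $\Fwl$ nor $\Fs$ of anything, and an attempt to match factors termwise in $\ell$ along the lines you sketch would fail.

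What actually makes the recursion close is a cancellation \emph{across} the sum over $\ell$. After inserting the flow relation, the $\Fwl_\ell\,\Fwl_{n-\ell}$ contribution becomes $\frac{1}{2^{n-1}}\prod_i\sgn(\Gamma_{ni})\sum_\ell(\sgn x_\ell - 1)\prod_{k\neq\ell}\sgn(x_k-x_\ell)$, and only the global identity \eqref{signident3} collapses this into $\Fwl_n - \Fs_n$. The $\Fs_{n_k}$ factors appearing in the partition sum are not produced by the shift either: they are carried along unchanged inside the inductive hypothesis applied to $\Ftr{\ell}$ and $\Ftr{n-\ell}$, and the reassembly into $\Fsp_n - \Fs_n$ is the interchange-of-sums step \eqref{inductF2}. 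The second line of the proposition requires the parallel identity \eqref{mainsign}. Without invoking these specific identities, the ``combinatorial bookkeeping'' you defer has no foothold; this is where the real content of the proof lives, and the local picture of the shift that you propose points in the wrong direction.
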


Note that the two representations differ only by a redistribution of the function $\Fs_{2}$:
in the first one it contributes only to the sum over splittings of the set of charges into subsets,
where in particular it arises as a factor in the last product,
whereas in the second its effect is incorporated by the terms proportional
to $\sign(\gamma_{i,i+1})$ in \eqref{Fwlstar}.
The two representations may be useful for different purposes.
For instance, the second representation is suitable
for the proof of convergence of the BPS partition function, whereas the first representation
is more convenient for analyzing its modular properties \cite{ap-to-appear},
this is why we give here both of them. Since their proofs are essentially identical,
we only  present the proof for the first representation.

\begin{proof}
Our proof will be inductive. It starts from $n=2$ in which case the formula
\eqref{F-ansatz} coincides with the definition of $F_2$ (see \eqref{resF2}).
Let us now assume that it holds up to $n-1$.
Our aim is to show that the iterative equation \eqref{inductFn} reduces to \eqref{F-ansatz}.
Defining $x_k=\cs_k/\Gamma_{nk}$, we find that
\be
\sum_{i=1}^k c_i(z_\ell)=\Gamma_{nk}(x_k-x_\ell),
\qquad
\sum_{i=\ell+1}^k c_i(z_\ell)=\Gamma_{nk}(x_k-x_\ell).
\ee
Substituting \eqref{F-ansatz} and \eqref{Fwlstar-a} into the r.h.s. of \eqref{inductFn}
and denoting the second term in \eqref{F-ansatz}
by $\Fsp_n(\{\gamma_i\},z)$, one therefore gets
\be
\begin{split}
\Ftr{n}=&\,
\frac{1}{2^{n-1}}
\(\prod_{i=1}^n \sgn(\Gamma_{ni})\) \sum_{\ell=1}^{n-1} \bigl( \sgn(x_\ell)-1\bigr) \prod_{k=1\atop k\ne \ell}^{n-1}
\sgn(x_k-x_\ell)
\\
&\,
-\hf\sum_{\ell=1}^{n-1} \bigl( \sgn(\cs_\ell)-\sgn (\Gamma_{n\ell})\bigr)\[
\Fsp_\ell\Fwl_{n-\ell}+\Fwl_\ell\Fsp_{n-\ell}-\Fsp_\ell\Fsp_{n-\ell}
\]_{z\to z_\ell}
\\
=&\,
\frac{1}{2^{n-1}}\(\prod_{i=1}^n \sgn(\cs_i)-\prod_{i=1}^n \sgn(\Gamma_{ni})\),
\\
&\,
-\hf\sum_{\ell=1}^{n-1} \bigl( \sgn(\cs_\ell)-\sgn (\Gamma_{n\ell})\bigr)\[
\Fsp_\ell \Ftr{n-\ell}+\Ftr{\ell} \Fsp_{n-\ell}+\Fsp_\ell\Fsp_{n-\ell}
\]_{z\to z_\ell},
\end{split}
\label{deriv-iterFn}
\ee
where we have used the sign identity \eqref{signident3}.
In the first contribution we immediately recognize the difference $\Fwl_n-\Fs_n$,
whereas in the second contribution all terms can be combined into one sum over splittings
by adding the condition $\ell\in \{j_k\}$.
Denote the index $k$ for which this happens by $k_0$.
Then this contribution reads
\be
\begin{split}
&\,
-\hf\sum_{\ell=1}^{n-1} \bigl( \sgn(\cs_\ell)-\sgn (\Gamma_{n\ell})\bigr)
\sum_{n_1+\cdots +n_m= n\atop n_k\ge 1, \ m<n, \ \ell\in \{j_k\}} \Ftr{k_0}(z_\ell)
\Ftr{m-k_0}(z_\ell)\prod_{k=1}^m \Fs_{n_k}
\\
=&\,
-\sum_{n_1+\cdots +n_m= n\atop n_k\ge 1, \ 1<m<n}
\[\hf\sum_{k_0=1}^{m-1} \Bigl( \sgn(\cs_{j_{k_0}})-\sgn (\Gamma_{n j_{k_0}})\Bigr)
\Ftr{k_0}(z_\ell) \Ftr{m-k_0}(z_\ell)\]
\prod_{k=1}^m \Fs_{n_k}.
\end{split}
\label{inductF2}
\ee
Here we interchanged the two sums which allows to drop the condition $\ell\in \{j_k\}$, but adds the requirement $m>1$
(following from $\ell\in \{j_k\}$ in the previous representation).
In square brackets one recognizes the r.h.s. of \eqref{inductFn} with $n$ replaced by $m<n$.
Hence, it is subject to the induction hypothesis
which allows to replace this expression by $\Ftr{m}(\{\gamma'_k\},z)$ and shows that \eqref{inductF2} is equal to $-(\Fsp_n-\Fs_n)$
where the second term is due to the condition $m>1$ in the sum over splittings.
Combining the two contributions, one finds $\Fwl_n-\Fsp_n$ which is exactly the first line in the required formula \eqref{F-ansatz}.
The proof of the second line is similar using instead the sign identity
\eqref{mainsign}.
\end{proof}

It is easy to verify the consistency of the representation \eqref{F-ansatz}
with the primitive wall-crossing formula \eqref{primwc}. Since this essentially amounts
to performing the manipulations used in the above proof in reverse, we shall omit
this check, which is at any rate guaranteed by the previous results.
On the other hand, for convenience of the reader, we have collected in appendix \ref{ap-pind}
the explicit expressions for the partial indices $\Ftr{n}$ in the form \eqref{F-ansatz} up to $n=4$.
These expressions have been also checked by a direct recombination of signs in the original definition \eqref{defFpl}.

The representation \eqref{F-ansatz} solves the problems about the sum over attractor flow trees
formulated in section \ref{subsec-afc}. However, it raises a new question: how can one extract
the value of the tree index $\gtr$ at $y=1$? Since by construction $\gtr$ is a Laurent polynomial
in $y$, the limit $y\to 1$ is smooth, and therefore can be obtained by
applying l'H\^opital's rule to Eq. \eqref{gtF},
i.e. acting on the numerator and on the denominator by $( y\partial_y)^{n-1}$ before setting $y=1$.
This amounts to replacing the factor $\frac{y^{\sum_{i<j} \gamma_{\sigma(i)\sigma(j)}}}{(y-y^{-1})^{n-1}}$
inside the sum by $\frac{\[\sum_{i<j} \gamma_{\sigma(i)\sigma(j)}\]^{n-1}}{2^{n-1} (n-1)!}$.
Although this  answers the question in principle, it may be desirable to have an alternative
representation of the refined tree index where the monomials in $y$ are all combined into
 products of $\kappa$-factors. For $n=3$, the result \eqref{Z3charge} provides
 such a representation. Ater recognizing  the expression in the square brackets as the
 partial tree index $\Ftr{3}$ in \eqref{F3res}, it is natural to conjecture the following simple formula for any $n$,
\begin{conj}
The tree index defined in \eqref{defgtree} can be expressed  as
\be
\gtr(\{\gamma_i\},z,y) =
(-1)^{n-1}\, (n-1)!\,
\Sym\Bigl\{\Ftr{n}(\{\gamma_{i}\},z)\prod_{k=2}^{n}\kappa(\beta_{kk})\Bigr\},
\label{gtreen}
\ee
where $\beta_{kk}=\sum_{i=1}^{k-1} \gamma_{ik}$.
\end{conj}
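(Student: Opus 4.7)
My plan is to reduce the conjecture to a single combinatorial identity among the partial tree indices $\Ftr{n}(\{\gamma_{\sigma(i)}\},z)$ at different orderings, and then to establish that identity inductively using the recursion \eqref{F-ansatz} together with a reversal symmetry that follows from it.

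First, I would substitute $\kappa(x)=(-1)^x(y^x-y^{-x})/(y-y^{-1})$ into the RHS of \eqref{gtreen}, use $\sum_{k=2}^n\beta_{kk}(\sigma)=\mu(\sigma)\equiv\sum_{i<j}\gamma_{\sigma(i)\sigma(j)}$ together with the parity identity $(-1)^{\mu(\sigma)}=(-1)^{\mu}$ (valid since $\mu(\sigma)-\mu$ is always twice a sum of $\gamma_{ij}$'s), and then compare with \eqref{gtF}. This reduces the conjecture to the algebraic identity
\begin{equation}
\sum_{\sigma\in S_n}F_\sigma\prod_{k=2}^n\bigl(y^{\beta_{kk}(\sigma)}-y^{-\beta_{kk}(\sigma)}\bigr)=n\sum_{\sigma\in S_n}F_\sigma\, y^{\mu(\sigma)},
\label{keyredplan}
\end{equation}
where $F_\sigma=\Ftr{n}(\{\gamma_{\sigma(i)}\},z)$. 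Expanding the product on the LHS, each pair $(\sigma,\epsilon)\in S_n\times\{\pm 1\}^{n-1}$ contributes a monomial whose exponent $\sum_k\epsilon_k\beta_{kk}(\sigma)$ equals $\mu(\tau)$ for a unique $\tau=\phi(\sigma,\epsilon)$, obtained by successively moving each $\sigma(k)$ with $\epsilon_k=-1$ to a position preceding $\sigma(1),\ldots,\sigma(k-1)$. Matching coefficients of $y^{\mu(\tau)}$ and using charge relabeling to fix $\tau=\mathrm{id}$, \eqref{keyredplan} becomes equivalent to the purely combinatorial identity
\begin{equation}
\sum_{(\sigma,\epsilon)\in\phi^{-1}(\mathrm{id})}\Bigl(\prod_{k=2}^n\epsilon_k\Bigr)\Ftr{n}(\{\gamma_{\sigma(i)}\},z)=n\,\Ftr{n}(\{\gamma_i\},z),
\label{keyred2plan}
\end{equation}
which for $n=2$ is just the antisymmetry $\Ftr{2}(\gamma_1,\gamma_2,z)+\Ftr{2}(\gamma_2,\gamma_1,z)=0$, immediate from \eqref{F-ansatz} using $c_1+c_2=0$.

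For the inductive step I would first establish the reversal symmetry $\Ftr{n}(\gamma_n,\ldots,\gamma_1,z)=(-1)^{n-1}\Ftr{n}(\gamma_1,\ldots,\gamma_n,z)$, which follows from \eqref{F-ansatz} after verifying $\Fwl_n(\mathrm{rev})=(-1)^{n-1}\Fwl_n$ and the analogous identities for $\Fs_n$ and $\tFs_n$, using $\cs_i^{\mathrm{rev}}=-\cs_{n-i}$ and $\Gamma_{ni}^{\mathrm{rev}}=-\Gamma_{n,n-i}$. The two extremal pairs $(\sigma=\mathrm{id},\epsilon=(+,\ldots,+))$ and $(\sigma=\mathrm{rev},\epsilon=(-,\ldots,-))$ in \eqref{keyred2plan} each contribute $\Ftr{n}(\{\gamma_i\},z)$, together producing $2\Ftr{n}(\{\gamma_i\},z)$; the remaining $2^{n-1}-2$ terms must then sum to $(n-2)\Ftr{n}(\{\gamma_i\},z)$. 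As a consistency check, for $n=3$ this reduces to the single identity $\Ftr{3}(\gamma_1,\gamma_2,\gamma_3,z)+\Ftr{3}(\gamma_2,\gamma_1,\gamma_3,z)+\Ftr{3}(\gamma_2,\gamma_3,\gamma_1,z)=0$, which can be verified using the explicit formula implied by \eqref{Z3charge} together with the elementary sign identities $T_1T_2+T_2T_3+T_1T_3=-1$ and $\sign(c_1)\sign(c_2)+\sign(c_2)\sign(c_3)+\sign(c_1)\sign(c_3)=-1$, valid whenever the three signed quantities sum to zero. For general $n$, the remaining contributions would be attacked by substituting the recursive decomposition \eqref{F-ansatz} into each $F_\sigma$ on the LHS and invoking the inductive hypothesis on the resulting factors $\Ftr{m}$ with $m<n$.

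The main obstacle is this last step: exhibiting a precise combinatorial bijection between the fibers of $\phi$ and the splittings appearing in \eqref{F-ansatz}, and showing that after applying the inductive hypothesis the resulting sum telescopes to exactly $(n-2)\Ftr{n}(\{\gamma_i\},z)$. Finding this bijection -- presumably encoded in a sign identity generalizing \eqref{kappa123} and those used in the $n=3$ verification above -- is the non-trivial combinatorial content of \eqref{gtreen} and is almost certainly why the formula is stated only as a conjecture. As a complementary route, one could first establish the identity in the restricted setting of quivers without oriented loops, where $\gtr=\gref$ (cf.\ Section \ref{sec-gC}); an independent Higgs-branch computation of $\gref$ would then match \eqref{gtreen} in that setting, and the general case could potentially be deduced by analytic continuation in the DSZ pairings $\gamma_{ij}$, since both sides depend on these integers only through signs of explicit linear combinations whose discontinuity loci coincide.
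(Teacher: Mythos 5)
Your algebraic reduction is correct: substituting \eqref{kappadef} into the right-hand side of \eqref{gtreen}, using $\sum_{k=2}^n\beta_{kk}(\sigma)=\mu(\sigma)$ together with the even parity of $\mu(\sigma)-\mu$, and comparing with \eqref{gtF} does reduce the conjecture to
\be
\sum_{\sigma\in S_n}\Ftr{n}(\{\gamma_{\sigma(i)}\},z)\prod_{k=2}^n\bigl(y^{\beta_{kk}(\sigma)}-y^{-\beta_{kk}(\sigma)}\bigr)
= n\sum_{\sigma\in S_n}\Ftr{n}(\{\gamma_{\sigma(i)}\},z)\, y^{\sum_{i<j}\gamma_{\sigma(i)\sigma(j)}},
\ee
and the further reduction to a per-fiber identity by matching coefficients of $y^{\mu(\tau)}$ (valid for a generic DSZ matrix, per footnote \ref{foogen}) and relabeling to $\tau=\mathrm{id}$ is sound. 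The $n=2$ base case, the reversal symmetry $\Ftr{n}(\gamma_n,\ldots,\gamma_1,z)=(-1)^{n-1}\Ftr{n}(\gamma_1,\ldots,\gamma_n,z)$, and the $n=3$ verification all hold.

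However, be aware that the paper does not prove this statement: it is posed explicitly as a conjecture, and the authors report only hand-verification for $n=4$ and Mathematica checks for higher $n$, remarking that ``already the $n=4$ case is quite involved.'' Your reduction to the fiber identity is thus a genuine organizing advance over what the paper provides, but the gap you candidly identify --- exhibiting a bijection between the $2^{n-1}$-element fibers of your map $\phi$ and the splittings in the recursion \eqref{F-ansatz}, and proving the resulting telescoping --- is the true missing ingredient and is precisely why the result was left as a conjecture. Your alternative route via quivers without loops and continuation in $\gamma_{ij}$ faces essentially the same obstacle in another guise: since both sides of \eqref{gtreen} are only locally constant in the $\gamma_{ij}$'s, one would have to match all discontinuity walls and jumps across the loop/no-loop boundary, which again reduces to a combinatorial identity of the same kind.
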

We have checked this conjecture by hand for $n=4$
and on Mathematica for higher $n$.
Since already the $n=4$ case is quite involved, we do not provide here these computations.

\section{Attractor flows and quivers}
\label{sec-quiver}

In this section, we apply the flow tree formula to the context of quiver quantum mechanics,
which describes the interactions of a set of mutually non-local BPS dyons in gauge theories or string
theories with $\cN=2$ supersymmetry in four dimensions \cite{Denef:2002ru}. Using the well-known
relation between the Higgs branch of this system and the moduli space of stable quiver representations,
we obtain a  formula which expresses the Poincar\'e polynomial of this moduli space,
for arbitrary values of the stability conditions, in terms of Poincar\'e polynomials of quiver moduli space
with lower dimension vectors evaluated at their respective attractor points.

\subsection{A brief review of quiver quantum mechanics}

\lfig{An example of quiver with 4 nodes.}
{quiver}{15cm}{fig-quiver}{-1.cm}

Quiver quantum mechanics is a special class of 0+1 dimensional gauge theories with
four supercharges \cite{Denef:2002ru}. Its field content is encoded in
a $K\times K$ antisymmetric integer matrix $\alpha_{ab}$ and
a vector of positive integers $(N_1,\dots, N_K)$
known as the dimension vector. The model then includes vector multiplets for the
gauge group $G=\prod_{a=1}^K U(N_a)$ and $|\alpha_{ab}|$ chiral multiplets transforming
in the bifundamental representation $(N_a,\bar N_b)$ if $\alpha_{ab}>0$, or its complex
conjugate $(\bar N_a,N_b)$ if $\alpha_{ab}<0$ (note that we do not allow loops from
any node to itself). We shall denote the bosonic component
of these chiral multiplets by $\phi_{ab,A,ss'}$, where $1\leq A \leq|\alpha_{ab}|$,
$1\leq s\leq N_a$, $1\leq s'\leq N_b$. The field content is conveniently represented
by a quiver, i.e. a set of $K$ nodes associated to the $U(N_a)$ gauge groups, and $|\alpha_{ab}|$
arrows going from node $a$ to node $b$ if $\alpha_{ab}>0$, or in the opposite direction if $\alpha_{ab}<0$. The antisymmetric matrix
$\alpha_{ab}$ is the adjacency matrix of the graph formed by the nodes and arrows (see Fig. \ref{fig-quiver}).
When the quiver has oriented loops, the Lagrangian depends on a  superpotential $\cW(\phi)$, which
is a sum of $G$-invariant monomials in the chiral multiplets $\phi_{ab,A,ss'}$.
In addition, the Lagrangian depends
on a real vector $\zeta=(\zeta_1,\dots,\zeta_K)$, whose entries are known as the Fayet-Iliopoulos (FI) parameters and
associated to the $U(1)$ center in each gauge group $U(N_a)$. For the purpose of
counting BPS states, the overall scale of the $\zeta_a$'s is irrelevant, so this vector
can be viewed as a point in real projective space $\IR \IP^K$.

This supersymmetric quantum mechanics describes the interactions of a set of mutually non-local BPS dyons
consisting of $N_1$ dyons of charge $\alpha_1$, $N_2$ dyons of charge $\alpha_2$, etc.,
upon identifying $\alpha_{ab}=\langle \alpha_a,\alpha_b\rangle$, and fixing the FI parameters
as
\be
\label{ciZ}
\zeta_a(z) = \Im\[ Z_{\alpha_a} \bar Z_\gamma(z)\],
\ee
where $\gamma=\sum_{a=1}^K N_a \alpha_a$. Note that for an Abelian quiver ($N_a=1$),
this coincides with the definition of $c_i(z)$ in \eqref{notaion-c}.
The superpotential $\cW(\phi)$ (in case
the quiver has oriented loops) may have a complicated dependence on the moduli,
and we shall assume
that it is generic  in the sense that the Hessian at the critical points has maximal rank.
Moreover, we stress that the quiver description is only valid
in a region where the central charges $Z_{\alpha_a}$ are nearly aligned \cite{Denef:2007vg}.

It will be convenient to introduce the $K$-dimensional charge lattice
$\Lambda$ spanned by the basis vectors $\alpha_a$, and the cone $\Lambda^+$
spanned by linear combinations $\gamma=\sum_{a=1}^K n_a \alpha_a$ with non-negative
integer coefficients
(and not vanishing simultaneously). Note that $\Lambda^+$ admits a natural partial
order, $\gamma\leq \gamma'$ if $n_a\leq n'_a$ for all $1\leq a\leq K$.
For  any charge vector $\gamma=(n_1,\dots, n_K)\in\Lambda^+$, we denote $(\zeta,\gamma)=\sum_{a=1}^K n_a \zeta_a$.
For the special case of the
dimension vector $\gamma=(N_1,\dots, N_K)$, it follows from \eqref{ciZ} that $(\zeta,\gamma)=0$.
Since the only dependence on the
moduli $z$ arises through $\zeta_a(z)$,
we shall often use the symbol $z$ to denote the vector of FI parameters, without necessarily
implying that they originate from a central charge function $Z_{\alpha_a}(z)$ via \eqref{ciZ}.

Semi-classically, the quiver quantum mechanics admits two branches of supersymmetric vacua:
\begin{itemize}
\item the Higgs branch,
where the gauge symmetry is broken to the $U(1)$ center  by the vevs of the chiral multiplet
scalars $\phi_{ab,A,ss'}$;  the supersymmetric vacua are in one-to-one correspondence with
the set of stable orbits of the action of the
complexified gauge group $G_{\mathbb{C}}=\prod_{a=1}^K GL(N_a,\mathbb{C})$ restricted to the
critical locus of the superpotential $\cW(\phi)$, where the stability condition is
determined by the FI parameters. The set $\cM$ of supersymmetric vacua thus coincides with the moduli space
of stable quiver representations widely studied in  mathematics (see e.g. \cite{derksen2005quiver,reineke2008moduli}
for entry points in the vast literature on this subject).

\item the Coulomb branch, where the gauge symmetry is broken to $U(1)^{\sum_{a=1}^K N_a}$
and all chiral multiplets as well as off-diagonal vector multiplets are massive; the space of supersymmetric vacua
is then isomorphic to the phase space $\cM_n(\{\alpha_a^{N_a}\},z)$ which governs multi-centered
BPS solutions in $\cN=2$ supergravity, where $n=\sum_{a=1}^K N_a$ and $\alpha_a^{N_a}$
indicates $N_a$ copies of the vector $\alpha_a$.

\end{itemize}
Quantum mechanically, BPS states on the Higgs branch are harmonic forms on the moduli space
of quiver representations \cite{Denef:2002ru},
while BPS states on the Coulomb branch are harmonic spinors on
$\cM_n(\{\alpha_a^{N_a}\},z)$ \cite{deBoer:2008zn}. The group $SO(3)$ associated to
physical rotations in $\IR^3$ acts
on the  cohomology of the Higgs branch via the Lefschetz action generated by
contraction and wedge product with the K\"ahler form, while it acts
on the cohomology of the Coulomb branch by lifting the Hamiltonian action of \eqref{defJ}
on $\cM_n(\{\alpha_a^{N_a}\},z)$.

For the reasons explained below, the BPS index $\Omega(\gamma,z,y)$ of interest in this set-up
is the refined index of the BPS states on the Higgs branch.
Mathematically, it is defined as the Poincar\'e polynomial
of the moduli space $\cM=\cM(\gamma,\zeta)$ of quiver representations with dimension vector
$\gamma$ and stability conditions $\zeta_a$, rescaled by a factor $(-y)^{-d}$ where
$d$ is the complex dimension of $\cM$:
\be
\label{defpoinca}
\Omega(\gamma,z,y) = \sum_{p=0}^{2d} b_p(\cM)\, (-y)^{p-d} .
\ee
Here, $b_p(\cM)$ are the topological Betti numbers\footnote{It is also possible to define
a two-variable polynomial $\Omega(\gamma,z,y,t)$ which keeps track of the Hodge numbers
$h_{p,q}(\cM)$ and reduces to \eqref{defpoinca} at $t=1$, and to the $\chi_{y^2}$-genus
at $y=t$ \cite[\S 2.3]{Manschot:2012rx}. Here we set $t=1$ for simplicity, but the
flow tree formula has an immediate generalization to $t\neq 1$.}
of $\cM$ and $z$ stands for the set of FI parameters $\zeta_a$. In the
case where the dimension vector $\gamma$ is primitive and the superpotential $\cW$ is generic,
$\cM$ is compact, so $\Omega(\gamma,z,y)$ is a symmetric Laurent polynomial in $y$, which can
be viewed as the character of the Lefschetz action of $SO(3)$ on the moduli space $\cM$.
When $\gamma$ is not primitive, $\cM$ is no longer compact, but one can still define the
Poincar\'e polynomial using intersection cohomology. In that case, we define the rational
invariant $\bOm(\gamma,z,y)$ in the same way as in \eqref{defbOm}.

The simplest example is the Kronecker quiver, with two nodes of rank 1 and
$\alpha_{12}$ arrows from the first node to the second.
The Higgs branch is either empty when $\sign (\zeta_1)= - \sign(\alpha_{12})$,
or given by the complex projective space $\mathbb{P}^{|\alpha_{12}|-1}$.
Its Poincar\'e polynomial is given by the same formula \eqref{gt2}
which was arrived at by quantizing the Coulomb branch $\cM_2(\alpha_1,\alpha_2;z)$
with $\alpha_{12}=\langle \alpha_1,\alpha_2\rangle$.
This coincidence between the cohomology of the Coulomb and Higgs branches is in
fact a general property of quivers without oriented loops \cite{Manschot:2013sya},
and reflects the fact that the support of the BPS wave functions shifts from the
Higgs branch to the Coulomb branch as the string coupling is increased \cite{Denef:2002ru}.
For quivers with loops, the Coulomb branch is in general non-compact
so the corresponding index is ill-defined. In contrast, the Higgs branch
is compact for a generic choice of superpotential $\cW(\phi)$, at least for a
primitive dimension vector.
This is why we focus on the BPS index on this branch.

\subsection{Flow tree formula for quivers}

Since quiver quantum mechanics describes the dynamics of black hole bound states,
it is natural to expect that
the flow tree formula also applies in this  context, and
allows to express the total rational index $\bOm(\gamma,z,y)$ in terms of sums of monomials
in moduli-independent indices $\bOm_*(\gamma_i)$ associated to all decompositions $\gamma=\sum_{i=1}^n \gamma_i$
where the $\gamma_i$'s lie in the positive cone $\Lambda^+$. In order to formulate it
however, we  need to define the notions of `attractor flow' and  `attractor point' in the
context of quivers.

The first notion is obvious from the discussion in \S\ref{subsec-flow}:
the discrete attractor flow \eqref{cistar} only involves the parameters $c_i$
and the DSZ products $\gamma_{ij}$ associated to the constituents $\gamma_i$.
In our setup they can be evaluated in terms of the coefficients of $\gamma_i=\sum_{a=1}^K n_{ia} \alpha_a$
on the basis  $\alpha_a$ associated to the nodes of the quiver as
\be
\label{defgijci}
\gamma_{ij}=\sum_{a,b=1}^K n_{ia} n_{jb}\, \alpha_{ab},
\qquad
c_i=\sum_{a=1}^K n_{ia}\zeta_a .
\ee
Note that these quantities define an auxiliary Abelian quiver with $n$ nodes
associated to the constituents in the decomposition $\gamma=\sum_{i=1}^n \gamma_i$.
From this data, one can then  compute the tree index $\gtr$ by constructing all
stable attractor flow trees with $n$ leaves.
The latter are rooted unordered binary trees $T$, whose vertices $v\in V_T$
are decorated by charge vectors $\gamma_v=\sum_{i=1}^n \mm^i_v \gamma_i$ with $\mm^i_v\in\{0,1\}$,
such that the root carries charge $\gamma$,
the leaves carry charges $\gamma_1,\dots, \gamma_n$, and the charges add up
at each vertex, $\gamma_v=\gamma_{\Lv{v}}+\gamma_{\Rv{v}}$. For a given decoration,
we assign stability parameters $c_{v,i}$ at each vertex, equal to $c_i$ in \eqref{defgijci}
at the root and satisfying the `discrete attractor flow' relation \eqref{cistar} along each
edge,\footnote{Note that \eqref{cistar} assumed that the central charges were aligned on the
walls of marginal stability, as opposed to being anti-aligned, but this assumption is
automatically satisfied in the regime of validity of quiver quantum mechanics.}
\be
\label{cistarQ}
c_{v,i} =
c_{p(v),i} - \frac{\langle \gamma_v,\gamma_i \rangle}{\langle \gamma_v,\gamma_{\Lv{v}}\rangle}
\,\sum_{j=1}^n  \mm^j_{\Lv{v}}  c_{p(v),j}
=  c_{p(v),i} - \frac{\langle \gamma_v,\gamma_i \rangle}{\langle \gamma_v,\gamma_{\Rv{v}}\rangle}
\,\sum_{j=1}^n  \mm^j_{\Rv{v}} c_{p(v),j} .
\ee
The tree index is then obtained via (cf. \eqref{kappaT0}, \eqref{defgtree},
\eqref{kappaT})\footnote{While $\gtr$ only depends on vector $\gamma_i$ through
the DSZ matrix $\gamma_{ij}=\langle \gamma_i,\gamma_j\rangle$ and parameters $c_i$,
for clarity  we denote its arguments by $\{\gamma_i, c_i\}$.}
\be
\label{defgtreeQ}
\gtr(\{\gamma_i,c_i\},y)
=\frac{(-1)^{n-1} }{2^{n-1}} \sum_{T\in \cT_n}
\prod_{v\in V_T} \kappa( \gamma_{\Lv{v}\Rv{v}})
\left[\sign\left( \sum_{i=1}^n  \mm^i_{\Lv{v}} c_{v,i} \right)+
\sign(\gamma_{\Lv{v}\Rv{v}}) \right],
\ee
where we recall that $\kappa(x)$ is defined in \eqref{kappadef}.
In practice, it is easiest to generate the trees recursively, and discard those which
do not satisfy the stability condition at one vertex without exploring further splittings.
Alternatively, one may wish to use the representation \eqref{gtF} of $\gtr$
as a sum of `partial tree indices' $\Ftr{n}(\{\gamma_i,c_i\})$  defined by a sum over planar trees
(see \eqref{defFpl}), or the recursion formulae \eqref{inductFn} and \eqref{F-ansatz} of the
previous section (see also \eqref{gtreen} for a conjectural relation which does not require taking
$y\neq 1$).

As for the notion of attractor point, we observe that for a given dimension vector $\gamma=\sum_{a=1}^K N_a \alpha_a$,
the FI parameters defined by
\be
\label{attFI}
\zeta_{*,a}(\gamma) = \langle \gamma, \alpha_a\rangle = -\sum_{b=1}^K \alpha_{ab} N_b
\ee
are such that for any decomposition $\gamma=\gamma_L+\gamma_R$,
the sign of $\langle \gamma_L,\gamma_R\rangle$ is always opposite to the sign of
$(\zeta_*(\gamma),\gamma_L)$, mimicking the property \eqref{attrZZ} of the supergravity
attractor point  \cite{MPSunpublished}. Moreover, \eqref{attFI} automatically satisfies the condition
$(\zeta_*(\gamma'),\gamma')=0$, and is mapped to zero by the discrete attractor flow,
as expected since a single-centered black hole should be described by a single node quiver.
It is therefore natural to identify the attractor index
$\Omega_*(\gamma)$  with the Poincar\'e polynomial \eqref{defpoinca} evaluated for
the charge vector $\gamma$ and stability parameter $\zeta_*(\gamma)$. Of course, the
attractor point $\zeta_*(\gamma')$ can be defined in the same way for any vector
$\gamma'=\sum_{a=1}^K n_a \alpha_a\in\Lambda^+$, whether or not it coincides with the
dimension vector of the original quiver.

With these identifications, we can now state the flow tree formula for quiver moduli
in a mathematically precise way:

\begin{conj}
For a $K$-node quiver with adjacency matrix $\alpha_{ab}$, dimension vector $\gamma=(N_1,\dots, N_K)$,
stability parameters $\zeta=(\zeta_1,\dots, \zeta_K)$ and generic superpotential $\cW(\phi_{ab,A,ss'})$,
the rescaled Poincar\'e polynomial
\eqref{defpoinca} of the quiver moduli space is given by (cf. \eqref{defbOm})
\be
\label{defOmb}
\Omega(\gamma,z,y) =  \sum_{m|\gamma} \mu(m)\,
 \frac{y-1/y}{m(y^m-1/y^m)} \bOm(\gamma/m,z, y^m),
\ee
where $\mu(m)$ is the Moebius function and (cf. \eqref{Omsumtree})
\bea
\label{OmsumtreeQ}
\bOm(\gamma,z,y) =
\sum_{\gamma=\sum_{i=1}^n \gamma_i}
\frac{\gtr(\{\gamma_i,c_i\},y)}{|{\rm Aut}\{\gamma_i\}|}\,
\prod_{i=1}^n \bOm_*(\gamma_i,y).
\eea
Here, the sum runs over all distinct unordered splittings of $\gamma$ into sums of vectors
$\gamma_i=(n_{i1},\dots, n_{iK})$ with non-negative entries, $|{\rm Aut}\{\gamma_i\}|$ is the
order of the subgroup of the permutation group $S_n$ preserving the ordered set $\{\gamma_i\}$,
$\gtr(\{\gamma_i,c_i\},y)$ is the `tree index' defined using \eqref{defgijci}, 
and $\bOm_*(\gamma_i,y)$ are the rational `attractor indices', i.e. the same quantities 
as in \eqref{defOmb} but evaluated for the dimension vector $\gamma_i$ and stability parameters 
$\zeta_{*,a}(\gamma_i)=-\sum_{b=1}^K \alpha_{ab} n_{ib}$.
\end{conj}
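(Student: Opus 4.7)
The plan is to prove \eqref{OmsumtreeQ} by combining a trivial base case at the attractor point with wall-crossing consistency, then invoking uniqueness. The key initial observation is that the quiver attractor parameters $\zeta_{*,a}(\gamma) = -\sum_b \alpha_{ab} N_b$ defined in \eqref{attFI} obey the discrete analog of the supergravity identity \eqref{attrZZ}: for every splitting $\gamma = \gamma_L + \gamma_R$ one has $\sign\,(\zeta_*(\gamma),\gamma_L) = -\sign\langle \gamma_L,\gamma_R\rangle$. Consequently, the root-vertex factor $[\sign(\sum_i \mm^i_{\Lv{v_0}} c_{v_0,i}) + \sign(\gamma_{\Lv{v_0}\Rv{v_0}})]$ in the tree index \eqref{defgtreeQ} vanishes identically on every tree with $n \geq 2$ leaves when evaluated at $\zeta = \zeta_*(\gamma)$. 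Hence $\gtr(\{\gamma_i,c_i\},y) = 0$ at the attractor point for all $n \geq 2$, and \eqref{OmsumtreeQ} collapses to the tautology $\bOm(\gamma, \zeta_*(\gamma),y) = \bOm_*(\gamma,y)$. The conjecture therefore holds at $\zeta_*(\gamma)$ without any computation.

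To propagate the identity to an arbitrary $\zeta$, I would deform $\zeta$ along a generic path from $\zeta_*(\gamma)$ to the target and verify that both sides jump identically across each wall of marginal stability encountered. The LHS satisfies the Kontsevich--Soibelman wall-crossing formula by the work of Reineke and of Joyce--Song. On the RHS, the representation \eqref{F-ansatz} of the tree index in terms of the partial indices $\Ftr{n}$ makes it manifest that the only singularities of $\gtr$ in the $c_i$ lie on the physical walls $\cs_\ell = 0$; the fake-wall discontinuities of individual trees cancel by the arguments of Section \ref{subsec-newrepr}. Across a primitive wall spanned by $(\gamma_L,\gamma_R)$, only those trees whose first splitting lies in the rank-two sublattice $\IZ\gamma_L \oplus \IZ\gamma_R$ contribute to the jump; their contribution factorises as $\kappa(\gamma_{LR})$ times $\gtr$ restricted to each branch, reproducing \eqref{primwc} inductively on the height $|\gamma| = \sum_a N_a$.

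The non-primitive wall-crossings are handled within the same inductive step by resumming all trees whose root splitting $\gamma = N_L\gamma_L + N_R\gamma_R$ is supported on the sublattice. This reduces the verification to an identity for the tree index evaluated on an auxiliary two-node Abelian quiver with DSZ pairing $\langle\gamma_L,\gamma_R\rangle$. For quivers without oriented loops this identity follows at once from the equality $\gtr = \gref$ proved in Section \ref{sec-gC}, combined with the known KS-consistency of the Coulomb branch formula \cite{Manschot:2013sya}; induction on $|\gamma|$, together with the base case, then completes the argument in the loop-free setting and recovers the proof outlined in the paper.

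The hard part will be the case of quivers with oriented loops, where $\gtr$ and $\gref$ differ by scaling contributions worked out for small $n$ in Section \ref{sec-Coulomb}, and where $\cM(\gamma,\zeta_*(\gamma))$ need not be compact for non-primitive $\gamma$, forcing one to define $\bOm_*$ via intersection cohomology and the rational prescription \eqref{defOmb}. In that setting one cannot invoke the Coulomb branch formula directly: one must show that the scaling corrections distinguishing $\gtr$ from $\gref$ regroup, after multiplication by the $\bOm_*(\gamma_i)$ and summation over splittings, precisely into the single-centered indices appearing in the Coulomb branch formula, so that the full right-hand side of \eqref{OmsumtreeQ} still satisfies KS wall-crossing. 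I expect this regrouping to be the main technical obstacle, requiring a careful inductive analysis of scaling subtrees and a precise matching of the recursive definition of $\Omega_S$ against the contributions of trees that become degenerate in the scaling limit.
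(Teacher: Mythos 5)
Your proposal follows essentially the same approach as the paper's own discussion of this conjecture: the vanishing of $\gtr$ at the attractor point via the sign identity $\sign(\zeta_*(\gamma),\gamma_L)=-\sign\langle\gamma_L,\gamma_R\rangle$, propagation via wall-crossing consistency (which suffices on its own for Abelian quivers, where only primitive walls occur), the equality $\gtr=\gref$ combined with the known KS-consistency of the Coulomb branch formula for loop-free non-Abelian quivers, and an honestly acknowledged gap for quivers with oriented loops (where the paper too only offers a physical plausibility argument). One imprecision worth noting: at a non-primitive wall $\gamma=N_L\gamma_L+N_R\gamma_R$ with $N_L>1$ or $N_R>1$, the auxiliary object controlling the jump is \emph{not} a two-node Abelian quiver --- it is either the Kronecker quiver with dimension vector $(N_L,N_R)$, hence non-Abelian, or, in the language of constituents $\gamma_i$ lying in the rank-two sublattice, an Abelian quiver on $n>2$ nodes; this does not damage your argument, which ultimately defers to $\gtr=\gref$ and Reineke's formula exactly as the paper does.
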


This conjecture is easily proven for Abelian quivers, i.e. when all $N_i\leq 1$.
Indeed, \eqref{OmsumtreeQ} gives the correct value
at the attractor point, since $\gtr$ for $n>1$ vanishes by construction at that point,
and it also  satisfies the primitive wall-crossing formula,  as shown in \S\ref{subsec-afc}.
In order to prove that  \eqref{OmsumtreeQ} also holds for non-Abelian quivers, one would have to
prove that it satisfies the general Kontsevich--Soibelman wall-crossing formula \cite{ks}.
Unfortunately, we do not yet have a general proof of that fact.

For non-Abelian quivers without loops, however, we shall show in section \ref{sec-noloop} that
the tree index $\gtr$ coincides with the Coulomb index $\gref$, so that
the attractor indices $\Omega_*(\gamma_i)$ coincide with the single-centered indices
$\OmS(\gamma_i)$, and the flow tree formula becomes equivalent to the Coulomb branch formula
reviewed in the next section. Since the latter has been shown \cite{Manschot:2013sya} to be equivalent to
Reineke's formula \cite{1043.17010}  for non-Abelian quivers without loops,
it follows that \eqref{OmsumtreeQ} also holds in this case.
In particular, it must be consistent with the general wall-crossing formula.

For the most general case of non-Abelian quivers with loops, we do not have a direct proof
that \eqref{OmsumtreeQ} is consistent with the general wall-crossing formula, but it is physically clear
that the validity of the latter can only depend on general factorization properties of the tree
index, and so should not be sensitive to the existence of loops (indeed, loops are responsible
for existence of scaling solutions, but those are insensitive to wall-crossing). Hence we believe
that the conjecture is true also in this general case,
although this argument falls short of being a mathematical proof.

Alternatively, it may be possible
to derive the flow tree formula from the Joyce-Song formula \cite{Joyce:2008pc,Joyce:2009xv},
which relates the BPS indices
$\Omega(\gamma,z)$ and $\Omega(\gamma,z')$ in arbitrary distinct chambers. Applying
this formula for $z'=z_{\gamma}$, we obtain a sum of products of
$\Omega(\gamma_i,z_{\gamma})$ with $\gamma=\sum \gamma_i$, all evaluated at the
point $z_\gamma$. One can then repeat this process and express each of the
$\Omega(\gamma_i,z_{\gamma})$'s in terms of BPS indices at their
respective attractor points $z_{\gamma_i}$. Since the charge of the constituents $\gamma_i$ is always less than that
of the total charge, this process terminates after a finite number of steps. It is interesting
to note that the Joyce-Song formula also involves a sum over rooted decorated trees, albeit
of a somewhat different type.

\section{Comparing attractor and single-centered indices}
\label{sec-Coulomb}

In this final section, we compare the flow tree formula \eqref{Omsumtree}, expressing the total index
$\Omega(\gamma,z)$ in terms of attractor indices $\Omega_*(\gamma_i)$,
with the Coulomb branch formula which expresses the same index in terms of single-centered indices
$\Omega_S(\gamma_i)$. In \S\ref{sec-gC}, we briefly recall the
statement of the Coulomb branch formula and the definition of  the `Coulomb index'
$\gref$ which plays a central r\^ole in it. In \S\ref{sec-noloop} we show that the latter
coincides with the tree index $\gtr$ for charge configurations
which do not allow scaling solutions, and conclude that $\Omega_*(\gamma)=\Omega_S(\gamma)$
for quivers without oriented loops. In the case where scaling solutions
are allowed, corresponding to quivers with oriented loops, $\gref$ and
$\gtr$ in general differ, and so do $\Omega_*(\gamma)$ and $\Omega_S(\gamma)$.
While we do not yet know how to relate them explicitly in  general, in \S\ref{sec-loop}
we work out their relation in the special cases of 3-centered and 4-centered configurations.

\subsection{Review of the Coulomb branch formula}
\label{sec-gC}

The Coulomb branch formula conjecturally expresses the total rational index $\bOm(\gamma,z)$
defined in \eqref{defbOm} in terms of single-centered indices
$\bOm_S(\gamma_i)$ as follows:
\begin{conj}[\cite{Manschot:2011xc,Manschot:2013sya,Manschot:2014fua}]
\be
\label{CoulombForm1}
\begin{split}
\bOm(\gamma,z,y) =  \sum_{\gamma=\sum_{i=1}^n\gamma_i}
\frac{\gref(\{\gamma_i\},z,y)}{|{\rm Aut}\{\gamma_i\}|}
\prod_{i=1}^n
\left\{
\sum_{m_i\in\IZ \atop m_i \vert \gamma_i}
\frac{y-1/y}{m_i(y^{m_i}-y^{-m_i})}
\Omega_{\rm tot}(\gamma_i/m_i,y^{m_i}) \right\},
\end{split}
\ee
where $\Omega_{\rm tot}$ is determined in terms of $\Omega_S$ via
\be
\label{CoulombForm2}
\Omega_{\rm tot}(\gamma,y)=\Omega_S(\gamma,y)+
\sum_{\gamma=\sum_{i=1}^m m_i \beta_i} H_m(\{\beta_i,m_i\},y)\prod_{i=1}^m \Omega_S(\beta_i,y^{m_i}).
\ee
\end{conj}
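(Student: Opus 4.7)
\medskip

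\noindent\textbf{Proof proposal.} The natural route is to regard this conjecture not as an independent statement but as a consequence of the flow tree formula \eqref{OmsumtreeQ}, together with a recursive \emph{definition} of $\Omega_S$ via the $H$-factors. The plan is to compare the two expressions for $\bOm(\gamma,z,y)$: on one side the flow tree formula
\be
\bOm(\gamma,z,y) = \sum_{\gamma=\sum_i\gamma_i} \frac{\gtr(\{\gamma_i,c_i\},y)}{|{\rm Aut}\{\gamma_i\}|}\prod_i \bOm_*(\gamma_i,y),
\non
\ee
on the other the Coulomb branch formula \eqref{CoulombForm1}--\eqref{CoulombForm2}. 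Expanding $\Omega_{\rm tot}$ using \eqref{CoulombForm2} and reorganizing the nested sums as a single sum over unordered decompositions of $\gamma$, the Coulomb branch formula takes the same general form as the flow tree formula, but with $\gtr$ replaced by $\gref$ and $\bOm_*$ replaced by a combination of $\Omega_S$'s weighted by $H$-factors and rational factors inherited from the multi-cover prescription \eqref{defbOm}. The task reduces to matching the two representations term by term.

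The first step is to dispose of the non-loop case. By the comparison carried out in Section~\ref{sec-noloop}, $\gref(\{\gamma_i\},z,y)=\gtr(\{\gamma_i,c_i\},y)$ whenever the configuration admits no scaling solution, in particular for any decomposition of $\gamma$ arising in a quiver without oriented loops. Setting all $H$-factors to zero and $\Omega_S=\Omega_*$ then turns the Coulomb branch formula into the flow tree formula, whose validity in this case follows from Reineke's formula as noted in the text. The second step is to use this as the base of a recursion on the total dimension $|\gamma|=\sum_a N_a$: assume the conjecture holds for all $\gamma'<\gamma$ and use \eqref{CoulombForm1}--\eqref{CoulombForm2} together with the flow tree formula applied to $\gamma$ to read off a unique expression for $\Omega_S(\gamma,y)$ in terms of $\Omega_*(\gamma,y)$ and of $\Omega_S(\gamma',y)$ with $\gamma'<\gamma$. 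Concretely, isolate the single-center piece on both sides and define
\be
\Omega_S(\gamma,y) := \Omega_*(\gamma,y) - \sum_{\gamma=\sum_i\gamma_i,\, n\ge 2}\bigl(\text{terms involving } \gref-\gtr \text{ and lower } \Omega_S\bigr),
\non
\ee
where the ``lower'' terms are computable inductively. The $n=3$ and $n=4$ calculations of $\gref-\gtr$ in Section~\ref{sec-loop} provide the explicit form of $H_m$ in these cases, serving as a consistency check and fixing the normalizations of the $H$-factors.

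The third step is to prove that the $\Omega_S$ defined in this way has the expected properties: $(i)$ it is a Laurent polynomial in $y$ with integer coefficients, $(ii)$ it depends only on $\gamma$, not on $z$, and $(iii)$ in the unrefined limit $y\to 1$ it reduces to an integer. Points $(i)$ and $(iii)$ follow provided the $H$-factors are chosen with the rational structure built into \eqref{CoulombForm2}, combined with the polynomial structure of $\gref$ and $\gtr$ after the rescaling \eqref{defbOm}. Point $(ii)$ is the key content: it requires that every $z$-dependent contribution coming from $\gref$ be cancelled either against similar contributions from other decompositions via wall-crossing, or absorbed into $\Omega_*(\gamma_i,y)$ via the attractor mechanism. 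Concretely, one must show that $\bOm_*(\gamma,y) - \sum_{\text{splittings}}\dots \prod_i \bOm_S(\gamma_i,y)$ is $z$-independent, which amounts to verifying compatibility of the Coulomb branch formula with the Kontsevich--Soibelman wall-crossing formula \cite{ks}.

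This last point is the main obstacle and would be addressed as follows: because $\gref$ satisfies the same primitive wall-crossing identity as $\gtr$ (as both are constructed to reproduce the index of two-centered bound states), the discontinuity of the Coulomb branch formula across a wall of marginal stability for $(\gamma_L,\gamma_R)$ can be shown to factor as $-\sign(\gamma_{LR})\,\kappa(\gamma_{LR})\,\bOm(\gamma_L,z)\,\bOm(\gamma_R,z)$ by exactly the manipulations used in Section~\ref{subsec-afc} for the flow tree formula. The non-primitive wall-crossing identities of Kontsevich--Soibelman would then follow from the multi-cover structure encoded in \eqref{CoulombForm1} and \eqref{defbOm} together with the Moebius inversion in \eqref{defOmb}. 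The delicate point, and where a fully rigorous proof is presently lacking, is that scaling configurations not visible to any wall-crossing must produce genuine $z$-independent contributions, and one needs to verify that the recursive definition of $\Omega_S$ does not accidentally introduce spurious $z$-dependence from such scaling sectors. A complete argument would require a finer analysis of the scaling contributions to $\gref$ beyond the $n\le 4$ cases treated in Section~\ref{sec-loop}.
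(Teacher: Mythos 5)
There is no proof in the paper to compare against: \eqref{CoulombForm1}--\eqref{CoulombForm2} is the Coulomb branch formula of \cite{Manschot:2011xc,Manschot:2013sya,Manschot:2014fua}, quoted explicitly as a conjecture. Section \ref{sec-gC} only recalls its ingredients (the Coulomb index $\gref$ computed by localization on collinear fixed points, and the factors $H_m$ fixed by the minimal modification hypothesis) and the known partial evidence: consistency with the wall-crossing formula of \cite{ks} follows from \cite{Manschot:2010qz}, and for quivers without oriented loops it reduces to \eqref{CoulombForm3} and agrees with Reineke's formula \cite{Manschot:2013sya}. So your proposal has to stand on its own, and it does not: it is a programme with the decisive step missing, as your own closing paragraph concedes.

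Beyond incompleteness, the strategy inverts the logical structure of the statement and risks circularity. You propose to derive \eqref{CoulombForm1} from the flow tree formula \eqref{OmsumtreeQ}; but in this paper the flow tree formula is itself only conjectural for non-Abelian quivers with loops, which is precisely the regime where \eqref{CoulombForm1} has content beyond \eqref{CoulombForm3}, so reducing one unproven statement to another establishes neither. You also treat \eqref{CoulombForm2} as a recursive definition of $\Omega_S$ with the $H_m$ ``fixed'' by the $n=3,4$ computations of $\gref-\gtr$; this is backwards. In the Coulomb branch formula the $H_m$ are determined independently by the minimal modification hypothesis (minimal angular momentum for the missing boundary contributions, equivalently the requirement that $\bOm$ be a symmetric Laurent polynomial whenever the $\Omega_S$ are), and the computations of Section \ref{sec-loop} use those already-fixed $H_m$ to extract the relations \eqref{diffOm3} and \eqref{diffOm4a} between $\Omega_*$ and $\Omega_S$ --- they do not determine $H_m$. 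If instead you define $\Omega_S$ and $H_m$ so that \eqref{CoulombForm1} holds, the conjecture is emptied of content, and your step about cancelling $z$-dependence is moot: $\Omega_S$ enters the formula as a moduli-independent quantity by construction, and the genuine question is whether the right-hand side reproduces the actual index (the quiver Poincar\'e polynomial) in every chamber, i.e.\ correctness in one chamber plus wall-crossing consistency. The latter is the part already known from \cite{Manschot:2010qz}; the former, in the presence of scaling solutions, is exactly what remains open, and your proposal offers no mechanism for it beyond the $n\le 4$ checks the paper itself performs in Section \ref{sec-Coulomb}.
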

In both \eqref{CoulombForm1} and \eqref{CoulombForm2} the sums run over unordered decompositions of $\gamma$
into sums of positive vectors $\gamma_i$ or $\{\beta_i,m_i\}$ with $m_i\geq 1$.
The functions $H_m(\{\beta_i,m_i\},y)$ are determined recursively
by the so called ``minimal modification hypothesis" (see  \cite{Manschot:2013sya,Manschot:2014fua} for details)
and their role is to ensure that the full refined index $\bOm(\gamma,z)$ is a symmetric Laurent polynomial in $y$.
The function $\gref(\{\gamma_i\},z,y)$, known as the Coulomb index,
is the only quantity on the r.h.s. of \eqref{CoulombForm1} which depends on the moduli $z$.
It is defined as the equivariant index of the Dirac operator on $\cM_n(\{\gamma_i\})$,
computed by localization with respect to rotations around a fixed axis. The fixed points
of the action of $J_3$ on $\cM_n(\{\gamma_i\})$ are collinear solutions to the equations \eqref{DenefEq},
classified by the order of the centers along the axis. Equivalently, they are critical points of the
potential\footnote{It is worth noting that at the attractor point \eqref{attFI},
the potential \eqref{ecou0} becomes a sum of pairwise interactions,
$V(\{x_i\})=- \sum_{i<j} \gamma_{ij} V(x_j-x_i)$ with $V(x)=  \sign(x)\,  \log |x| +x$, which
is attractive for $x>0$ and repulsive for $x<0$.}
\be \label{ecou0}
V(\{x_i\})=\sum_{i<j}
\gamma_{ij} \,\sign(x_i-x_j) \log |x_i-x_j| - \sum_{i=1}^n c_i \,x_i\, .
\ee
The sum over
fixed points may be represented as a sum over all permutations $\sigma$ of $\{1,2,\dots n\}$,
\be
\label{gCFy}
\gref(\{\gamma_i\}, z,y)
= \frac{(-1)^{n-1+\sum_{i<j} \gamma_{ij}}}{(y-y^{-1})^{n-1}}
\sum_{\sigma\in S_n}
\FC(\{\gamma_{\sigma(i)}\},z)\,
y^{\sum_{i<j} \gamma_{\sigma(i)\sigma(j)}},
\ee
where the `partial Coulomb index' $\FC(\{\gamma_i\},z)\in\IZ$ counts the critical points
for a fixed ordering $x_1<x_2\dots <x_n$ along the axis, weighted by the sign of the
Hessian of $V$ after removing the trivial translational zero-mode.
Under the reversal symmetry $i \mapsto n+1-i$,
$\FC$ picks up a sign $(-1)^{n-1}$, so that \eqref{gCFy} is invariant
under $y\to 1/y$.

While the computation of the one-dimensional solutions to \eqref{DenefEq}
becomes quickly impractical as $n$ increases,
it was shown in  \cite{Manschot:2013sya} that the partial tree index
$\FC$ can be efficiently
evaluated by first rescaling $\gamma_{ij}$ by $\lambda \gamma_{ij}$ unless $|i-j|=1$,
and then dialing $\lambda$ from $0$ (where only  nearest neighbor interactions are kept) to
$\lambda=1$ (the configuration of interest).\footnote{This prescription assumes that
the initial matrix $\gamma_{ij}$ is generic, in the sense
explained in footnote \ref{foogen} on page \pageref{foogen}. For non-generic cases, one
should first perturb the $\gamma_{ij}$'s
such that they become generic, apply the previous prescription
and then take the limit where  the relevant $\gamma_{ij}$'s become zero. The value of the individual
partial Coulomb indices $\FC$ may depend on the choice of deformation, but
after summing over all permutations, the limit is independent of that choice \cite{Manschot:2013sya}.}
The value at $\lambda=0$ turns out to coincide with the function $\tFwl_n(\{\gamma_i,c_i\})$
already introduced in \eqref{Fwlstar-a}.\footnote{In order
to match this result with  \cite[(2.9)]{Manschot:2013sya}, observe that $\Theta(xy) (-1)^{\Theta(-x)}=\frac12(\sign (x)+\sign (y))$,
where $\Theta(x)$ is the Heaviside step function, and $x,y$
are non-zero real numbers. Similarly, to  make contact between \eqref{FCrecur} below and \cite[(2.32)]{Manschot:2013sya}, 
note that $\sum_{k\leq i,j\leq \ell \atop i\leq j-2} \gamma_{ij}$ and $\sum_{k\leq i<j\leq \ell} \gamma_{ij}$
necessarily have the same sign whenever $\sum_{k\leq i<j\leq \ell} \gamma_{ij} \times
\sum_{i=k}^{\ell-1} \gamma_{i,i+1} <0$.}
In the absence of scaling solutions (or for quivers without loops), 
one can show that no jumps occur as $\lambda$ increases
from 0 to 1, so that $\FC$ is still given by its value at $\lambda=0$,
\be
\label{partialCoulomb}
\FC(\{\gamma_i\},z) = \tFwl_n(\{\gamma_i,c_i\}).
\ee
In the presence of scaling solutions, the partial Coulomb index $\FC$
may jump several times as the deformation parameter $\lambda$ is varied from 0 to 1.
These jumps occur whenever the parameter
$\lambda=\lambda_{k,\ell}$
is such that $\sum_{i,j\in A} \gamma_{ij}(\lambda_{k,\ell})=0$ for some subset $A=\{k,k+1,\dots \ell\}$ of $\{1,\dots, n\}$,
corresponding to all consecutive centers in $A$ colliding at one point.
$\FC$ is then given  by a sum over all possible jumps,
\be
\begin{split}
\FC(\{\gamma_i\},z) = &\,\tFwl_n(\{\gamma_i,c_i\})+   \sum_{1\leq k<\ell\leq n}  \frac12\left[
\sign\left( \sum_{k\leq i<j\leq \ell} \gamma_{ij}\right)
-\sign\left(\sum_{i=k}^{\ell-1} \gamma_{i,i+1}\right) \right]
\\
&\,  \times G_{\ell-k+1}(\gamma'_k,\dots,\gamma'_\ell)\,
\FCn{n+k-\ell}(\{\gamma'_1,\dots\gamma'_{k-1},\gamma'_{k+\cdots+\ell},
\gamma'_{\ell+1},\dots, \gamma'_n\},z ),
\vphantom{\biggr[}
\end{split}
\label{FCrecur}
\ee
where $\gamma'_k$ are charge vectors with deformed inner product $\langle \gamma'_i,\gamma'_j\rangle = \gamma_{ij}(\lambda_{k,l})$,
and $G_n(\{\gamma_i\})$ is
the Coulomb index for colliding solutions. The latter vanishes for $n<3$ and satisfies its own
recursion relation (see \cite[\S 2.3.2]{Manschot:2013sya}), initialized with the result for $n=3$,
\be
G_3(\gamma_1,\gamma_2,\gamma_3) = \frac12\,\bigl(\sign \gamma_{12} + \sign \gamma_{23}\bigr).
\ee

Using this procedure, we can compute the Coulomb index $\gref(\{\gamma_i\},z,y)$ for an arbitrary collection
of non-zero vectors $\gamma_i$ and generic stability parameters $c_i$ (collectively denoted by $z$).
When the charges $\gamma_i$ are such that no scaling solution is allowed, the result
is the equivariant Dirac index of the compact phase space $\cM_n(\{\gamma_i\},z)$ \cite{Manschot:2011xc,Kim:2011sc},
and is therefore a symmetric Laurent polynomial in $y$ with integer coefficients. In the
presence of scaling solutions however, the phase space $\cM_n$ is non-compact
and the above definition of $\gref$ produces instead a rational function of $y$. While it might in principle
be possible to construct a compactification of $\cM_n$ and incorporate additional
fixed points from boundary components in order to produce a Laurent polynomial in $y$,
the Coulomb branch formula uses  the rational function $\gref$ as defined above,
but requires adjusting the functions
$H_m(\{\beta_i,m_i\},y)$ in such a way that the full index $\bOm(\gamma,z)$ obtained
via \eqref{CoulombForm1} is a symmetric Laurent polynomial, provided  the single-centered
indices $\Omega_S(\gamma_i)$ are. The minimal modification hypothesis of \cite{Manschot:2013sya}
gives a unique prescription for computing $H_m$,
based on the assumption that the missing contributions from the boundary of $\cM_n$ carry the minimal
possible angular momentum. Note that this prescription does not take into account the condition
of absence of closed timelike curves, which is presumably irrelevant in the context of quiver
quantum mechanics, but needs to be checked by hand in more general cases
(see e.g. \cite[\S 3.2]{Manschot:2011xc} for an example where this condition makes an important difference).

It follows from the results in \cite{Manschot:2010qz} that the formula \eqref{CoulombForm1} is
consistent with the general wall-crossing formula of \cite{ks,Joyce:2008pc}.
 In  cases where none of the decompositions $\gamma=\sum\gamma_i$ allow for scaling solutions,
relevant for quivers with no loops,
all the factors $H_m$ in \eqref{CoulombForm2} vanish, so the Coulomb branch formula reduces to
\be
\label{CoulombForm3}
\bOm(\gamma,z,y)=
\sum_{\gamma=\sum_{i=1}^n \gamma_i}\,
\frac{g_{C}(\{\gamma_i\},z,y)}{|{\rm Aut}\{\gamma_i\}|}\, \prod_{i=1}^n  \bOm_S(\gamma_i,y),
\ee
which closely resembles \eqref{Omsumgen}.
Indeed, we shall see in \S\ref{sec-noloop} than in this simplified case, the single-centered
indices $\bOm_S(\gamma_i)$
agree with the attractor indices $\bOm_*(\gamma_i)$.

A different simplification occurs when
$\gamma$ is primitive and such that all charge vectors $\gamma_i$ appearing in each decomposition
$\gamma=\sum \gamma_i$ are distinct and primitive. In this case, relevant for Abelian quivers, the Coulomb branch formula
\eqref{CoulombForm1} simplifies to
\be
\label{CoulombForm3a}
\begin{split}
\Omega(\gamma,z,y) = \sum_{\gamma=\sum_{i=1}^n\gamma_i}\!\!\!
\gref(\{\gamma_i\},z,y)\,
\prod_{i=1}^n
\left\{
\Omega_S(\gamma_i,y)
+\!\!\!\!\sum_{\sum_{j=1}^{m_i} \beta_j=\gamma_i}  \!\!\!\!
H_{m_i}(\{\beta_j\},y)\, \prod_{j=1}^{m_i} \OmS(\beta_j,y)  \right\}.
\end{split}
\ee
In this case, the rational functions $H_m(\{\beta_j\},y)$ are fixed by demanding that the coefficient
of the monomial $\prod_{j=1}^m \OmS(\beta_j,y)$ in $\Omega(\gamma,z,y)$ be a Laurent polynomial in $y$.
Requiring that $H_m(\{\beta_j\},y)$ are invariant under $y\to 1/y$ and vanish at $y=\infty$ fixes
them uniquely \cite{Manschot:2011xc}. The reason why \eqref{CoulombForm3a} differs
from \eqref{Omsumgen} is that the attractor indices $\Omega_*(\gamma_i)$ include contributions
both from single-centered black holes and scaling solutions.

\subsection{Quivers without loops \label{sec-noloop}}

In this subsection, we shall show that for quivers without loops, such that no scaling solutions
are allowed, the tree index  $\gtr$ and Coulomb index $\gref$
coincide for any set of charges and moduli. This will turn out to imply that
the attractor and single-centered indices, $\Omega_*(\gamma_i)$ and $\OmS(\gamma_i)$, also coincide for all $\gamma_i$.

In order to show the equality $\gtr=\gref$, the main observation is that the
corresponding partial indices $\Ftr{n}$ and $F_{C,n}$ are locally constant functions of
the parameters $c_i$ whose only discontinuities lie on the walls of marginal stability where $\sum_{i=1}^k c_i=0$,
where they both jump according to  the primitive wall-crossing formula.
Therefore, it suffices to show that they coincide at one value of the $c_i's$.

A convenient choice is to take the attractor point \eqref{attFI} which in our case is $c_i^*(\gamma) =\beta_{ni}$.
It is an immediate consequence of the iterative equation \eqref{inductFn}
that at this point the partial tree index $\Ftr{n}$ vanishes.
On the other hand, the partial Coulomb index \eqref{partialCoulomb} reduces to the function
$\tFs_{n}(\{\gamma_i\})$
defined in \eqref{Fwlstar}.
We shall now show that this function vanishes whenever  $\gamma_{ij}$ is the adjacency matrix
of a generic $n$-node quiver without loops. The result for non-generic matrices $\gamma_{ij}$
then follows by continuity.

First, note that there exists $n!$ different choices of signs for $\gamma_{ij}$ (out of $2^{n(n-1)/2}$)
such that the quiver has no oriented loops, and all those choices are related by permutations
(indeed, any generic quiver defines a total order on $[1,n]$). For each of these choices of signs,
there exists a unique source $s$ and sink $t$. Consider the restricted quiver obtained by keeping only
nearest neighbor interactions. The restricted quiver has a set of sources $\{s_i\}$ and sinks $\{t_i\}$
which lie either at the endpoints $v_1$, $v_{r+1}$, or at the points $v_r$ where $\gamma_{r,r+1}$ changes sign.
Obviously, $s\in\{s_i\}$ and $t\in\{t_i\}$. Now, assume that $\tFs_{n}(\{\gamma_i\})$ was non-zero.
This means that for each $r$ in $[1,n-1]$, the sign of $\gamma_{r,r+1}$ is the same as the sign
of $\Gamma_{nr}$. Thus, whenever $\gamma_{r,r+1}$ changes sign, so does $\Gamma_{nr}$.
The key observation is the following: whenever $\Gamma_{nr}$ and $\Gamma_{n,r+1}$ have opposite sign,
then the sign of $\Gamma_{nr}$ is opposite to the sign of $\Gamma_{n,r+1}-\Gamma_{nr}$,
which is equal to $\sum_i \gamma_{i,r+1}$. Thus, if $v_{r+1}$ is a sink of the restricted quiver
distinct from the endpoints, i.e. $\gamma_{r,r+1}>0$, $\gamma_{r+1,r+2}<0$, then $\Gamma_{nr}>0$,
hence $\sum_i \gamma_{i,r+1}<0$, which shows that $v_{r+1}$ cannot be a sink of the full quiver.
Similarly, if  $v_{r+1}$ is a source of the restricted quiver
distinct from the endpoints, then it cannot be a source of the full quiver.
Thus, the source and sink of the full quiver must be the endpoints.
But again, $v_1$ cannot be the source nor the sink, since $\gamma_{12}$ and $\Gamma_{n1}$ have the same sign.
Similarly, $v_{n+1}$ cannot be the source
nor the sink. Thus, we have reached a contradiction with the hypothesis that $\tFs_{n}(\{\gamma_i\})$ was non-zero.
Therefore, the partial Coulomb index at the attractor point vanishes for any generic quiver without oriented loops.
As a result, we conclude that $\Ftr{n}=\FC$, and therefore $\gtr=\gref$. Since both $\gtr$ and $\gref$
are continuous functions of $\gamma_{ij}$ (for generic, fixed values of $c_i$),
this equality continues to hold even if $\gamma_{ij}$ is not generic.

Having established that $\gtr=\gref$, we can now apply the Coulomb branch formula
\eqref{CoulombForm3} for $z$ equal to $z_{\gamma}$. Since $\gtr$ (and therefore
$\gref$) vanishes at that point whenever $n\geq 2$, it immediately follows that
$\Omega_*(\gamma)=\OmS(\gamma)$ for any dimension vector. The flow tree formula
\eqref{Omsumtree} is therefore equivalent to the  Coulomb branch formula \eqref{CoulombForm3},
which is known to agree with Reineke's formula for quivers without loops provided
$\OmS(\gamma)=1$ whenever $\gamma$ is one of the basis vectors $\alpha_1,\dots, \alpha_K$
and zero otherwise \cite{Manschot:2013sya}. We conclude
that the flow tree formula holds for quivers without loops.

\subsection{Abelian quivers with loops \label{sec-loop}}

Let us now allow for loops in the quiver diagram, and hence for the presence of scaling solutions, restricting to the Abelian case.
This corresponds to the inclusion of the second term in the partial Coulomb index \eqref{FCrecur} and,
given the results of the previous subsection, of the second term in the second representation
of the partial tree index given in \eqref{F-ansatz}. Besides, we have to take into account the contributions
to the BPS index generated by the functions $H_m(\{\beta_j\},y)$ and captured by the formula \eqref{CoulombForm3a}.
Unfortunately, we have not yet been able to find general relation between the attractor and single-centered indices
following from equating the two expansions. Below we provide explicit results  for
 two particular cases: $n=3$ and $n=4$.

\subsubsection{Three centers}

For $n=3$, the partial tree index can be found in \eqref{F3res}, whereas
the partial Coulomb index is given in \eqref{FC3}.
In the latter formula, the term in the first line is the result for a quiver with nearest-neighbor interactions,
while the second line arises from contributions of scaling solutions when the parameter $\lambda$
is changed from $\lambda=0$ to $\lambda=1$. It is easy to check that the second line vanishes unless
$\gamma_{12}, \gamma_{23}, \gamma_{31}$ all have the same sign (so that the quiver has an oriented loop)
and satisfy $|\gamma_{12}|, |\gamma_{23}| < |\gamma_{31}| < |\gamma_{12}| +  |\gamma_{23}|$,
in which case it equals $-1$. The latter condition implies that $|\gamma_{12}|,
|\gamma_{23}|, |\gamma_{31}|$ satisfy the triangular inequalities (so that scaling solutions are allowed),
the restriction that $ |\gamma_{31}|$ be the largest of all three being due to the special
choice of ordering.

The matching of the first terms in \eqref{F3res} and \eqref{FC3} is just the statement \eqref{partialCoulomb}.
The second terms are however different. Using the sign identity \eqref{signprop-ap}
repeatedly, one can rewrite the difference between the two partial indices as
\be
\FCn{3}-\Ftr{3}=\frac14\,\bigl( \sign(\gamma_{12} +\gamma_{23}+\gamma_{13})-\sign(\gamma_{1+2,3}) \bigr)
\bigl( \sign(\gamma_{12} +\gamma_{23}+\gamma_{13})-\sign(\gamma_{1,2+3}) \bigr)\, ,
\label{diffF3}
\ee
which is moduli-independent and vanishes unless $\gamma_{12}, \gamma_{23}, \gamma_{31}$ all
have the same sign and satisfy $|\gamma_{12}|, |\gamma_{23}| < |\gamma_{31}| < |\gamma_{12}| +  |\gamma_{23}|$.
The moduli-independence is a  consequence of the fact that
$\FCn{3}$ and $\Ftr{3}$ have the same discontinuities across the walls of marginal stability $c_1=0$ and $c_3=0$.
Because $\Ftr{3}$ vanishes at the attractor point,
the conditions for non-vanishing of the difference are, of course, the same which ensure the non-vanishing of $\FCn{3}$ itself.

The difference between the Coulomb and tree indices, $\gref$ and $\gtr$, can be obtained by summing
\eqref{diffF3} over permutations. If, for example, $\gamma_{12}>\gamma_{23}>\gamma_{31}>0$,
which is a configuration allowing scaling solutions, then the difference reads
\be
\gref - \gtr
=  (-1)^{\gamma_{12}+\gamma_{23}+\gamma_{31}}\,\frac{y^{\gamma_{23}+\gamma_{31}-\gamma_{12}}
+y^{-\gamma_{23}-\gamma_{31}+\gamma_{12}}}{(y-1/y)^2}\, .
\label{ex-diffg3}
\ee
Note that in this case $\gref$ is a rational function with a double pole
at $y=1$, while $\gtr$ is always a symmetric Laurent polynomial in $y$.

We can now  relate the single-centered invariant $\Omega_S(\gamma)$ to the
attractor index $\Omega_*(\gamma)$. For this purpose, we compare the Coulomb branch formula
\eqref{CoulombForm3a} with \eqref{Omsumtree} for $\gamma=\gamma_1+\gamma_2+\gamma_3$.
%\be
%\begin{split}
%\bOm(\gamma,z) &=\Omega_S(\gamma)
%+ \gref(\gamma_1,\gamma_{2+3};z)\, \Omega_S(\gamma_1)\, \Omega_S(\gamma_{2+3})\\&
%+  \gref(\gamma_2,\gamma_{3+1};z)\, \Omega_S(\gamma_2)\, \Omega_S(\gamma_{3+1})
%+ \gref(\gamma_3,\gamma_{1+2};z)\, \Omega_S(\gamma_3)\, \Omega_S(\gamma_{1+2})
%\\&
%+\gref(\gamma_1,\gamma_2,\gamma_3;z)\,
% \Omega_S(\gamma_1)\, \Omega_S(\gamma_2)
%\Omega_S(\gamma_3) + H(\gamma_1,\gamma_2,\gamma_3)
%\end{split}
%\ee
% to the sum over flow trees,
% \be
% \begin{split}
%\bOm(\gamma,z) &=\Omega_S(\gamma)
%+ \gtr(\gamma_1,\gamma_{2+3};z)\, \Omega_*(\gamma_1)\, \Omega_*(\gamma_{2+3}) \\&
%+  \gtr(\gamma_2,\gamma_{3+1};z)\, \Omega_*(\gamma_2)\, \Omega_*(\gamma_{3+1})
%+ \gtr(\gamma_3,\gamma_{1+2};z)\, \Omega_*(\gamma_3)\, \Omega_*(\gamma_{1+2})
%\\&
%+\gtr(\gamma_1,\gamma_2,\gamma_3;z)\,
% \Omega_*(\gamma_1)\, \Omega_*(\gamma_2) \Omega_*(\gamma_3)
%\end{split}
%\ee
Using the fact that $\Omega_*(\gamma_i)=\Omega_S(\gamma_i)$ for basis vectors,
$\Omega_*(\gamma_{i+j})=\Omega_S(\gamma_{i+j})$ for sums of two basis vectors,
and $\gref(\gamma_L,\gamma_R)=\gtr(\gamma_L,\gamma_R)$ for any
pairs of vectors
$\gamma_L,\gamma_R$, we conclude that
\be
\begin{split}
\Omega_*(\gamma) = \OmS(\gamma)  +
\Bigl[\gref(\gamma_1,\gamma_2,\gamma_3;z)
- \gtr (\gamma_1,\gamma_2,\gamma_3;z)
 + H_3(\gamma_1,\gamma_2,\gamma_3)\Bigr]  \prod_{i=1}^3\Omega_S(\gamma_i).
\end{split}
\label{diffOm3}
\ee
Applying the minimal modification hypothesis to determine $H_3(\gamma_1,\gamma_2,\gamma_3)$
\cite{Manschot:2013sya}, for the case considered in \eqref{ex-diffg3}
we arrive at
\be
\begin{split}
\Omega_*(\gamma) = \OmS(\gamma) +
 \kappa\(\half\,(\gamma_{23}+\gamma_{31}-\gamma_{12}+\eps)\)
\kappa\(\half\,(\gamma_{23}+\gamma_{31}-\gamma_{12}-\eps)\)
\prod_{i=1}^3\Omega_S(\gamma_i),
\end{split}
\ee
where $\eps$ is the parity (0 or 1) of ${\gamma_{12}+\gamma_{23}+\gamma_{31}}$.
By construction, the difference is a symmetric Laurent polynomial.

\subsubsection{Four centers}

For $n=4$, the two partial indices are given in \eqref{F4res} and \eqref{FC4res}.
While the first terms in each expression coincide, the remaining terms are quite different,
and  the moduli-dependence does not cancel in their difference $\Ftr{4}-\FCn{4}$.
This is in fact consistent with the structure of the two expansions of the BPS index.
Indeed, comparing \eqref{CoulombForm3a} and \eqref{Omsumtree} and taking into account \eqref{diffOm3},
one finds
\be
\begin{split}
\Omega_*(\gamma) =&\, \Omega_S(\gamma)+ \Bigl[\gref(\gamma_1,\gamma_2,\gamma_3,\gamma_4;z) -
\gtr(\gamma_1,\gamma_2,\gamma_3,\gamma_4;z)+ H_4(\gamma_1,\gamma_2,\gamma_3,\gamma_4)
\\
+&\,
\Bigl( \gtr(\gamma_1,\gamma_{2+3+4};z) \bigl( \gtr(\gamma_2,\gamma_3,\gamma_4;z) -
\gref(\gamma_2,\gamma_3,\gamma_4;z) \bigr) + {\rm perm} \Bigr)\Bigr]\prod_{i=1}^4\Omega_S(\gamma_i).
\end{split}
\label{diffOm4}
\ee
Thus, the moduli-dependence of the difference of the two indices at $n=4$
must be non-trivial, so as to cancel  the moduli-dependence in the second line.
To check that this is indeed the case, note that the relation \eqref{diffOm4} can be rewritten as
\be
\Omega_*(\gamma) = \Omega_S(\gamma)
+\[H_4(\gamma_1,\gamma_2,\gamma_3,\gamma_4)
-\frac{3\,(-1)^{\sum_{i<j} \gamma_{ij}}}{(y-y^{-1})^{3}}
\Sym\Bigl\{
D_4(\{\gamma_i\})\,
y^{\sum_{i<j} \gamma_{ij}}\Bigr\}
\]\prod_{i=1}^4\Omega_S(\gamma_i),
\label{diffOm4a}
\ee
where
\be
\begin{split}
D_4=&\,
8\Bigl[\FCn{4}(\gamma_1,\gamma_2,\gamma_3,\gamma_4)
-\Ftr{4}(\gamma_1,\gamma_2,\gamma_3,\gamma_4)
\\
&\,
+\Ftr{2}(\gamma_1,\gamma_{2+3+4})\bigl(\Ftr{3}(\gamma_2,\gamma_3,\gamma_4)-\FCn{3}(\gamma_2,\gamma_3,\gamma_4)\bigr)
\\
&\,
+\Ftr{2}(\gamma_{1+2+3},\gamma_4)\bigl(\Ftr{3}(\gamma_1,\gamma_2,\gamma_3)-\FCn{3}(\gamma_1,\gamma_2,\gamma_3)\bigr)\Bigr].
\end{split}
\ee
Evaluating this combination using in particular \eqref{diffF3}, one arrives at
\bea
D_4&=&\sign\(\sum_{i<j}\gamma_{ij}\)\Bigl[\bigl(\sign (\gamma_{12}) + \sign (\gamma_{23})\bigr)\sign(\gamma_{12}+\gamma_{23}+\gamma_{13})
\Bigr.
\nn\\
&&\Bigl.\qquad
+\(\sign\gamma_{23}+\sign\gamma_{34}\)\sign(\gamma_{23}+\gamma_{24}+\gamma_{34})
+\sign(\gamma_{12})\sign(\gamma_{34})-1\Bigr]
\nn\\
&&
-\sign\(\gamma_{12}+\gamma_{23}+\gamma_{34}\)\Bigl[\sign(\gamma_{12})\sign(\gamma_{23})
+\sign(\gamma_{23})\sign(\gamma_{34})
+\sign(\gamma_{12})\sign(\gamma_{34})+1\Bigr]
\nn\\
&&
-\bigl(\sign(\beta_{41})+\sign(\gamma_{12})\bigr)\bigl(\sign(\beta_{4,1+2})+\sign(\gamma_{23})\bigr)
\bigl(\sign(\beta_{44})-\sign(\gamma_{34})\bigr)
\nn\\
&&
-\sign(\beta_{41})\bigl(\sign(\gamma_{23}+\gamma_{34}+\gamma_{24})-\sign(\gamma_{23})\bigr)
\bigl(\sign(\gamma_{23}+\gamma_{34}+\gamma_{24})-\sign(\gamma_{34})\bigr)
\nn\\
&&
-\sign(\beta_{44})\bigl(\sign(\gamma_{12}+\gamma_{23}+\gamma_{13})-\sign(\gamma_{12})\bigr)
\bigl(\sign(\gamma_{12}+\gamma_{23}+\gamma_{13})-\sign(\gamma_{23})\bigr).
\label{diffOmm4}
\eea
As expected, the result is moduli-independent, as required for the consistency of \eqref{diffOm4a}.
Moreover,  $D_4$ vanishes if $\gamma_{ij}$  is the adjacency matrix
of a generic quiver without loops.
Unfortunately, the result \eqref{diffOmm4} does not immediately suggest a generalization to $n\geq 5$.

\acknowledgments

The authors are grateful to Jan Manschot for collaboration at an initial stage of this project, and
to Sibasish Banerjee, Frederik Denef, Greg Moore and Ashoke Sen for discussions. The hospitality and financial support of the
Theoretical Physics Department of CERN, where this work was initiated,  is also gratefully
acknowledged.

\appendix

\section{Sign identities}
\label{ap-signs}

In this appendix we collect several sign identities which are used in the main text.
The basic identity, which is used in most manipulations, is
\be
(\sgn(x_1)-\sgn(x_2))\,\sgn(x_1-x_2)=1-\sgn(x_1)\,\sgn(x_2).
\label{signprop-ap}
\ee
Its validity follows from that its two sides are locally constant functions
with the same discontinuities and having the same value for, say, positive $x_i$.

Next, let us prove the following identity\footnote{If $y_i=\sign(z_i)$, the r.h.s. can be rewritten as
$ \prod_{i=1}^{n-1} \(\sgn(z_i)+\sgn(z_{i+1})\)$.}
\be
\sum_{j=1}^n \prod_{i=1\atop i\ne j}^n \(\sgn(x_i-x_j)+y_i\)
= \hf\,\prod_{i=1}^{n} \(y_i+1\)-\hf\,\prod_{i=1}^{n} \(y_i-1\).
\label{signident4}
\ee
We proceed by induction. For $n=2$ it trivially holds.
Assuming that it holds for $n-1$, we find the smallest $x_i$ and order the variables so that this is $x_n$.
This allows to write
\bea
\sum_{j=1}^n \prod_{i=1\atop i\ne j}^n  \(\sgn(x_i-x_j)+y_i\)
&=&\(y_n-1\)\sum_{j=1}^{n-1} \prod_{i=1\atop i\ne j}^{n-1} \(\sgn(x_i-x_j)+y_i\)
+\prod_{i=1}^{n-1} \(\sgn(x_i-x_n)+y_i\)
\nn\\
&=& \hf\(y_n-1\)\[\prod_{i=1}^{n-1} \(y_i+1\)-\prod_{i=1}^{n-1} \(y_i-1\)\]+\prod_{i=1}^{n-1} \(y_i+1\)
\\
&=&
\hf\,\prod_{i=1}^{n} \(y_i+1\)-\hf\,\prod_{i=1}^{n} \(y_i-1\),
\nn
\eea
where at the second step we have used the induction hypothesis.

The identity \eqref{signident4} can be used to derive another one.
Let us choose $i,j$ in \eqref{signident4} to run from 0 to $n$ and take $x_0=y_0=0$.
Then the identity gives
\be
-\sum_{j=1}^n \sgn(x_j)\prod_{i=1\atop i\ne j}^n \(\sgn(x_i-x_j)+y_i\)+\prod_{i=1}^n \(\sgn(x_i)+y_i\)=
\hf\,\prod_{i=1}^{n} \(y_i+1\)+\hf\,\prod_{i=1}^{n} \(y_i-1\).
\ee
Taking the sum of this new relation with \eqref{signident4} and rearranging the terms, one then obtains
\be
\sum_{i=1}^n \(\sgn(x_i)-1\)\prod_{i=1\atop i\ne j}^n \(\sgn(x_i-x_j)+y_i\)=
\prod_{i=1}^{n} \(\sgn(x_i)+y_i\)-\prod_{i=1}^{n} \(1+y_i\).
\label{mainsign}
\ee

Finally, a useful particular case of the above identities is obtained by setting $y_i=0$.
In this way, one finds
\bea
\sum_{j=1}^n \prod_{i=1\atop i\ne j}^n \sgn(x_i-x_j)&=& \eps(n),
\label{signident1}
\\
\sum_{j=1}^n \(\sgn(x_j)-1\) \prod_{i=1\atop i\ne j}^n \sgn(x_i-x_j)&=& \prod_{i=1}^n \sgn(x_i)-1,
\label{signident3}
\eea
where $\eps(n)$ is the parity of $n$, i.e. it equals 1 if $n$ is odd and 0 if $n$ is even.

\section{Flow vectors and cancellation of fake discontinuities}
\label{ap-vectors}

Given the results of section \ref{subsec-flow}, it is natural to attach to each vertex $v\in V_T$ on the flow tree
a $\ell$-dimensional vector
\be
\Avl_v=(a_{1},\dots, a_{\ell})\in\IR^\ell,
\ee
where $\ell-2$ is the depth of the vertex and the coefficients $a_i$ are constructed
from charges $\alpha_i$, attached to the branch of the tree joining the vertex with the root (see Fig. \ref{fig-AFtreen}),
as in \eqref{expr-a}.
The arguments of the sign functions appearing in the stability conditions
will then reduce to inner products of $\Avl_v$ with the `central charge vector' $\Zvl=(c_{\alpha_1},\dots, c_{\alpha_\ell})$
where $c_{\alpha} =  \Im\bigl[ Z_{\alpha}\bZ_\gamma\bigr]$ (cf. \eqref{notaion-c}).
However, this construction must be slightly modified in order
to take into account that the coefficients $a_{i}$ are defined only up to an overall shift.
Moreover,  it is useful to allow for a non-trivial
metric $\gl^{ij}$ on $\IR^\ell$, depending on the physical context,\footnote{For instance,
for the D4-D2-D0 brane system on a Calabi-Yau threefold, the relevant metric is $g^{ij}=\kappa_{abc}p_i^at^b t^c \, \delta^{ij}$
where $\kappa_{abc}$ are the triple intersection numbers,
$p^a_i$ is the D4-component of the electromagnetic charge $\alpha_i$, and $t^a$ are the K\"ahler moduli of the Calabi-Yau.}
so that the inner product takes the form
$(\Av,\Av')\equiv\sum_{i,j}^\ell \gl^{ij}\Av_i\Av'_j$.
The results presented here are completely independent of this metric, so for the purposes of
the present analysis we could take $g^{ij}=\delta^{ij}$.

Thus, we define the {\it flow vector} assigned to a vertex $v\in V_T$ as the projection
of $\Avl_v$ on the hyperplane orthogonal to the unity vector $\unit=(1,\dots, 1)$,
\be
\Cvl_v\equiv \Avl_{v\perp \unit}=\Avl_v-\frac{(\Avl,\unit)}{(\unit,\unit)}\, \unit,
\ee
or in terms of components
\be
\Cvl_{vi}=\gl^{-1}\sum_{j=1}^\ell(a_i-a_j) \gl^j ,
\label{vecCv}
\ee
where $\gl^i=\sum_j \gl^{ij}$ and $\gl=\sum_i \gl^i=(\unit,\unit)$.
We also modify the definition of the central charge vector $\Zvl$ taking its components to be
\be
\Zvl_{i}=\sum_{j=1}^\ell (\gl^{-1})_{ij}\,c_{\alpha_j}.
\ee
It is then easy to see that by virtue of $(\unit,\Zvl)=0$ we have
\be
\sign(\Cvl_v,\Zvl)=\sign(\Avl_v,\Zvl)=-\sign\,\Im\bigl(Z_{\gamma_{\Lv{v}}}\bar Z_{\gamma_{\Rv{v}}}(z_{p(v)})\bigr),
\ee
i.e. the flow vectors correctly encode the moduli dependent signs entering in \eqref{kappaT}.

Note that if one of the charges, say $\alpha_{i_0}$, can be decomposed into a sum $\alpha_{i_0}=\sum_{s=1}^{r+1}\alpha'_s$
(for instance, $\alpha'_s$ can be the charges assigned to the leaves of the tree),
then the above construction has a natural embedding into $\IR^{\ell+r}$.
To display it, let us introduce two operations: embedding and contraction.
For arbitrary vectors $\Vvi{\ell}$ and $\Uvi{\ell+r}$,
they are defined by $\Vvi{\ell+r}\equiv\iota_{i_0,r}\Vvi{\ell}$ and $\Uvi{\ell}\equiv\varsigma_{i_0,r}\Uvi{\ell+r}$ where
\be
\begin{array}{llll}
\Vvi{\ell+r}_{i}= \Vvi{\ell}_{i}, \qquad &  i<i_0, \qquad
& \Uvi{\ell}_{i}=\Uvi{\ell+r}_{i}, \qquad &  i<i_0,
\\
\Vvi{\ell+r}_{i}=\Vvi{\ell}_{i_0}, \qquad &  i_0\le i\le i_0+r,\qquad
& \Uvi{\ell}_{i_0}=\sum_{i=i_0}^{i+r}\Uvi{\ell+r}_{i}, \qquad &
\\
\Vvi{\ell+r}_{i}=\Vvi{\ell}_{i-r}, \qquad & i>i_0+r,\qquad
& \Uvi{\ell}_{i}=\Uvi{\ell+r}_{i+r}, \qquad &  i>i_0,
\end{array}
\label{embedC}
\ee
and have also obvious extension to matrices.
Then, if we require that $\gi{\ell}=\varsigma_{i_0,r}\gi{\ell+r}$
and take the $\ell+r$-dimensional central charge vector $\Zvi{\ell+r}$ to be defined as above
with the set of charges $\{\alpha_i\}$ replaced by
$\{\alpha_1,\dots,\alpha_{i_0-1},\alpha'_1,\dots,\alpha'_{r+1},\alpha_{i_0+1}\dots,\alpha_\ell\}$,
it is easy to check that the vector $\Cvi{\ell+r}_v\equiv \iota_{i_0,r}\Cvl_v$
satisfies $(\Cvi{\ell+r}_v,\Zvi{\ell+r})=(\Cvl_v,\Zvl)$.
Thus, both constructions are equally suitable for describing the stability conditions of the flow tree.

This freedom allows to avoid the inconvenience of working with vectors of different size.
To this end, it is sufficient to expand all charges $\alpha_i$ in the basis of charges assigned to the leaves of the tree
and embed all vectors $\Cvl_v$ into $\IR^n$ using the above prescription.
In practice, however, this is not necessary. In this appendix we will work simultaneously
with at most two vectors when we define their mutual orthogonal projections.
In such case the relevant branch is shown in Fig. \ref{fig-AFtreenm} and in the notations of the picture
it is enough to embed the vectors corresponding to vertices $v$ and $v'$ in the minimal common space $\IR^{\ell+r}$.
Moreover, if $r=0$ (i.e. $v'$ belongs to the branch connecting $v$ with the root), only one of the vectors requires the embedding.
Below we always choose the flow vectors in the minimal possible representation and drop the dimension label.

\lfig{The relevant branch of an attractor flow tree connecting the root and two vertices.}
{AFtree-nm}{8cm}{fig-AFtreenm}{-1.2cm}

The flow vectors turn out to be very handy for  showing the cancellation of  discontinuities across
`fake walls'
in the sum over flow trees.
Moreover, very similar vectors play a crucial role in the study of convergence and modular properties of
indefinite theta series defining the BPS partition function for D4-D2-D0 black holes
\cite{Alexandrov:2016enp,Alexandrov:2017qhn,ap-to-appear}.
Before turning to the first of these
topics however, we state some important properties satisfied by the flow vectors,
referring to subsection \ref{ap-proof} for the proofs.

%\medskip
\begin{proposition}\label{prop1}
If $v$ is not the root vertex, the vector $\Cv_v$ has the form
\be
\Cv_{v}=\[\prod_{v'\in \wp(v)}\gamma_{\Lv{v'}\Rv{v'}}\]^{-1} \tCv_v,
\label{formCv}
\ee
where $\wp(v)$ is the set of vertices on the path from $v$ towards the root (including the root), and $\tCv_v$
is a vector which is cyclically symmetric in $\gamma_{\Lv{v}}$, $\gamma_{\Rv{v}}$ and $\gamma_{\Rv{p(v)}}$
(assuming that $v$ belongs to the left branch of $p(v)$, i.e. that $\gamma_{\Lv{p(v)}}=\gamma_{\Lv{v}}+\gamma_{\Rv{v}}$).
\end{proposition}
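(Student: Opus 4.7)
The plan is to exploit the cyclic covariance of the defining recursion \eqref{rec-a} under the permutation $\sigma=(1\,2\,3)$ that acts on the first three charges and leaves $\alpha_i$ untouched for $i\geq 4$. The key observation is that $\hat\alpha_k=\alpha_1+\cdots+\alpha_k$ is $\sigma$-invariant for every $k\geq 3$, so the coefficients $\beta_{ki}=\langle\hat\alpha_k,\alpha_i\rangle$ with $k\geq 3$ transform simply as $\beta'_{ki}=\beta_{k,\sigma(i)}$. Consequently the recursive system $\sum_{i=1}^k\beta_{ki}a_i=0$ (for $k\geq 3$) is invariant under the joint action $\alpha_i\mapsto\alpha_{\sigma(i)}$, $a_i\mapsto a'_i$, and the general solution of the cycled system takes the form $a'_i=c\,a_{\sigma(i)}+\lambda$ for some scalar $c$ and shift $\lambda$.

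The two constants are pinned down by the boundary conditions $a_1=0$, $a_2=-1$: imposing $a'_1=0$ forces $\lambda=c$, and $a'_2=-1$ then forces $c(a_3+1)=-1$. Substituting $a_3=\beta_{32}/\beta_{33}$ and using the identity $\beta_{33}+\beta_{32}=\gamma_{12}+\gamma_{13}=-\beta_{31}=-\beta'_{33}$ one obtains the key relation
\be
c=\frac{\beta_{33}}{\beta'_{33}}.
\ee
A short calculation (which, for a non-trivial metric $g^{ij}$, requires that $\sigma$ also transports the metric, so that the cycled inner product enters in the definition of the cycled projection) shows that the shift $\lambda\unit$ drops out under the projection orthogonal to $\unit$, yielding $\Cv_v^\sigma=c\,P\,\Cv_v$, where $P$ is the cyclic permutation of the first three components and $\Cv_v^\sigma$ denotes the flow vector built from the cycled data.

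It remains to identify the prefactor. For each ancestor $p^j(v)$ with $j\geq 1$ the two descendants are $p^{j-1}(v)$, of charge $\hat\alpha_{j+1}$, and the sibling subtree of charge $\alpha_{j+2}$, hence $\gamma_{\Lv{p^j(v)}\Rv{p^j(v)}}=\beta_{j+2,j+2}$ and $\prod_{v'\in\wp(v)}\gamma_{\Lv{v'}\Rv{v'}}=\prod_{k=3}^{\ell}\beta_{kk}$. All factors with $k\geq 4$ are $\sigma$-invariant since both $\hat\alpha_k$ and $\alpha_k$ are, while $\beta_{33}$ is sent to $\beta'_{33}=\beta_{33}/c$. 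Therefore the product scales by $1/c$ under $\sigma$, and combining with the previous step gives $\tCv_v^\sigma=(1/c)\prod\cdot c\,P\,\Cv_v=P\,\tCv_v$, which is precisely the claimed cyclic symmetry. The case in which $v$ lies on the right branch of $p(v)$ is handled by the obvious relabelling $\Lv{}\leftrightarrow\Rv{}$ at $p(v)$.

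The main subtlety is the bookkeeping of how the metric $g^{ij}$ is transported under $\sigma$ so that the projection onto $\unit^\perp$ intertwines properly with $P$; once this is handled the rest is a purely algebraic manipulation of the recursion together with the elementary evaluation of $c$.
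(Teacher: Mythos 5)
Your proof is correct and follows the same overall architecture as the paper: establish that the coefficients $a_i$ transform under a cyclic permutation of the first three charges by an affine law $a'_i = c\,a_{\sigma(i)}+\lambda$, observe that the shift $\lambda$ drops out of the differences $a_i-a_j$ defining $\Cv_v$ so that $\Cv'_{v,i}=c\,\Cv_{v,\sigma(i)}$, and finally absorb the scaling $c=\beta_{33}/\beta'_{33}$ into the prefactor $\prod_{k\geq 3}\beta_{kk}$, whose only non-invariant factor is $\beta_{33}$.

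The one place you deviate is in how the transformation law itself (the paper's Lemma 2 / Eq.~\eqref{cycl-a}) is derived. The paper does this by an explicit change of variables, setting $a'_i=\alpha_{23}+\beta_{33}a_i$ so that $a'_1,a'_2,a'_3$ become $\alpha_{23},\alpha_{31},\alpha_{12}$ (manifestly cyclic up to sign), and then verifies by induction on the recursion that $a'_{i\geq 4}$ is invariant. You instead make a structural uniqueness argument: the cycled system $\sum_{i\leq k}\beta_{k\sigma(i)}a'_i=0$ ($k\geq 3$) has a two-dimensional solution space spanned by the constant vector and $\{a_{\sigma(i)}\}$, so $a'_i=c\,a_{\sigma(i)}+\lambda$ for some $c,\lambda$, which are then pinned down by the boundary data $a'_1=0,\,a'_2=-1$. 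Your route is more conceptual and avoids having to guess the magic shift $\alpha_{23}$; the paper's explicit substitution has the advantage of simultaneously yielding the transformation under all of $S_3$ (including the odd permutations, whence the sign $\epsilon_\sigma$ in Lemma 2), which is harmless here since only cyclic symmetry is claimed. Both arguments implicitly require the same genericity ($\beta_{kk}\neq 0$) to have a well-posed recursion, and both treat the metric components $g^i$ as transported along with the charges, which you correctly flag as the one bookkeeping subtlety.
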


Next, let us introduce the projection of $\Cv_{v'}$ on the subspace orthogonal to $\Cv_{v}$
\be
\Cv_{v'\perp v}=\Cv_{v'}-\frac{(\Cv_{v'},\Cv_{v})}{(\Cv_{v},\Cv_{v})}\, \Cv_{v}.
\ee
Such projections naturally appear when one evaluates the discontinuities of BPS indices across walls
determined by equations $(\Cv_v,\Zv)=0$.
It turns out that the projections also possess certain symmetry properties summarized in
the following proposition:

\begin{proposition}\label{theorem1}
\label{th-projections}
Depending on the relative position of vertices $v$ and $v'$ on the tree, one has four different situations:

\begin{enumerate}
\item
\label{case1}
If $v$ is the root vertex of an attractor flow tree $T$ and $v'$ belongs to its left branch, then $\Cv_{v'\perp v}$
coincides with $\Cv_{v'}^{T_1}$ which is constructed for the tree $T_1$ obtained from $T$ by removing the root and its right branch.
Equivalently,
\be
\sign(\Cv_{v'}^{T_1},\Zv)=-\sign\,\Im\bigl(Z_{\gamma_{\Lv{v}}}\bar Z_{\gamma_{\Rv{v}}}(z_{p(p(v))})\bigr),
\ee
i.e. the attractor flow is undone by one step and the central charges are evaluated at the moduli
corresponding to the parent to parent vertex.

\item
\label{case2}
If $v$ is not the root vertex and is either ancestor of $v'$, or a child of one of its ancestors,
then $\Cv_{v'\perp v}$ is the same for the three trees shown in Fig. \ref{fig-AFtreebig}.

\lfig{Three attractor flow trees relevant for the case \protect\ref{case2} of the Proposition.
Here the circles denote the branches of the flow tree attached to the corresponding vertices, whereas
the dots indicate that there can be any number of vertices with the corresponding branches.}
{AFtree-big}{18cm}{fig-AFtreebig}{-1.2cm}

\item
\label{case3}
If $v$ is a descendant of $v'$, then
\be
\Cv_{v'\perp v}=\gamma_{\Lv{v'}\Rv{v'}} \,\tCv_{v'\perp v},
\label{formCvvp}
\ee
where $\tCv_{v'\perp v}$ has the same symmetry as $\tCv_v$, i.e.
it is cyclically symmetric in $\gamma_{\Lv{v}}$, $\gamma_{\Rv{v}}$ and $\gamma_{\Rv{p(v)}}$.

\item
\label{case4}
If the pair $v,v'$ does not correspond to any of the previous cases,
i.e. a path from $v$ to $v'$ goes at least 2 steps up and some steps down,
then $\Cv_{v'\perp v}$ is cyclically symmetric in $\gamma_{\Lv{v}}$, $\gamma_{\Rv{v}}$ and $\gamma_{\Rv{p(v)}}$.
\end{enumerate}
\end{proposition}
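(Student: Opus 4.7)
The proof is by direct manipulation of the coefficient vectors $\Av_v$ defined by the recursion~\eqref{rec-a}, using Proposition~\ref{prop1} only to track the overall prefactor $[\prod_{w\in\wp(v)}\gamma_{\Lv{w}\Rv{w}}]^{-1}$. Since $(\unit,\Zv)=0$, the orthogonal projections onto $\unit^\perp$ may be dropped inside any inner product that is ultimately paired with $\Zv$, so one may work directly with the raw coefficient vectors. The strategy in all four cases is to expand
\be
\Cv_{v'\perp v} = \Cv_{v'} - \frac{(\Cv_{v'},\Cv_v)}{(\Cv_v,\Cv_v)}\,\Cv_v,
\ee
substitute the explicit coefficients from~\eqref{expr-a}, and identify which factors $\gamma_{\Lv{w}\Rv{w}}$ cancel between the ratio and the Proposition~\ref{prop1} prefactors.

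For Case~\ref{case1}, the root vector $\Cv_{v_0}$ takes just two distinct values on its components, corresponding to the left and right subtrees of $v_0$. Subtracting the orthogonal projection of $\Cv_{v'}$ onto $\Cv_{v_0}$ amounts to a common additive shift of the coefficients $a_i$ of $\Av_{v'}$ indexed by the left subtree, which by the shift invariance noted below~\eqref{rec-a} is equivalent to rerunning the recursion starting from $z_{p(v_0)}$ in place of $z_\infty$. Matching the resulting coefficient sequence against~\eqref{expr-a} applied to the tree $T_1$ in which the root and its right subtree have been pruned yields the identification $\Cv_{v'\perp v}=\Cv_{v'}^{T_1}$ directly.

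For the symmetry statements in Cases~\ref{case2} and~\ref{case4}, let $w$ denote the lowest common ancestor of $v$ and $v'$, or its parent when $v$ is a child of an ancestor of $v'$. Formula~\eqref{expr-a} shows that the $a$-coefficients above $w$ depend on the subtrees meeting at $w$ only through their total charges, and the same holds for $\Av_{v'}$. In Case~\ref{case2} this is exactly the statement that the three trees of Figure~\ref{fig-AFtreebig} share the same projection, since the rearranged subtrees leave the total charges at $w$ unchanged. In Case~\ref{case4} the three subtree charges $\gamma_{\Lv{v}}$, $\gamma_{\Rv{v}}$, $\gamma_{\Rv{p(v)}}$ meeting at $p(v)$ sum to the charge at $p(p(v))$ and hence form a cyclically symmetric triple, from which the claimed cyclic symmetry of $\Cv_{v'\perp v}$ follows by invoking the cyclic symmetry of $\tCv_v$ guaranteed by Proposition~\ref{prop1}.

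The main obstacle is Case~\ref{case3}, where $v$ is a descendant of $v'$. Here $\wp(v)\supset\wp(v')$, so the denominator prefactor of $\Cv_v$ is divisible by that of $\Cv_{v'}$, and the ratio $(\Cv_{v'},\Cv_v)/(\Cv_v,\Cv_v)$ must supply the missing factors. The plan is to show by direct substitution that every $\gamma_{\Lv{w}\Rv{w}}$ with $w\in\wp(v')$ cancels between numerator and denominator, leaving precisely $\gamma_{\Lv{v'}\Rv{v'}}$ uncancelled and producing the prefactor on the right-hand side of~\eqref{formCvvp} after multiplication by the overall Proposition~\ref{prop1} normalization. The cyclic symmetry of $\tCv_{v'\perp v}$ then follows because, after the cancellation, the dependence of the difference on the subtree sitting between $v'$ and $v$ enters only through its total charge, which together with $\gamma_{\Lv{v}}$ and $\gamma_{\Rv{v}}$ forms the cyclic triple at $p(v)$. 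The delicate point is verifying that none of the $\beta_{kk}$ denominators from~\eqref{expr-a} vanish during the cancellation; for generic DSZ matrix this is automatic, and the non-generic case is handled by the perturbation prescription of footnote~\ref{foogen}.
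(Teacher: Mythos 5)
Your case split and the decision to track the Proposition \ref{prop1} prefactor while working on $\unit^\perp$ match the paper's setup, but in Cases \ref{case1}--\ref{case3} the argument does not close: the paper's proof hinges on explicit transformation laws for the coefficients $a_i$, $\am{1}_i$ and the bilinear quantities $\dm{m}_{ij}$ under permutations of the $\alpha_i$'s (Lemmas 1--3 of Appendix \ref{ap-proof}), and these are what your sketch omits. In Case \ref{case1}, subtracting the projection onto $\Cv_{v_0}$ is not a ``common additive shift'' of the left-subtree coefficients: the component along the right branch is annihilated outright and the remaining components are rescaled by $g/g'$, and identifying the result with the $T_1$ flow vector requires the short explicit computation the paper carries out --- the shift invariance noted below \eqref{rec-a} applies to shifting \emph{all} $a_i$ simultaneously and is not the operative mechanism here.

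In Case \ref{case2}, the claim that ``the $a$-coefficients above $w$ depend on the subtrees meeting at $w$ only through their total charges'' is false: the charge $\gamma_v$, and hence $a_{m+2}$, $a_{m+3}$ and the vectors $\Cv_v$, $\Cv_{v'}$ individually, all change between the three trees. Only the combination $\Cv_{v'\perp v}$ is invariant, and establishing this requires the transformation law \eqref{tr-dij} for $\dm{m}_{ij}$ (for the first two trees) and then a considerably longer re-embedding argument, involving the auxiliary quantities $\tbm{m}_i$, $\cm{m}_i$ and culminating in $\tdm{m}_{ij}=\dm{m}_{ij}$ (for the third tree). Likewise in Case \ref{case3} your ``plan'' of direct substitution is a statement of intent rather than a proof, and it does not distinguish the trivial subcase ($v$ not a child of $v'$) from the nontrivial one ($v$ a child of $v'$), which requires the law \eqref{cycl-ap} for $\am{1}_i$ to establish the cyclic covariance of the projection. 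Your Case \ref{case4} argument does track the paper's brief treatment. In short, the scaffolding is right but the lemmas that do the actual work are absent, and the heuristic statements substituted for them are either incorrect or unsupported.
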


The first part of Proposition \ref{theorem1}  immediately allows to see the consistency
of the attractor flows with the primitive wall-crossing formula \eqref{primwc}.
Indeed, the discontinuity of the weight $\Delta(T)$ \eqref{kappaTa}
due to the sign factor arising at the root vertex $v_0$ evaluates to
\be
\label{disc-kappaT}
\disc_{v_0}\Delta(T)=
\Delta_{v_0}(T_L;T)\,\Delta_{v_0}(T_R;T),
\ee
where $T_L$ and $T_R$ are the two branches of the tree $T$ growing from the root vertex and we defined
\be
\Delta_{v}(T';T)=\prod_{v'\in T'_{0}}\,\frac12\, \Bigl[\sign ( \Cv_{v'\perp v}^{T},\Zv)- \sign (\gamma_{\Lv{v'}\Rv{v'}}) \Bigr]
\label{kappaperp}
\ee
for a subtree $T'\subset T$ not containing $v$. Here we indicated explicitly
with an upper index that the vector $\Cv_{v'\perp v}$ is defined for the tree $T$.
But due to Proposition \ref{theorem1} (case \ref{case1}), for $v=v_0$ it coincides with the vector
$\Cv_{v'}$ defined for either $T_L$ or $T_R$ so that
$\Delta_{v}(T_{L,R};T)=\Delta(T_{L,R})$. Thus, one obtains
\be
\disc_{v_0}\Delta(T)
=
\Delta(T_L)\,\Delta(T_R) .
\ee
After multiplying by $\kappa(T)=-\kappa( \gamma_{\Lv{v_0}\Rv{v_0}})\kappa(T_L)\kappa(T_R)$, by the product
of the attractor indices $\prod_{i=1}^n\bOm_*(\gamma_i)$, and summing over all trees,
this is indeed in agreement with the primitive wall crossing formula.

\lfig{Three trees whose contributions are discontinuous on the same fake wall,
but which conspire to give a smooth contribution to the BPS index.}
{AFtree33}{18cm}{fig-AFtree33}{-1.5cm}

While only the first part of Proposition \ref{theorem1} entered this proof,
the remaining three parts ensure the cancelation of the fake discontinuities appearing
due to the sign factors in \eqref{kappaT} attached to non-root vertices. To see this,
note that due to Proposition \ref{prop1},
the same fake wall arises for three different trees whose parts containing vertex $v$
are shown in Fig. \ref{fig-AFtree33}.
They are obtained by cyclic permutations of the subtrees $T_1$, $T_2$ and $T_3$.
Their discontinuities at the fake wall are given by
\be
\disc_{v}\bigl[\kappa(T)\Delta(T)\bigr]=
\kappa(T)\,
\Delta_{v}(T_v;T)\prod_{i=1}^3\Delta_{v}(T_i;T)
\prod_{v'\in \wp(v)}\sign(\gamma_{\Lv{v'}\Rv{v'}}),
\ee
where the factors $\Delta_v$ have been defined in \eqref{kappaperp},
$T_v=T\setminus(T_1\cup T_2\cup T_3\cup\{v\})$ is the part of $T$
not shown in Fig. \ref{fig-AFtree33}, and the last factor comes from \eqref{formCv}.
Case \ref{case2} of the Proposition \ref{theorem1} ensures that the factors $\Delta_{v}(T_i;T)$
are the same for all three choices of $T$, whereas cases \ref{case3} and \ref{case4} tell us the same about
\be
\Delta_{v}(T_v;T)\, \prod_{v'\in \wp(v)}\sign(\gamma_{\Lv{v'}\Rv{v'}}).
\ee
Finally, one finds that
\be
\kappa(T)=\kappa(\gamma_{\Lv{v}\Rv{v}} )\kappa( \gamma_{\Lv{p(v)}\Rv{p(v)}})
\[\frac{\kappa(T_v)}{\kappa( \gamma_{\Lv{p(v)}\Rv{p(v)}})}\prod_{i=1}^3\kappa(T_i)\],
\label{decomkappa}
\ee
where the factor in the square brackets is the same for all trees shown in Fig. \ref{fig-AFtree33}.
Thus, only the first two $\kappa$-factors in \eqref{decomkappa} differ the three discontinuities.
It remains to note that in the notations of Fig. \ref{fig-AFtree33}, for the first tree,
$\gamma_{\Lv{v}\Rv{v}}=\alpha_{12}$ and $\gamma_{\Lv{p(v)}\Rv{p(v)}}=\alpha_{1+2,3}$,
whereas for other trees they are given by cyclic permutations.
Then the cancellation of fake discontinuities follows from the identity \eqref{kappa123}.

\subsection{Proof of the symmetry properties of the flow vectors}
\label{ap-proof}

Now we fill the gap and prove the symmetry properties of the flow vectors and their orthogonal projections stated above.
We concentrate only on the relevant branch of the tree and label the charges as in Fig. \ref{fig-AFtreenm}.
The flow vectors are constructed from the coefficients $a_i$ \eqref{expr-a} as explained in the beginning of this section.
These coefficients satisfy an important symmetry property under permutations of $(\alpha_1,\alpha_2,\alpha_3)$,
which will be the starting point of our analysis.

\begin{lemma}\label{lemma1}
Under a permutation $\sigma$ of $\alpha_1,\alpha_2,\alpha_3$, the coefficients $a_i$ transform as
\be
\beta_{33} a_i\ \mapsto\ \eps_\sigma \beta_{33}(a_{\sigma(i)}-a_{\sigma(1)}),
\label{cycl-a}
\ee
where $\eps_\sigma$ is the parity of the permutation.
\end{lemma}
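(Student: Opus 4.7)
The plan is to prove the identity in the equivalent form
$$\sigma(\beta_{33}\, a_i) = \eps_\sigma\, \beta_{33}\,(a_{\sigma(i)} - a_{\sigma(1)})$$
as a rational-function equality in the charges, where $\sigma\in S_3$ acts on $\alpha_1,\alpha_2,\alpha_3$ and is extended by the identity on $\{4,\dots,\ell\}$. I would proceed by induction on $i$, using the recursion \eqref{rec-a} together with the transformation law for the $\beta_{kj}$. The crucial observation is that for $k\geq 3$ the combination $\hat\alpha_k$ is $\sigma$-invariant, so $\sigma(\beta_{kj}) = \beta_{k,\sigma(j)}$; the $\beta$'s with $k\in\{1,2\}$ transform in a more complicated way, but they never enter the recursion nor the explicit formula \eqref{expr-a} (whose sums start at $j_1 = 2$ and only use $\beta$'s with first index $\geq 3$).

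For the base cases, $i = 1$ is trivial since $a_1 = 0$. For $i = 2$, using $a_2 = -1$, the claim reduces to $-\sigma(\beta_{33}) = \eps_\sigma\,\beta_{33}\,(a_{\sigma(2)} - a_{\sigma(1)})$, which I would verify for each of the six elements of $S_3$; the non-trivial verifications all collapse via the single summation identity $\beta_{31}+\beta_{32}+\beta_{33} = \langle\hat\alpha_3,\hat\alpha_3\rangle = 0$. The case $i = 3$ is entirely analogous, combining $a_3 = \beta_{32}/\beta_{33}$ with $\sigma(a_3) = \beta_{3,\sigma(2)}/\beta_{3,\sigma(3)}$ and the same sum identity.

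For the inductive step with $k\geq 4$ (so that $\sigma(k) = k$), I would apply $\sigma$ to the recursion $\beta_{kk}\,a_k = -\sum_{j=1}^{k-1}\beta_{kj}\,a_j$, multiply through by $\sigma(\beta_{33})$, and invoke the induction hypothesis on the right to substitute $\sigma(\beta_{33})\sigma(a_j) = \eps_\sigma\beta_{33}(a_{\sigma(j)} - a_{\sigma(1)})$. The resulting sum can then be simplified by reindexing $m = \sigma(j)$ --- which is a bijection on $\{1,\dots,k-1\}$ since $\sigma$ fixes everything outside $\{1,2,3\}$ --- and using $\sum_{m=1}^k \beta_{km} = 0$ together with the original recursion to collapse the right-hand side to $\eps_\sigma\beta_{33}\beta_{kk}(a_k - a_{\sigma(1)})$. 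Dividing by $\beta_{kk}$ and recognising $a_{\sigma(k)} = a_k$ closes the induction. I expect the main difficulty to be purely notational, namely keeping track of which indices of the $\beta$'s are acted upon by $\sigma$ and making sure that the reindexing neither drops nor duplicates a term.
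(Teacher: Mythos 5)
Your proof is correct, and at its core it uses the same mechanism the paper does: induction on $i$ driven by the recursion \eqref{rec-a}, combined with the fact that $\hat\alpha_k$ is $\sigma$-invariant for $k\geq 3$ (hence $\sigma(\beta_{kj})=\beta_{k\sigma(j)}$) and the summation identity $\sum_{m=1}^k\beta_{km}=0$. The one organizational difference is that the paper first performs the affine shift $a'_i = \alpha_{23} + \beta_{33}a_i$, which turns the inhomogeneous transformation law $\beta_{33}a_i \mapsto \eps_\sigma\beta_{33}(a_{\sigma(i)}-a_{\sigma(1)})$ into the homogeneous one $a'_i\mapsto \eps_\sigma a'_{\sigma(i)}$; after that, the base cases $i\leq 3$ are immediate because $a'_1,a'_2,a'_3 = \alpha_{23},\alpha_{31},\alpha_{12}$ and these are manifestly permuted with sign $\eps_\sigma$, and for $i\geq 4$ the recursion propagates the sign directly, since the shift does not change the form of the recursion (thanks to the same identity $\sum_m\beta_{km}=0$). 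Your version avoids the shift and instead tracks the $-a_{\sigma(1)}$ term explicitly; this costs you a case-by-case check over $S_3$ in the base cases and a slightly longer cancellation in the inductive step (where $\sum_{m<k}\beta_{km}=-\beta_{kk}$ kills the inhomogeneous piece), but it is a perfectly valid and complete argument. In short, both proofs are inductions on the recursion; the paper's shift is a small but effective piece of bookkeeping that renders the base cases and the sign-propagation manifest without enumeration.
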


\begin{proof}
Let us use the freedom of shifting all $a_i$ by the same constant and define
\be
a'_i = \alpha_{23} + \beta_{33} a_i ,
\ee
where, as usual, $\alpha_{ij}=\langle\alpha_i,\alpha_j\rangle$.
Then
\be
a'_1= \alpha_{23},
\qquad
a'_2= \alpha_{31},
\qquad
a'_3= \alpha_{12}
\ee
are permuted under permutations of $\alpha_1,\alpha_2,\alpha_3$ with the sign given by $\eps_\sigma$, while $a'_{i\geq 4}$
are invariant up to the same sign, as can be shown inductively from the recursion
\be
\beta_{ii} a'_i = - \sum_{j=4}^{i-1} \beta_{ij} a'_j - \gamma_{23} \beta_{i1}
- \gamma_{31} \beta_{i2} - \gamma_{12} \beta_{i3}
\qquad (i\geq 4)
\ee
and the fact that $\beta_{ij}$ is invariant under  cyclic permutations of $\alpha_1,\alpha_2,\alpha_3$ whenever $i,j\geq 4$.
Returning to the original coefficients $a_i$, we arrive at the transformation \eqref{cycl-a}.
\end{proof}

Using this Lemma, it is straightforward to prove Proposition \ref{prop1}.

\begin{proof}[\bf Proof of  Proposition \ref{prop1}:]
In this case it is sufficient to consider only the branch of the tree connecting vertex $v$ to the root.
Our aim is to prove that $\beta_{33}\Cv_{vi}$ is mapped by a cyclic permutation $\sigma$ to $\beta_{33}\Cv_{v\sigma(i)}$.\footnote{We multiply
the argument of the sign by $\beta_{33}$ only, whereas in \eqref{formCv}
the factor $\prod_{j=3}^n\beta_{jj}$ is extracted. But $\beta_{jj}$ for $j>3$ depend on the first three charges only in
the combination $\alpha_1+\alpha_2+\alpha_3$ and therefore are cyclically symmetric in these charges. Thus, they are irrelevant
for the present discussion.}
This can be done using the explicit expression \eqref{vecCv} for the components of the vector $\Cv_v$
and Lemma \ref{lemma1}. Indeed, the transformation law \eqref{cycl-a} implies
(for a cyclic permutation $\eps_\sigma=1$)
\be
\beta_{33}\Cv_{vi}\mapsto \beta_{33}g^{-1}\sum_{j=1}^\ell(a_{\sigma(i)}-a_{\sigma(j)}) g^{\sigma(j)}
=\beta_{33}g^{-1}\sum_{j=1}^\ell(a_{\sigma(i)}-a_j)g^j =\beta_{33}\Cv_{v\sigma(i)},
\ee
which is the required statement.
\end{proof}

In the following we will also need the coefficients of $\Cvi{\ell}_{v_{m+1}}\equiv\iota_{1,m}\Cvi{\ell-m}_{v_{m+1}}$,
which is, in the notations of Fig. \ref{fig-AFtreen}, the flow vector assigned to
vertex $v_{m+1}$ and embedded into $\IR^\ell$, the habitat of the flow vector for $v_1$.
From \eqref{embedC} it follows that
\be
\Cvi{\ell}_{v_{m+1},i}=g^{-1}\sum_{j=1}^\ell\(\am{m}_i-\am{m}_j\)g^j,
\label{vecm-def}
\ee
where the coefficients $\am{m}_i$ are obtained by the following substitution
\be
\begin{split}
\am{m}_i=&\, 0, \qquad \qquad\qquad\qquad\quad\ i\le m+1,
\\
\am{m}_i=&\, a_{i-m}|_{\alpha_1\to \alpha_1+\cdots+\alpha_{m+1} \atop \alpha_j\to \alpha_{j+m},\ \ j\ge 2}, \qquad i>m+1 \, .
\end{split}
\label{def-apm}
\ee
Note that under this substitution, $\beta_{ij}\to \beta_{i+m,j+m}$ for $j>m$.
For $m=1$, only such $\beta_{ij}$ appear in the expression for $a_i$ \eqref{expr-a}. Then one has the following

%\medskip

\begin{lemma}
Under a permutation $\sigma$ of $\alpha_1,\alpha_2,\alpha_3$, the coefficients $\am{1}_i$ transform as
\be
\am{1}_i\ \mapsto\ A_\sigma\( \am{1}_{\sigma(i)}-\am{1}_{\sigma(1)}\)+A^{(1)}_\sigma \(a_{\sigma(i)}-a_{\sigma(1)}\),
\label{cycl-ap}
\ee
where
\be
\begin{split}
A_\sigma=&\, \eps_\sigma\(a_{\sigma(1)}-a_{\sigma(2)}\)=\frac{\beta_{3\sigma(3)}}{\beta_{33}},
\\
A^{(1)}_\sigma=&\, \eps_\sigma\(\am{1}_{\sigma(2)}-\am{1}_{\sigma(1)}\).
\end{split}
\ee
\end{lemma}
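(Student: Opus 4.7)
The plan is to mirror the strategy of Lemma \ref{lemma1}, combining a recursion for the $\am{1}_i$ with induction on $i$. The first task is to derive the recursion. Substituting \eqref{def-apm} with $m=1$ into \eqref{rec-a}, the replacement $\alpha_1\to\alpha_1+\alpha_2$, $\alpha_j\to\alpha_{j+1}$ ($j\ge 2$) sends $\beta_{kj}$ to $\beta_{k+1,j+1}$ for $j\ge 2$ and to $\beta_{k+1,1}+\beta_{k+1,2}$ for $j=1$. Because $\am{1}_2=0$, the $j=1$ contribution drops out and one obtains
\be
\sum_{i=3}^l \beta_{li}\,\am{1}_i = 0 \qquad (l\ge 4),
\ee
with initial data $\am{1}_1=\am{1}_2=0$ and $\am{1}_3=-1$. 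A useful preliminary observation is that $\sum_{j=1}^{3}\alpha_{j,k}$ is invariant under any permutation $\sigma\in S_3$, hence $\beta_{li}\mapsto \beta_{l\sigma(i)}$ for $l\ge 4$, $i\in\{1,2,3\}$, while $\beta_{li}$ is unchanged whenever $i\ge 4$.

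Next I verify \eqref{cycl-ap} directly for $i=1,2,3$. The cases $i=1,2$ are trivial since both sides vanish (the left because $\am{1}_1=\am{1}_2=0$, the right by the antisymmetry under the transposition of $\sigma(1)$ and $\sigma(2)$ built into $A_\sigma$ and $A^{(1)}_\sigma$). The case $i=3$ reduces to the scalar identity
\be
A_\sigma\,\am{1}_{\sigma(3)} + A^{(1)}_\sigma\,a_{\sigma(3)} + C_\sigma = -1, \qquad C_\sigma\equiv -A_\sigma\,\am{1}_{\sigma(1)}-A^{(1)}_\sigma\,a_{\sigma(1)},
\ee
which is checked for each of the six elements of $S_3$ by unfolding the definitions of $A_\sigma$ and $A^{(1)}_\sigma$.

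The induction on $i\ge 4$ is the main step. Since $\sigma$ fixes every such $i$, the claim becomes $\am{1}_l \mapsto A_\sigma\,\am{1}_l + A^{(1)}_\sigma\,a_l + C_\sigma$. Applying $\sigma$ to the recursion displayed above, substituting the inductive hypothesis for $4\le i<l$, and using the analogous relations $\sum_{i=1}^l\beta_{li}\,a_i=0$ (Lemma \ref{lemma1} combined with $a_1=0$, $a_2=-1$) and $\sum_{i=1}^l\beta_{li}=0$, one isolates $\beta_{ll}$ times the desired difference and reduces the step to the single algebraic identity
\be
\beta_{l\sigma(3)} = A_\sigma\,\beta_{l3} + A^{(1)}_\sigma\bigl(\beta_{l2}-\beta_{l3}\,a_3\bigr) - C_\sigma\bigl(\beta_{l1}+\beta_{l2}+\beta_{l3}\bigr).
\ee

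The main obstacle will be verifying this identity. Viewed as a linear relation in $\beta_{l1},\beta_{l2},\beta_{l3}$, it is equivalent to three conditions on $A_\sigma$, $A^{(1)}_\sigma$, $C_\sigma$, which I expect to check case by case for each $\sigma\in S_3$ (splitting according to the value of $\sigma(3)\in\{1,2,3\}$). One of the three conditions is precisely the $i=3$ instance of \eqref{cycl-ap} verified in the previous step; the remaining two follow by substituting the explicit expressions for $A_\sigma$ and $A^{(1)}_\sigma$ and using $a_3=\beta_{32}/\beta_{33}$. This closes the induction and establishes the transformation law.
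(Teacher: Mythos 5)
Your proof is correct, but it takes a different route from the paper's. Both proofs verify $i=1,2,3$ directly and then induct on $i\ge4$ using the recursion $\am{1}_i=-\sum_{j=3}^{i-1}\beta_{ij}\am{1}_j/\beta_{ii}$, but the inductive steps differ. The paper first extends the recursion to $\am{1}_i=-\sum_{j=1}^{i-1}\beta_{ij}\am{1}_j/\beta_{ii}$ (using $\am{1}_1=\am{1}_2=0$), applies the transformation to get a sum over $\beta_{i\sigma(j)}$ weighted by the claimed formula, then performs a change of summation variable $j\to\sigma^{-1}(j)$; after that, the three vanishing sums $\sum_{j\le i}\beta_{ij}=0$, $\sum_{j\le i}\beta_{ij}a_j=0$, $\sum_{j\le i}\beta_{ij}\am{1}_j=0$ collapse everything at once, for all $\sigma$ simultaneously. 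You instead rearrange the transformed recursion so that the inductive step becomes the scalar identity
\be
\beta_{l\sigma(3)} = A_\sigma\,\beta_{l3} + A^{(1)}_\sigma\bigl(\beta_{l2}-\beta_{l3}\,a_3\bigr) - C_\sigma\bigl(\beta_{l1}+\beta_{l2}+\beta_{l3}\bigr),
\ee
and you check its three coefficients against the definitions of $A_\sigma,A^{(1)}_\sigma,C_\sigma$ case by case over $S_3$ (I verified this: all six cases work, and the $\delta_{\sigma(3),1}$ and $\delta_{\sigma(3),2}$ and $\delta_{\sigma(3),3}$ coefficient conditions hold, with one of the three being exactly your $i=3$ verification). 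The paper's change-of-variable argument is shorter and avoids any case analysis, which is a real advantage since the very same trick also drives the proof of Lemma \ref{lemma1}; your version makes the role of the initial data $a_1=0$, $a_2=-1$, $a_3=\beta_{32}/\beta_{33}$, $\am{1}_3=-1$ more explicit, at the cost of a six-fold enumeration. One small point worth spelling out in your write-up: the coefficient-matching in $\beta_{l1},\beta_{l2},\beta_{l3}$ is legitimate because these can be varied independently (through $\alpha_{l1},\alpha_{l2},\alpha_{l3}$) while $A_\sigma,A^{(1)}_\sigma,C_\sigma$ depend only on $\alpha_{ij}$ with $i,j\le3$.
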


\begin{proof}
The check of the transformation for $i=1,2,3$ is straightforward. For $i\ge 4$ we use
the recursive formula
\be
\am{1}_{i} = -\sum_{j=3}^{i-1} \frac{\beta_{ij}}{\beta_{ii}}\,\am{1}_{j}.
\label{rec-ap}
\ee
Proceeding by induction and noticing that the sum in \eqref{rec-ap} can be extended to start from $j=1$, one obtains
\be
\begin{split}
\am{1}_i\mapsto &\,  -\sum_{j=1}^{i-1}\frac{\beta_{i\sigma(j)}}{\beta_{ii}}
\(A_\sigma\( \am{1}_{\sigma(j)}-\am{1}_{\sigma(1)}\)+A^{(1)}_\sigma\(a_{\sigma(j)}-a_{\sigma(1)}\)\)
\\
=&\,A_\sigma\( \am{1}_{\sigma(i)}-\am{1}_{\sigma(1)}\)+A^{(1)}_\sigma\(a_{\sigma(i)}-a_{\sigma(1)}\),
\end{split}
\ee
where we changed the summation variable $j\to\sigma^{-1}(j)$ and used the property $\sum_{j=1}^i \beta_{ij}=0$
as well as the two recursion relations \eqref{rec-a}, \eqref{rec-ap}.
\end{proof}

These shifted coefficients are useful, in particular, to make explicit the properties of $a_i$ under the exchange of $\alpha_{m+2}$
and $\alpha_{m+3}$ for $m\ge 1$.
Indeed, whereas for $i<m+2$, $a_i$ are independent of these charges, for $i>m+2$ we can write
\be
a_i=-a_{m+2} \am{m}_i +\bm{m}_i,
\qquad
i>m+2,
\label{aiam}
\ee
where $\bm{m}_i$ are defined by the recursion relation (cf. \eqref{rec-a})
\be
\bm{m}_{i} = -\sum_{j=2}^{m+1} \frac{\beta_{ij}}{\beta_{ii}}\,a_{j}-\sum_{j=m+3}^{i-1}
\frac{\beta_{ij}}{\beta_{ii}}\,\bm{m}_{j}.
\label{rec-api}
\ee
It is immediate to see that
\be
\bm{m}_i=-\bm{m}_{m+3} \am{m+1}_i+\cm{m}_i,
\qquad
i>m+3,
\label{apiam}
\ee
where
\be
\bm{m}_{m+3}=-\sum_{j=2}^{m+1} \frac{\beta_{m+3,j}}{\beta_{m+3,m+3}}\,a_{j},
\label{expr-am3}
\ee
whereas $\cm{m}_i$, defined by
\be
\cm{m}_{i} = -\sum_{j=2}^{m+1} \frac{\beta_{ij}}{\beta_{ii}}\,a_{j}-\sum_{j=m+4}^{i-1}
\frac{\beta_{ij}}{\beta_{ii}}\,\cm{m}_{j},
\label{rec-appi}
\ee
are invariant under the exchange of $\alpha_{m+2}$ and $\alpha_{m+3}$.

Let us also introduce
\be
\dm{m}_{ij}=a_i \am{m}_j-a_j \am{m}_i .
\label{def-dij}
\ee
Using \eqref{aiam}, one finds
\be
\dm{m}_{ij}=\left\{
\begin{array}{lcl}
0,
& \qquad & i,j<m+2,
\\
a_i\am{m}_j,
& \qquad & i<m+2,\ j\ge m+2,
\\
\bm{m}_j,
& \qquad & i=m+2,\ j>m+2,
\\
\bm{m}_i\am{m}_j-\bm{m}_j \am{m}_i ,
& \qquad & i,j>m+2.
\end{array}
\right.
\label{expr-dij}
\ee
Furthermore, using \eqref{apiam}, one obtains
\be
\dm{m}_{ij}=\left\{
\begin{array}{lcl}
\bm{m}_{m+3},
& \qquad & i=m+2,\ j=m+3,
\\
-\bm{m}_{m+3} \am{m+1}_j+\cm{m}_j,
& \qquad & i=m+2,\ j> m+3,
\\
\bm{m}_{m+3}\(\am{m}_j+\am{m}_{m+3}\am{m+1}_j\)-\cm{m}_j\,\am{m}_{m+3} ,
& \qquad & i=m+3,\ j>m+3,
\\
\bm{m}_{m+3}\(\am{m+1}_j\am{m}_i- \am{m+1}_i \am{m}_j \) +\cm{m}_i\am{m}_j-\cm{m}_j\am{m}_i,
& \qquad & i,j>m+3.
\end{array}
\right.
\label{expr-dijp}
\ee

%\medskip

\begin{lemma}
Under the exchange of $\alpha_{m+2}$ and $\alpha_{m+3}$, the coefficients $\dm{m}_{ij}$ transform as
\be
\beta_{m+3,m+3}\dm{m}_{ij}\ \mapsto\ -\beta_{m+3,m+3}\dm{m}_{\sigma(i)\sigma(j)}.
\label{tr-dij}
\ee
\end{lemma}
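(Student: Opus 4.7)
The permutation $\sigma$ is the transposition exchanging indices $m+2$ and $m+3$ and fixing all others, corresponding to the swap of charges $\alpha_{m+2}\leftrightarrow\alpha_{m+3}$. The plan is to verify \eqref{tr-dij} by case analysis on the location of $(i,j)$ relative to the block $\{m+2,m+3\}$, using the piecewise formulas \eqref{expr-dij} and \eqref{expr-dijp} for $\dm{m}_{ij}$. Before running through the cases I first pin down how each ingredient that appears — the coefficients $a_j$, $\am{m}_j$, $\am{m+1}_j$, $\bm{m}_{m+3}$, $\cm{m}_j$, and the relevant $\beta_{kl}$'s — transforms under $\sigma$.

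The coefficients $a_j$ with $j\le m+1$ depend only on $\alpha_1,\ldots,\alpha_{m+1}$ and are manifestly $\sigma$-invariant. The transformation of $\am{m}_j$ follows from the observation that, in the shifted system of charges $\alpha_1+\cdots+\alpha_{m+1},\,\alpha_{m+2},\,\alpha_{m+3},\ldots$ underlying the definition \eqref{def-apm}, the swap $\sigma$ is exactly the transposition of the second and third basic charges; Lemma \ref{lemma1} applied to this shifted system (parity $-1$, with $\sigma(1)=1$ in the shifted indexing) then gives
\[
\beta_{m+3,m+3}\,\am{m}_i \ \mapsto\ -\beta_{m+3,m+3}\,\am{m}_{\sigma(i)} \qquad (i>m+1),
\]
both sides vanishing for $i\le m+1$. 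The matrix elements $\beta_{i,j}$ with $i\ge m+3$ and $j\notin\{m+2,m+3\}$ are $\sigma$-invariant since $\langle\alpha_{m+2},\alpha_j\rangle+\langle\alpha_{m+3},\alpha_j\rangle$ is symmetric under the swap, while $\beta_{i,m+2}\leftrightarrow\beta_{i,m+3}$ for $i\ge m+4$. From the recursion \eqref{rec-appi} this yields invariance of $\cm{m}_i$, and from \eqref{expr-am3} invariance of the combination $\beta_{m+3,m+3}\bm{m}_{m+3}=-\sum_{j=2}^{m+1}\beta_{m+3,j}a_j$. The transformation of $\am{m+1}_i$ is then read off from its own recursion via the interchange $\beta_{i,m+2}\leftrightarrow\beta_{i,m+3}$.

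With these transformations in hand the verification of \eqref{tr-dij} proceeds case by case through the branches of \eqref{expr-dij}--\eqref{expr-dijp}. The range $i,j\le m+1$ is trivial. The range $i\le m+1$, $j\ge m+2$, with $\dm{m}_{ij}=a_i\am{m}_j$, reduces directly to the sign flip of $\am{m}_j$. The diagonal case $(i,j)=(m+2,m+3)$, where $\dm{m}_{ij}=\bm{m}_{m+3}$, is equivalent to the invariance of $\beta_{m+3,m+3}\bm{m}_{m+3}$ combined with the antisymmetry $\dm{m}_{ji}=-\dm{m}_{ij}$. The mixed cases $(m+2,j)$ and $(m+3,j)$ with $j>m+3$ follow from the second and third lines of \eqref{expr-dijp} after substituting the transformations of $\am{m+1}_j$, $\am{m}_{m+3}$, and $\bm{m}_{m+3}$.

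The main obstacle is the generic case $i,j>m+3$ given by the last line of \eqref{expr-dijp}, where $\sigma$ fixes both indices but $\beta_{m+3,m+3}\dm{m}_{ij}$ nonetheless must flip sign. The summand $\cm{m}_i\am{m}_j-\cm{m}_j\am{m}_i$ flips sign directly from invariance of $\cm{m}$ and the known transformation of $\am{m}$; the companion summand $\bm{m}_{m+3}\bigl(\am{m+1}_j\am{m}_i-\am{m+1}_i\am{m}_j\bigr)$ mixes the non-trivial transformation of $\am{m+1}$ with the sign flip of $\am{m}$. I expect the spurious pieces coming from the non-invariance of $\am{m+1}$ to combine, through the $\beta_{i,m+2}\leftrightarrow\beta_{i,m+3}$ rule and the defining recursions of $\am{m+1}$ and $\bm{m}$, into precisely the required antisymmetric form. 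This final rearrangement is the step requiring the closest bookkeeping, but no input beyond Lemma \ref{lemma1} and the recursions already used is needed.
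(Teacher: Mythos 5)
Your approach mirrors the paper's: verify \eqref{tr-dij} case by case using the piecewise formulas \eqref{expr-dij}--\eqref{expr-dijp}, after pinning down how the ingredients transform. You correctly obtain the transformation of $\am{m}_i$ from Lemma~\ref{lemma1} (with the substitution $a\to\am{m}$, $\beta_{33}\to\beta_{m+3,m+3}$ and using $\am{m}_{m+1}=0$), and you correctly note the $\sigma$-invariance of $\cm{m}_i$ and of $\beta_{m+3,m+3}\bm{m}_{m+3}$. This covers the easy branches. The gap is that you never actually pin down the transformation law for $\am{m+1}_i$, which is the one genuinely nontrivial ingredient. The paper invokes Lemma~2 here: under the swap, $\am{m+1}_i$ does not simply pick up a factor but undergoes the inhomogeneous transformation \eqref{cycl-ap} (with $a\to\am{m}$, $\am{1}\to\am{m+1}$, $\beta_{33}\to\beta_{m+3,m+3}$), i.e.\ schematically $\am{m+1}_i\mapsto A_\sigma\,\am{m+1}_{\sigma(i)}+A^{(1)}_\sigma\,\am{m}_{\sigma(i)}$ with $A_\sigma=\beta_{m+3,m+2}/\beta_{m+3,m+3}$ and $A^{(1)}_\sigma=1$. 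The $\am{m}$-admixture controlled by $A^{(1)}_\sigma$ is exactly what produces the cancellation in the summand $\bm{m}_{m+3}\bigl(\am{m+1}_j\am{m}_i-\am{m+1}_i\am{m}_j\bigr)$ for $i,j>m+3$: the cross terms $\propto\am{m}_i\am{m}_j$ cancel against each other, leaving the required overall factor. Without this explicit law you can only say ``I expect the spurious pieces to combine,'' which is where your argument stops short of a proof; the same issue affects the $(m+2,j)$ and $(m+3,j)$ cases, which you also assert without writing out the $\am{m+1}$ transformation.

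Your closing claim that ``no input beyond Lemma~\ref{lemma1} and the recursions already used is needed'' is therefore somewhat misleading: while Lemma~2 is itself a consequence of Lemma~\ref{lemma1} and the recursions, its content --- the precise form of the $\am{m}$-admixture --- is indispensable, and the paper cites it explicitly. To close the gap, state \eqref{cycl-ap} with the substitutions, record that $\am{m}_i\mapsto -r^{-1}\am{m}_i$, $\am{m+1}_i\mapsto r\,\am{m+1}_i+\am{m}_i$, $\bm{m}_{m+3}\mapsto r^{-1}\bm{m}_{m+3}$, $\cm{m}_i\mapsto\cm{m}_i$ for $i,j>m+3$ (where $r=\beta_{m+3,m+2}/\beta_{m+3,m+3}$), and observe that these combine to $\dm{m}_{ij}\mapsto -r^{-1}\dm{m}_{ij}$, while $\beta_{m+3,m+3}\mapsto r\,\beta_{m+3,m+3}$, giving exactly \eqref{tr-dij}.
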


\begin{proof}
It is straightforward to check this transformation using \eqref{expr-dij}, \eqref{expr-dijp}, \eqref{expr-am3},
$\am{m}_{m+3}=\frac{\beta_{m+3,m+2}}{\beta_{m+3,m+3}}$, and that under this exchange $\am{m}_i$ and $\am{m+1}_i$
transform as in \eqref{cycl-a} and \eqref{cycl-ap}, respectively, where one should replace $a_i$ by $\am{m}_i$, $\am{1}_i$ by $\am{m+1}_i$,
and $\beta_{33}$ by $\beta_{m+3,m+3}$.
\end{proof}

Now we are ready to prove Proposition \ref{theorem1}.

\begin{proof}[\bf Proof of case \ref{case1}:]
If $v$ is the root vertex, the components of $\Cv_{v'}$ and $\Cv_v$ embedded into $\IR^\ell$ are given by
\be
\Cv_{vi}=-g^{-1}g^\ell, \quad i<\ell,
\qquad
\Cv_{v\ell}=g^{-1}g',
\ee
\be
\Cv_{v'i}=g^{-1}\sum_{j=1}^\ell(a_i-a_j)g^j,
\label{compCvp}
\ee
where $g'=\sum_{j=1}^{\ell-1} g^j $ and we label the charges as in Fig. \ref{fig-AFtreen} with $v_1=v'$ and $v_{\ell-1}=v$.
One finds for these vectors
\be
\Cv_v^2=g^{-1}g' g^\ell ,
\qquad
\Cv_{v'}^2=\frac{1}{2g}\sum_{i,j=1}^\ell (a_i-a_j)^2\,g^i g^j,
\ee
\be
\begin{split}
(\Cv_v,\Cv_{v'})=&\, g^{-1}\[\sum_{i=1}^{\ell-1}\sum_{j=1}^{\ell}(a_j-a_i)g^i g^j-g'\sum_{j=1}^{\ell-1}(a_j-a_\ell) g^j\]
\\
=&\, g^{-1} g^\ell\sum_{j=1}^{\ell}(a_\ell-a_j)g^j.
\end{split}
\ee
This implies
\be
\begin{split}
\Cv_{v'\perp v,i}=&\, (gg')^{-1}\sum_{j=1}^\ell(g' a_i+g^\ell a_\ell-g a_j)g^j
=g'^{-1}\sum_{j=1}^{\ell-1}(a_i-a_j) g^j, \quad i<\ell,
\\
\Cv_{v'\perp v,\ell}=&\, 0.
\end{split}
\ee
This is precisely the vector corresponding to the tree obtained by removing the root.
This proves the first statement of the Proposition.
\end{proof}

\begin{proof}[\bf Proof of case \ref{case2}:]
Labeling the charges as in Fig. \ref{fig-AFtreebig}, the components of the vectors
$\Cv^{(1)}_{v'}$ and $\Cv^{(1)}_{v}$, corresponding to the first tree and embedded into $\IR^\ell$,
can be found in \eqref{compCvp} and \eqref{vecm-def}, respectively.
For these vectors, one obtains
\be
(\Cv^{(1)}_{v})^2=\frac{1}{2g}\sum_{i,j=1}^\ell \(\am{m}_i-\am{m}_j\)^2g^i g^j,
\qquad
(\Cv^{(1)}_{v'})^2=\frac{1}{2g}\sum_{i,j=1}^\ell (a_i-a_j)^2\,g^ig^j,
\ee
\be
\begin{split}
(\Cv^{(1)}_v,\Cv^{(1)}_{v'})=&\, \frac{1}{2g}\sum_{i,j=1}^\ell (a_i-a_j)\(\am{m}_i-\am{m}_j\)g^ig^j.
\end{split}
\ee
This implies
\be
\begin{split}
\Cv^{(1)}_{v'\perp v,i}=&\,
\frac{\sum_{j,l=1}^\ell (a_i-a_j)\(\am{m}_i-\am{m}_l\)\(\am{m}_j-\am{m}_l\)g^jg^l}
{\hf\sum_{j,l=1}^\ell \(\am{m}_j-\am{m}_l\)^2g^jg^l}
\\
=&\,
g^{-1}(\Cv^{(1)}_{v})^{-2}\sum_{j,l=1}^\ell\(\dm{m}_{ij}+\hf\,\dm{m}_{jl}\)\(\am{m}_j-\am{m}_l\)g^jg^l ,
\end{split}
\label{proj-m}
\ee
where $\dm{m}_{ij}$ is defined in \eqref{def-dij}.

First, to show the equality of $\Cv_{v'\perp v}$ for the first two trees, we need to show that the vector \eqref{proj-m}
is invariant under the exchange of $\alpha_{m+2}$ and $\alpha_{m+3}$. Using that under this exchange $\am{m}_i$ transform
as in \eqref{cycl-a} where one should replace $a_i$ by $\am{m}_i$ and $\beta_{33}$ by $\beta_{m+3,m+3}$, whereas
the transformation of $\dm{m}_{ij}$ is given by Lemma 3 (see \eqref{tr-dij}), it is immediate to see that
\be
\Cv^{(1)}_{v'\perp v,i} \mapsto \Cv^{(2)}_{v'\perp v,i}=\Cv^{(1)}_{v'\perp v,\sigma(i)}.
\ee

For the third tree, we need to consider the embedding $\iota_{m+2,1}\Cvi{\ell-1}_{v'}$.
The components of such vector are given by
\be
\Cv^{(3)}_{v'i}=g^{-1}\sum_{j=1}^\ell(\ta_i-\ta_j) g^j,
\label{vecm3}
\ee
where
\be
\begin{split}
\ta_i =&\, a_i,
\hspace{4cm}
i< m+2,
\\
\ta_{m+2} =&\, a_{m+2}|_{\alpha_{m+2}\to \alpha_{m+2}+\alpha_{m+3}},
\\
\ta_i =&\, a_{i-1}|_{\alpha_{m+2}\to \alpha_{m+2}+\alpha_{m+3}\atop \alpha_j\to \alpha_{j+1},\ \ j>m+2},
\qquad
i> m+2,
\end{split}
\label{def-ta}
\ee
On the other hand, the vector $\Cv^{(3)}_{v}$, due to Proposition \ref{prop1}, coincides up to a factor,
which anyway cancels in the orthogonal projection, with $\Cv^{(1)}_{v}$.
Thus, now
\be
\begin{split}
\Cv^{(3)}_{v'\perp v,i}
=&\,
g^{-1}(\Cv^{(1)}_{v})^{-2}\sum_{j,l=1}^\ell\(\tdm{m}_{ij}+\hf\,\tdm{m}_{jl}\)\(\am{m}_j-\am{m}_l\)g^jg^l ,
\end{split}
\label{proj-m3}
\ee
where
\be
\tdm{m}_{ij}=\ta_i \am{m}_j-\ta_j \am{m}_i .
\label{def-tdij}
\ee

Let us find the coefficients $\ta_i$ explicitly.
First, we have
\be
\ta_{m+2}=\ta_{m+3}=-\sum_{j=2}^{m+1} \frac{\beta_{m+3,j}}{\beta_{m+3,m+2}+\beta_{m+3,m+3}}\,a_{j}.
\ee
Note, in particular, the following property
\be
\ta_{m+2}\(1+\am{m}_{m+3}\)=\bm{m}_{m+3},
\ee
where the quantity on the r.h.s. was defined in \eqref{expr-am3}.
Next, we have (cf. \eqref{aiam})
\be
\ta_i=-\ta_{m+2} \tam{m}_i +\tbm{m}_i,
\qquad
i>m+3,
\label{taiam}
\ee
where the tilde means the same substitution as in \eqref{def-ta}.
In particular, one has
\bea
\tbm{m}_{i}&=& \bm{m}_{i-1}|_{\alpha_{m+2}\to \alpha_{m+2}+\alpha_{m+3}\atop \alpha_j\to \alpha_{j+1},\ \ j>m+2}
= \[-\sum_{j=2}^{m+1} \frac{\beta_{i-1,j}}{\beta_{i-1,i-1}}\,a_{j}-\sum_{j=m+3}^{i-2} \frac{\beta_{i-1,j}}{\beta_{i-1,i-1}}\,\bm{m}_{j}
\]_{\alpha_{m+2}\to \alpha_{m+2}+\alpha_{m+3}\atop \alpha_j\to \alpha_{j+1},\ \ j>m+2}
\nn\\
&=& \[-\sum_{j=2}^{m+1} \frac{\beta_{ij}}{\beta_{ii}}\,a_{j}-\sum_{j=m+4}^{i-1} \frac{\beta_{ij}}{\beta_{ii}}\,\tbm{m}_{j}
\].
\label{rec-tapi}
\eea
The resulting recursive relation coincides with \eqref{rec-appi}, which allows to conclude that
\be
\tbm{m}_{i}=\cm{m}_{i}.
\label{eq-tbc}
\ee
Furthermore, using the linearity of $\am{m}_i$ in $\beta_{j,m+2}$ (which follows from the linearity of $a_i$ in $\beta_{j2}$),
it is easy to see that
\be
\tam{m}_i=\am{m}_{i-1}|_{\alpha_{m+2}\to\alpha_{m+2}+\alpha_{m+3}\atop \alpha_j\to \alpha_{j+1},\ \ j>m+2}
=\am{m+1}_i+\am{m+1}_i|_{\alpha_{m+3}\leftrightarrow \alpha_{m+2}}.
\ee
The last term can be evaluated using Lemma 2, which gives
\be
\tam{m}_i=\(1+\frac{\beta_{m+3,m+2}}{\beta_{m+3,m+3}}\)\am{m+1}_i+\am{m}_i=\(1+\am{m}_{m+3}\)\am{m+1}_i+\am{m}_i.
\label{tami}
\ee
As a result, one obtains
\be
\ta_i=\cm{m}_i-\bm{m}_{m+3}\am{m+1}_i-\ta_{m+2}\am{m}_i.
\label{tai-res}
\ee
Substituting this into the definition \eqref{def-tdij}, it is straightforward to verify that the result coincides with
\eqref{expr-dij} and \eqref{expr-dijp}, i.e.
\be
\tdm{m}_{ij}=\dm{m}_{ij}.
\label{eqdij}
\ee
Then comparing \eqref{proj-m3} and \eqref{proj-m}, one concludes that
the vectors describing orthogonal projections for the two trees indeed coincide,
\be
\Cv^{(3)}_{v'\perp v,i}=\Cv^{(1)}_{v'\perp v,i}.
\ee
This completes the proof of the second statement of the Proposition.
\end{proof}

\begin{proof}[\bf Proof of case \ref{case3}:]
In this case, if $v$ is {\it not} a child of $v'$, the statement is trivial. Indeed, in this case
$\Cv_{v'}$ depends on $\gamma_{\Lv{v}}$, $\gamma_{\Rv{v}}$ and $\gamma_{\Rv{p(v)}}$ only through the sum
of these charges so that it is automatically cyclically symmetric. Furthermore, by Proposition \ref{prop1},
$\Cv_v$ is also cyclically symmetric up to an overall factor,
but this factor cancels in the orthogonal projection $\Cv_{v'\perp v}$.
This allows to conclude that $\Cv_{v'\perp v}$ is cyclically symmetric,
and since $\gamma_{\Lv{v'}\Rv{v'}}$ also depends only on the sum of the permuted charges,
the statement follows.

Thus, it remains to analyze the case when $v$ is a child of $v'$.
Let us label the charges as in Fig. \ref{fig-AFtreen} with $v_1=v$ and $v_2=v'$
so that the components of the vectors $\Cv_{v}$ and $\Cv_{v'}$ are given by
\be
\Cv_{vi}=g^{-1}\sum_{j=1}^\ell(a_i-a_j) g^j ,
\qquad
\Cv_{v'i}=g^{-1}\sum_{j=1}^\ell\(\am{1}_i-\am{1}_j\) g^j,
\ee
where the coefficients $\am{1}_i$ are defined in \eqref{def-apm}.
For these vectors, one obtains
\be
\Cv_{v}^2=\frac{1}{2g}\sum_{i,j=1}^\ell (a_i-a_j)^2\,g^ig^j,
\qquad
\Cv_{v'}^2=\frac{1}{2g}\sum_{i,j=1}^\ell \(\am{1}_i-\am{1}_j\)^2\,g^ig^j,
\ee
\be
\begin{split}
(\Cv_v,\Cv_{v'})=&\, \frac{1}{2g}\sum_{i,j=1}^\ell \(\am{1}_i-\am{1}_j\)(a_i-a_j)\,g^ig^j.
\end{split}
\ee
This implies
\be
\Cv_{v'\perp v,i}=\frac{\sum_{j,l=1}^\ell \(\am{1}_i-\am{1}_j\)(a_i-a_l)(a_j-a_l)\,g^jg^l}
{\hf\sum_{j,l=1}^\ell (a_j-a_l)^2\,g^jg^l}\,.
\ee
Applying the transformations \eqref{cycl-a} and \eqref{cycl-ap} for a cyclic permutation ($\eps_\sigma=1$), one finds
\bea
\beta_{33}^{-1}\Cv_{v'\perp v,i}&\mapsto&  \frac{1}{\hf\sum_{j,l=1}^\ell (a_j-a_l)^2\,g^jg^l}
\sum_{j,l=1}^\ell \[\beta_{33}^{-1}\( \am{1}_{\sigma(i)}-\am{1}_{\sigma(j)}\)
+A^{(1)}_\sigma\beta_{3\sigma(3)}^{-1}\(a_{\sigma(i)}-a_{\sigma(j)}\)\]
\nn\\
&& \qquad \times
\(a_{\sigma(i)}-a_{\sigma(l)}\)\(a_{\sigma(j)}-a_{\sigma(l)}\)\,g^{\sigma(j)}g^{\sigma(l)}
\nn\\
&=& \beta_{33}^{-1}\Cv_{v'\perp v,{\sigma(i)}}
+A^{(1)}_\sigma\beta_{3\sigma(3)}^{-1}\,\frac{\sum_{j,l=1}^\ell \(a_{\sigma(i)}-a_j\)\(a_{\sigma(i)}-a_l\)(a_j-a_l)\,g^jg^l}
{\hf\sum_{j,l=1}^\ell (a_j-a_l)^2\,g^jg^l}
\nn\\
&=& \beta_{33}^{-1}\Cv_{v'\perp v,{\sigma(i)}},
\eea
where the last term vanishes due to symmetrization in $j,l$.
This result proves the third statement of the Proposition.
\end{proof}

\begin{proof}[\bf Proof of case \ref{case4}:]
The proof in this case is analogous to the previous one for the case when $v$ was not a child of $v'$.
Indeed, in the present case again $\Cv_{v'}$ depends on $\gamma_{\Lv{v}}$,
$\gamma_{\Rv{v}}$ and $\gamma_{\Rv{p(v)}}$ only through their sum,
whereas  $\Cv_v$ is cyclically symmetric by Proposition
\ref{prop1} up to a factor which is canceled in the projection.
Thus, $\Cv_{v'\perp v}$ is indeed cyclically symmetric.
\end{proof}

\section{Explicit expressions for partial indices}
\label{ap-indFF}

In this appendix we provide explicit expressions for the partial tree and Coulomb indices up to $n=4$.
The results are conveniently formulated in terms of
\be
\cs_k=\sum_{i=1}^k c_i,
\qquad
\Gamma_{kl}=\sum_{i=1}^k\sum_{j=1}^l \gamma_{ij},
\qquad
\Gamma_{kl,m}=\sum_{i=m+1}^k\sum_{j=m+1}^l \gamma_{ij}.
\label{notSG}
\ee

\subsection{Tree index}
\label{ap-pind}

Recombining the sign functions in the definition \eqref{defFpl} and writing the result
in the same form as in (the second line of) \eqref{F-ansatz}, one obtains
\be
\Ftr{1}=1,
\label{resF1}
\ee
\bea
\Ftr{2}&=& \hf\,\Bigl[\sgn(c_1)+\sgn(\gamma_{12})\Bigr],
\label{resF2}
\eea
\be
\begin{split}
\Ftr{3}=&\, \frac{1}{4}\,\Bigl[\bigl(\sgn(\cs_1) +\sgn(\gamma_{12})\bigr)\bigl(\sgn(\cs_2)+ \sgn (\gamma_{23}) \bigr)
\\
&\,
- \bigl(\sgn(\Gamma_{31}) +\sgn(\gamma_{12})\bigr)\bigl(\sgn(\Gamma_{32})+ \sgn (\gamma_{23}) \bigr)\Bigr],
\end{split}
\label{F3res}
\ee
\be
\begin{split}
\Ftr{4} =&\, \frac{1}{8}\,\Bigl[\bigl(\sgn(\cs_1) +\sgn(\gamma_{12})\bigr)\bigl(\sgn(\cs_2)
+ \sgn \gamma_{23} \bigr)\bigl(\sgn(\cs_3)+ \sgn \gamma_{34} \bigr)
\\
&\,
- \bigl(\sgn(\Gamma_{41}) +\sgn(\gamma_{12})\bigr)\bigl(\sgn(\Gamma_{42})+ \sgn \gamma_{23} \bigr)
\bigl(\sgn(\Gamma_{43})+ \sgn \gamma_{34} \bigr)
\\
&\,
-\bigl(\sgn(\cs_1)-\sgn(\Gamma_{41})\bigr)
\bigl(\sgn(\Gamma_{42,1})+\sgn(\gamma_{23})\bigr)\bigl(\sgn(\Gamma_{43,1})+\sgn(\gamma_{34})\bigr)
\\
&\,
-\bigl(\sgn(\cs_3)-\sgn(\Gamma_{43})\bigr)\bigl(\sgn(\Gamma_{31})+\sgn(\gamma_{12})\bigr)
\bigl(\sgn(\Gamma_{32})+\sgn(\gamma_{23}))\bigr)\Bigr].
\end{split}
\label{F4res}
\ee
It is easy to see that the structure of these expressions is exactly the same as predicted by \eqref{F-ansatz}.

\subsection{Coulomb index}
\label{ap-ind4}

The partial Coulomb index $\FCn{2}$ coincides with \eqref{resF2}, whereas
the result for $n=3$ follows from \eqref{FCrecur} and is given by \cite[(2.57)]{Manschot:2013sya}
\be
\label{FC3}
\begin{split}
\FCn{3} = &\,
\frac14 \,\Bigl[ \bigl(\sgn(\cs_1) +\sgn(\gamma_{12})\bigr)\bigl(\sgn(\cs_2)+ \sgn (\gamma_{23}) \bigr)
\\
 &\, + \bigl( \sign(\gamma_{12}) +\sign (\gamma_{23})\bigr)
\bigl(\sign(\gamma_{12} +\gamma_{23}+\gamma_{13})-\sign(\gamma_{12}+\gamma_{23}) \bigr)\Bigr] .
\end{split}
\ee

The expression for $\FCn{4}$ is already much more complicated. It is found to be \cite[(2.61)]{Manschot:2013sya}
\bea
\FCn{4}&=&
\frac18\, \prod_{k=1}^3 \bigl(\sign(\cs_k)+ \sign (\gamma_{k,k+1})\bigr)
\label{FC4}\\
&&
+ \frac12\, G_4(\gamma_1^{(1)},\dots, \gamma_4^{(1)})
\( \sign\(\sum_{i<j} \gamma_{ij}\) - \sign(\gamma_{12}+\gamma_{23}+\gamma_{34})\)
\nn\\
&&
+\frac14\,
G_3(\gamma_2,\gamma_3,\gamma_4)
\bigl(\sign(\gamma_{23}+\gamma_{34}+\gamma_{24})-\sign(\gamma_{23}+\gamma_{34})\bigr)
\bigl(\sign \(\gamma_{12}+\lambda_2 \gamma_{1,3+4} \) + \sign (\cs_1) \bigr)
\nn\\
&&
+\frac14\,
G_3(\gamma_1,\gamma_2,\gamma_3)
\bigl(\sign(\gamma_{12}+\gamma_{23}+\gamma_{13})-\sign(\gamma_{12}+\gamma_{23})\bigr)
\bigl(\sign \(\gamma_{34}+\lambda_3 \gamma_{1+2,4})\) + \sign(\cs_3)\bigr)
\nn
\eea
and is determined by the function \cite[(2.59)]{Manschot:2013sya}
\be
\begin{split}
G_4 = &\,
 \frac14\,\Bigl[\bigl(\sign (\gamma_{12}) + \sign (\gamma_{23})\bigr)
\bigl(\sign(\gamma_{34})-\sign(\gamma_{1+2+3,4})\bigr)
\\
&\,+\bigl(\sign (\gamma_{23}) + \sign (\gamma_{34})\bigr)
\bigl( \sign(\gamma_{23}+\gamma_{34}+\gamma_{24}) -\sign(\gamma_{23})\bigr)\Bigr]
\end{split}
\ee
evaluated at $\gamma_{ij}^{(1)}= \gamma_{ij}$ if $|i-j|<2$, and
$\gamma_{ij}^{(1)}= \lambda_1\gamma_{ij}$ if $|i-j|\geq 2$, where the parameters $\lambda_{1,2,3}$
are given by\footnote{The necessary condition that $\lambda_i\in [0,1]$ is ensured by the fact
that a rational number $-a/b$ lies in $]0,1[$ iff $a(a+b)<0$ (in which case $ab<0$ and $b(a+b)>0$) and
by non-vanishing of the sign factors in \eqref{FC4} multiplying the sign dependent on given $\lambda_i$.}
\be
\lambda_1=-\frac{\gamma_{12}+\gamma_{23}+\gamma_{34}}{\gamma_{14}+\gamma_{24}+\gamma_{13}},
\qquad
\lambda_2=-\frac{\gamma_{23}+\gamma_{34}}{\gamma_{24}},
\qquad
\lambda_3=-\frac{\gamma_{12}+\gamma_{23}}{\gamma_{13}}.
\ee
Substituting $G_4(\{\gamma_i^{(1)}\})$ into \eqref{FC4} and applying repeatedly the sign identity \eqref{signprop-ap},
one can show that
\bea
\FCn{4}&=&\frac18\,\Biggl[ \prod_{k=1}^3 \bigl(\sign(\cs_k)+ \sign (\gamma_{k,k+1})\bigr)
\nn\\
&&
+\sign (\cs_1)\Bigl(\bigl(\sign(\gamma_{23})+\sign(\gamma_{34})\bigr)\,\sign(\gamma_{23}+\gamma_{24}+\gamma_{34})
-\sign(\gamma_{23})\sign(\gamma_{34})-1\Bigr)
\nn\\
&&
+\sign (\cs_3)\Bigl(\bigl(\sign (\gamma_{12}) + \sign (\gamma_{23})\bigr)\,\sign(\gamma_{12}+\gamma_{23}+\gamma_{13})
-\sign(\gamma_{12})\sign(\gamma_{23})-1\Bigr)
\nn\\
&&
+\sign\(\sum_{i<j}\gamma_{ij}\)\Bigl(\bigl(\sign (\gamma_{12}) + \sign (\gamma_{23})\bigr)\,\sign(\gamma_{12}+\gamma_{23}+\gamma_{13})
\Bigr.
\label{FC4res}\\
&&\Bigl.\qquad
+\bigl(\sign(\gamma_{23})+\sign(\gamma_{34})\bigr)\,\sign(\gamma_{23}+\gamma_{24}+\gamma_{34})
+\sign(\gamma_{12})\sign(\gamma_{34})-1\Bigr)
\nn\\
&&
-\sign\(\gamma_{12}+\gamma_{23}+\gamma_{34}\)\Bigl(\sign(\gamma_{12})\sign(\gamma_{23})
+\sign(\gamma_{23})\sign(\gamma_{34})
+\sign(\gamma_{12})\sign(\gamma_{34})+1\Bigr)\Biggr].
\nn
\eea

%\bibliography{combined}
%\bibliographystyle{utphys}

\providecommand{\href}[2]{#2}\begingroup\raggedright\endgroup

\end{document}